\@citea\NAT@hyper@{%
     \NAT@nmfmt{\NAT@nm}%
     \hyper@natlinkbreak{\NAT@aysep\NAT@spacechar}{\@citeb\@extra@b@citeb}%
     \NAT@date}}
\@citea\NAT@nmfmt{\NAT@nm}%
\NAT@spacechar\NAT@hyper@{\NAT@date}}{}{}
\@citea\NAT@hyper@{%
     \NAT@nmfmt{\NAT@nm}%
     \hyper@natlinkbreak{\NAT@spacechar\NAT@@open\if*#1*\else#1\NAT@spacechar\fi}%
       {\@citeb\@extra@b@citeb}%
     \NAT@date}}
\@citea\NAT@nmfmt{\NAT@nm}%
\fi\NAT@hyper@{\NAT@date}}
\newtheorem{theorem}{Theorem}
\newtheorem{lemma}{Lemma}
\newtheorem{proposition}{Proposition}
\newtheorem{corollary}{Corollary}
\theoremstyle{definition}
\newtheorem{definition}{Definition}
\newtheorem{condition}{Condition}
\newtheorem{assumption}{Assumption}
\newtheoremstyle{remboldstyle}
  {}{}{}{}{\itshape}{.}{.5em}{{\thmname{#1 }}{\thmnumber{#2}}{\thmnote{ (#3)}}}
\theoremstyle{remboldstyle}
\newtheorem{remark}{Remark}
\newtheorem{example}{Example}
\newcommand{\argmax}{\operatornamewithlimits{\arg\max}}
\newcommand{\argmin}{\operatornamewithlimits{\arg\min}}
\newcommand{\pr}{\operatorname{\mathsf{P}}}
\renewcommand{\d}{\operatorname{d}}
\newcommand{\E}{\operatorname{\mathsf{E}}}
\newcommand{\tree}{\mathcal{T}} 
\newcommand{\idp}{\perp} 
\newcommand{\cdt}{\mid} 
\newcommand{\dto}{\xrightarrow{d}}
\newcommand{\pth}[2]{[{#1} \rightsquigarrow {#2}]}
\newcommand{\I}{\operatorname{\mathbbm{1}}}
\newcommand{\sgn}{\operatorname{sgn}}
\newcommand{\eps}{\varepsilon}
\definecolor{burntorange}{rgb}{0.8, 0.33, 0.0}
\definecolor{britishracinggreen}{rgb}{0.0, 0.26, 0.15}
\definecolor{darkcerulean}{rgb}{0.03, 0.27, 0.49}
\definecolor{blue}{rgb}{0.19,0.55,0.91}
\definecolor{magenta(dye)}{rgb}{0.79, 0.08, 0.48}
\definecolor{emerald}{rgb}{0.31, 0.78, 0.47}
\definecolor{brightlavender}{rgb}{0.75, 0.58, 0.89}
\newcommand{\mdf}[1]{}
\newcommand{\Address}{{
  \footnotesize
  Shuang Hu, \par
  \noindent\textsc{School of Science/Key Laboratory of Intelligent Analysis and Decision on Complex Systems, Chongqing University of Posts and Telecommunications, 400065 Chongqing, China.}\par\nopagebreak
  \noindent\textsc{LIDAM/ISBA, UCLouvain, B1348 Louvain-la-Neuve, Belgium.}\par\nopagebreak
  \noindent\textit{E-mail address:} \texttt{hushuang@cqupt.edu.cn}
}}
\begin{document}
\title{Modelling multivariate extreme value distributions via Markov trees\thanks{Running title: \emph{Modelling multivariate extremes}}}

\author[1,3]{SHUANG HU}
\author[2]{ZUOXIANG PENG}
\author[3]{JOHAN SEGERS}

\affil[1]{School of Science/Key Laboratory of Intelligent Analysis and Decision on Complex Systems, Chongqing University of Posts and Telecommunications, Chongqing, China}
\affil[2]{School of Mathematics and Statistics, Southwest University, Chongqing, China}
\affil[3]{LIDAM/ISBA, UCLouvain, Belgium}

\date{}
\maketitle

\begin{abstract}
Multivariate extreme value distributions are a common choice for modelling multivariate extremes.
In high dimensions, however, the construction of flexible and parsimonious models is challenging. We propose 
to combine bivariate max-stable distributions into a Markov random field with respect to a tree.
Although in general not  
max-stable itself, this Markov tree 
is attracted by a multivariate max-stable distribution.
The latter serves as a tree-based approximation to an unknown max-stable distribution with the given bivariate distributions as margins. Given data, we learn an appropriate tree structure by Prim's algorithm with estimated pairwise upper tail dependence coefficients 
as edge weights. The distributions of pairs of connected variables can be fitted in various ways. The resulting tree-structured max-stable distribution allows for inference on rare event probabilities, as illustrated on river discharge data from the upper Danube basin.
\smallskip

\noindent {\bf Keywords.}~~Kendall's tau; Markov tree, multivariate extreme value distribution, Prim's algorithm, probabilistic graphical model, rare event, tail dependence
\end{abstract}

\section{Introduction}
\label{sec:motiv}
The modelling of tail dependence and extreme events has become a statistical problem of great interest. Through asymptotic theory, extreme value theory postulates models for sample extremes (max-stable distributions) and excesses over high thresholds (generalized Pareto distributions), both univariate and multivariate. Fields of application include meteorology and climatology (heat waves, wind storms, floods), finance (stock market crashes), reinsurance (large claims), and machine learning (anomaly detection). Text-book treatments can be found in \citet{C01}, \citet{B04} and \citet{HF06}, while \citet{DH15} provides a more recent survey of the field.

Multivariate max-stable distributions arise as the possible limiting distributions of normalized component-wise maxima of independent copies of a random vector. Their construction, however, is not straightforward because of their complicated dependence structure. Non-parametric inference on multivariate max-stable distributions is usually performed in terms of dependence functions such as the angular or spectral measure \citep{EPD01,ES09}, the Pickands dependence function \citep{D91} and the stable tail dependence function \citep{H92,D98}. Early reviews of parametric models can be found in \citet{C91}, \citet[Chapter~3]{K00}, and \citet[Chapter~9]{B04}. Well-known parametric models include the logistic and negative logistic families \citep{G60, T90}, the H\"{u}sler--Reiss model \citep{H89}, mixtures of Dirichlet distributions \citep{BD07}, the pairwise beta distribution \citep{CDN10}, max-linear factor models \citep{EKS12}, the extremal \emph{t} process \citep{DPR12}, and models based on Bernstein polynomials \citep{HdCC17}. Some generic construction principles yielding known and new examples are proposed in \citet{BS11} and \citet{KRSW19}. Although these methods perform quite well and have reasonable efficiency in low and moderate dimensions, their application in higher dimensions remains challenging stemming from the potentially complex dependence structure of multivariate max-stable distributions, computational challenges, and the need to balance flexibility with parsimony. 

Graphical models represent relatively simple probabilistic structures and are a popular way to model multivariate distributions in a sparse way. Lately, the intersection of extreme value theory and graphical models has attracted quite some attention. \citet{GK18} introduced  max-linear models on directed acyclic graphs, whose dependence structures were further investigated in \citet{GKO18}. \citet{S20} studied the limit theory of extremes of regularly varying Markov trees, with applications to structured H\"{u}sler--Reiss distributions in \citet{A21} and with extensions to Markov random fields on block graphs in \citet{AS21}. In parallel, \citet{E20} proposed a way to factor the density of multivariate Pareto distribution on graphs through a variation on the Hammersley--Clifford theorem related to a kind of redefined conditional independence, followed by \citet{EV20}, who studied a data-driven methodology of identifying the underlying graphical structures for such a factorization. An illuminating review of sparse models for multivariate extremes is given in \citet{EI21}.

The goal of this paper is to approximate a potentially unknown multivariate max-stable distribution by another multivariate max-stable distribution constructed from a Markov random field with respect to a tree, or Markov tree in short. As special cases of graphical models, Markov trees have simple and tractable structures. The conditional independence relations they satisfy are equivalent to a certain factorization of their joint distribution and eventually lead to an effective dimension reduction. The application of Markov trees in the modelling of multivariate probability distribution goes back to the seminal paper by \citet{C68}, who approximate a $d$-dimensional discrete distribution by a product of pairwise distributions linked up by a tree. It was demonstrated that in many cases, this method provides a reasonable approximation to the full distribution, capturing at least the marginal distributions and some important features of the joint dependence structure. The idea remains attractive in modern practice. In \citet{H20}, for instance, such Markov tree approximations are used in the context of anomaly detection in machine learning.

A Markov tree involves a set of conditional independence relations between random variables. However, \citet{P16} have shown that for a max-stable random vector with positive continuous density, conditional independence of two random variables given the other ones implies unconditional independence of the two variables. Except for trivial cases, a random vector with a multivariate max-stable distribution and a continuous density can therefore not satisfy any conditional independence relations. Hence, the method of \citet{C68} cannot be applied directly to multivariate max-stable distributions. Still, \citet{EV20} and \citet{S20} demonstrated that conditional independence relations can be incorporated into distributions of multivariate extremes through the lens of excesses over high thresholds. Based on these findings, we propose a method based on graphical models that consists of merging bivariate max-stable distributions along a tree. 

Given a multivariate max-stable distribution and a tree on the set of variables, we construct a Markov tree by letting each connected pair of variables have the same distribution as the corresponding bivariate margin of the max-stable distribution. The Markov tree constructed in this way is shown to belong to the domain of attraction of another max-stable distribution. This second multivariate max-stable distribution serves as a tree-based approximation to the first one, fully determined by the tree structure and certain bivariate margins. Explicit expressions of such a \emph{tree-structured multivariate max-stable distribution} (see Definition~\ref{def:maxstabletree} below) are derived from those of joint tails of Markov trees. 

In practice, the true max-stable distribution is unknown and needs to be estimated from the data. To do so, we implement a three-step procedure: \mdf{(\romannumeral1)} Select the maximum dependence tree which has the maximal sum of edge weights, taken as the pairwise estimates of the upper tail dependence coefficient $\lambda$; \mdf{(\romannumeral2)} Estimate the dependence structures of all the pairs of adjacent variables on the maximum dependence tree; \mdf{(\romannumeral3)} Measure the goodness-of-fit of the constructed model by comparing the distributions of nonadjacent pairs of variables in the original max-stable law on the one hand and in its tree-based approximation on the other hand. It is worth noting that, in the second step, different max-stable parametric families can be assumed for different adjacent pairs and different estimation methods can be applied for the parameters. All in all, the three steps combined provide a flexible and computationally feasible way to fit a flexible class of tree-based multivariate max-stable distributions. 

Some theoretical results corresponding to the modelling procedure are also obtained. If the original multivariate max-stable distribution has a tree-based dependence structure, then with probability tending to one, the set of maximum dependence trees weighted by the upper tail dependence coefficients contains the true tree structure. Moreover, the joint estimator of the parameters of all the edge-wise bivariate dependence structures is still asymptotically normal, provided all the edge-wise estimators have a certain asymptotic expansion. 

We illustrate the procedure to the average daily discharges of rivers in the upper Danube basin, estimating the probability of extreme flooding in at least one of several gauging stations. Specifically, for $j=1,\ldots,d$, let $\xi_{j}$ be the discharge at location $j$ and let $u_{j}$ denote a given water level at station $j$. The probability of interest is 
\begin{align}
\label{eq:rareevent}
    \pr\left(\xi_{1}>u_{1}~\text{or}~\ldots~\text{or}~\xi_{d}>u_{d}\right).
\end{align}
Under the assumption that $\bm{\xi}=(\xi_{1},\ldots,\xi_{d})$ belongs to the domain of attraction of a multivariate max-stable distribution, the above probability is estimated based on a tree-structured approximation of the attractor. 

The paper is organized as follows. In Section~\ref{sec:preliminaries}, we give some preliminaries related to multivariate max-stable distributions and Markov trees. The model construction and its properties are discussed in Section~\ref{sec:model}. The statistical procedure is detailed in Section~\ref{sec:estim}. A simulation study is conducted in Section~\ref{sec:Sim} and a data analysis of the discharges of rivers in the upper Danube basin is given in Section~\ref{sec:app}. \mdf{All proofs, discussions and examples of stable tail dependence function estimators are postponed to Appendix~\ref{sec:SM}.  Some examples and additional illustrations related to the simulations and the case study are given in Sections S1--S3 of the Supplementary Material.}

\section{Preliminaries}
\label{sec:preliminaries}

\subsection{Multivariate extreme value theory}
\label{subsec:MEVT}

Put $V=\{1,\ldots,d\}$ for some integer $d\ge 2$. Let $\bm{\xi}^{i}=(\xi_{i,v},v\in V)$, $i=1,\ldots,n$, be an independent random sample from $\bm{\xi}=(\xi_{v},v\in V)$ with univariate marginal distribution functions $F_{v}$ for $v\in V$. We say $\bm{\xi}$ is in the (maximum) domain of attraction of $H$, notation $\bm{\xi}\in D(H)$, if there exist sequences $\bm{a}_{n}=(a_{n,v},v\in V)\in (0,\infty)^{d}$ and $\bm{b}_{n}=(b_{n,v},v\in V)\in \mathbb{R}^{d}$ such that
\begin{equation}
\label{eq:G_M}
    \lim_{n \to \infty} \pr\left(\max _{i=1, \ldots, n} \xi_{i, v} \leq a_{n,v} x_{v}+b_{n,v}, \, v \in V\right)=H(\bm{x}), \qquad \bm{x}=(x_{v},v\in V)\in\mathbb{R}^{d},
\end{equation}
for a limiting distribution $H$ with non-degenerate margins. The distribution $H$ is called a multivariate extreme value or max-stable distribution. An max-stable distribution is called simple if its margins are unit-Fr\'{e}chet, that is, $H_{v}(x)=\exp(-1/x)$ for $x\in (0,\infty)$ and $v\in V$. 
By monotone marginal transformations, we can convert any max-stable distribution into a simple one without changing its copula. This allows us to deal with the marginal distributions and the dependence structure separately. To focus on the latter, we often consider simple multivariate max-stable distributions. 
For more background on multivariate max-stable distributions, see for instance \citet[Chapter~5]{R87} and \citet[Chapter~8]{B04}.

A multivariate max-stable distribution function $H$ with general margins can be represented by
\begin{equation}
    \label{eq:Maxstable_stdf}
    H(\bm{x})=\exp\left\{-\ell\left(-\log H_{v}(x_{v}), \, v\in V\right)\right\},
\end{equation}
for $\bm{x} \in \mathbb{R}^d$ such that $H_{v}(x_{v}) > 0$ for all $v \in V$, and where $\ell : [0, \infty)^d \to [0, \infty)$ is the so-called stable tail dependence function of $H$ and $\bm{\xi}$, from whose distribution it can be recovered via
\begin{equation}
\label{eq:stdf}
\ell(\bm{y})=\lim_{t \to \infty} t\left\{1-\pr\left(F_{v}(\xi_{v}) \leq 1-t^{-1}y_{v}, \, v \in V \right)\right\}, \qquad \bm{y} \in [0, \infty)^d,
\end{equation}
see \citet{H92} or \citet{D98}. The dependence structure of a max-stable distribution is completely described by the stable tail dependence function. In the bivariate case, $\bm{\xi}=(\xi_{1},\xi_{2})$, this dependence function is closely related to the upper tail dependence coefficient defined by
\begin{equation}
\label{eq:TDC}
  \lambda=\lim_{t\to \infty} t\pr\left(1-F_{1}(\xi_{1})\leq t^{-1}, 1-F_{2}(\xi_{2})\leq t^{-1}\right),
\end{equation} 
which captures the extremal dependence along the main diagonal of the positive plane. The identity $\lambda=2-\ell(1,1)$ is clear by the definitions.

\subsection{Regularly varying Markov trees}
\label{subsec:RVMarkovTree}

A graph is a pair $(V,E)$, where $V$ is a finite set of nodes and $E\subseteq\{(a,b)\in V\times V: a \ne b\}$ is a set of paired nodes called edges.  
A graph is called undirected if $(a,b)\in E$ implies $(b,a)\in E$. 
A path from $a$ to $b$, with notation $\pth{a}{b}$, is a collection $\{(u_{0},u_{1}),(u_{1},u_{2}),\ldots, (u_{n-1},u_{n})\}$ of distinct and directed edges such that $u_{0}=a$ and $u_{n}=b$. An undirected graph is connected if for any pair of distinct nodes $a$ and $b$, there is a path from $a$ to $b$. A cycle is a path whose start and end nodes are the same. A tree is an undirected, connected graph without cycles. For any distinct nodes $a$ and $b$ on a tree, there is a unique path from $a$ to $b$. 

Let $\tree=(V,E)$ be a tree and let $\bm{Y}=(Y_{v},v\in V)$ be a $d$-variate random vector indexed by the nodes of $\tree$. For disjoint and nonempty subsets $A$, $B$ and $C$ of $V$, we say that $C$ separates $A$ from $B$ in $\tree$ if any path from a node in $A$ to a node in $B$ passes through at least one node in $C$. We call $\bm{Y}$ a Markov tree on $\tree$ if it satisfies the global Markov property, i.e., whenever $C$ separates $A$ from $B$, the conditional independence relation 
\begin{equation}
\label{eq:gmp}
   \bm{Y}_{A}\idp \bm{Y}_{B}\cdt \bm{Y}_{C} 
\end{equation}
holds, where $\bm{Y}_{A}=(Y_{a}, a\in A)$ for $A\subseteq V$ and we write $\bm{Y}_{V}$ as $\bm{Y}$ when $A=V$.

We first recall some results on Markov trees and multivariate max-stable distributions that provide theoretical support and motivation for our approach. The following condition is adapted from \citet{S20} and ensures the weak convergence of the conditional distribution of the tree given that it exceeds a high threshold at a fixed variable. Condition~\ref{condition}(i) is an assumption about the high level transformation. Condition~\ref{condition}(ii) controls the probability of extremes caused by non-extremes. 

\begin{condition}
\label{condition}
For a non-negative Markov tree $\bm{Y}=(Y_{v},v\in V)$ on the tree $\tree=(V,E)$:
\begin{enumerate}[(i)]
    \item  for every $e=(a,b)\in E$, there exists a version of the conditional distribution of $Y_{b}$ given $Y_{a}$ and a probability measure $\mu_{e}$ on $[0,\infty)$ such that for every $y\in [0,\infty)$ in which the distribution function $y \mapsto \mu_e([0,y])$ is continuous, we have
\begin{equation}
\label{eq:mu_e}
    \pr(y_{a}^{-1} Y_{b}\le x \mid Y_{a}=y_{a})\to\mu_{e}([0, x]),\qquad y_{a}\to\infty;
\end{equation}
\item for every edge $e = (a, b) \in E$ such that $\mu_e(\{0\}) > 0$, we have
\begin{equation*}
\forall \eta>0, \quad \lim_{\delta\downarrow 0} \limsup_{y_{a}\to\infty} \sup_{\varepsilon\in[0,\delta]}\pr(y_{a}^{-1} Y_{b} >\eta \mid Y_{a}=\varepsilon y_{a})=0.
\end{equation*}
\end{enumerate}
\end{condition}

Assume $\bm{Y}=(Y_{v},v\in V)$ is a non-negative Markov tree on $\tree=(V,E)$ satisfying Condition~\ref{condition}. If for every $u\in V$ we have
\begin{align}
\label{eq:RVMar}
t\pr(Y_{u}>t)\to 1, \qquad   t \to \infty,
\end{align}
then by Theorem~2 and Corollary~5 in \citet{S20}, we have
\begin{equation}
\label{eq:W_i}
    \pr(t^{-1}\bm{Y}\le \bm{x} \mid Y_{u}>t) \to \pr(\bm{W}_{u}\le \bm{x}), \qquad t\to\infty,
\end{equation}
in all continuity points $\bm{x} \in [0, \infty)^d$ of the distribution function of $\bm{W}_u$. The weak limit $\bm{W}_{u}$ is equal in distribution to $W\bm{\Theta}_{u}$, where $W$ is a unit-Pareto random variable that is independent of the random vector $\bm{\Theta}_{u} = (\Theta_{u,v},v\in V)$ called tail tree and given by $\Theta_{u,u}=1$ together with
\begin{equation*}
\Theta_{u,v}=\prod_{e\in \pth{u}{v} } M_{e}, \qquad  v\in V\setminus\{u\},
\end{equation*}
in terms of independent random variables $M_e$, for $e\in E$, with marginal distributions $\mu_e$ in \eqref{eq:mu_e}. In fact, the random vector $\bm{\Theta}_u$ is the weak limit of the conditional distribution of $t^{-1} \bm{Y}$ given that $Y_u = t$ as $t \to \infty$. The random vector $\bm{Y}$ is called a regularly varying Markov tree.
The distributions $\mu_{e}$ of the random variables $M_{e}$ can be arbitrary except for the fact that always $\E[M_{e}] \le 1$, see the proof of Proposition~\ref{pro:TDC_ieq}.

By Theorem~2 in \citet{S20}, the convergence in \eqref{eq:W_i} implies that $\bm{Y}$ is in the domain of attraction of a simple multivariate max-stable distribution, i.e., for independent random samples $\bm{Y}^{i}=(Y_{i,v},v\in V)$, $i=1,\ldots,n$, generated from $\bm{Y}$, the convergence \eqref{eq:G_M} holds with $a_{n,v}=n$ and $b_{n,v}=0$ for $v\in V$. To stress the importance of such max-stable distributions, which emerge as the weak limits of normalized component-wise maxima of independent copies of Markov trees, we state the following definition.

\begin{definition}
\label{def:maxstabletree}
A $d$-variate max-stable distribution $H$ is said to \emph{have a dependence structure linked to the tree} $\tree=(V,E)$ if, up to increasing marginal transformations, it contains in its domain of attraction a non-negative regularly varying Markov tree $\bm{Y}=(Y_{v},v\in V)$ on $\tree$ satisfying Condition~\ref{condition} and equation~\eqref{eq:RVMar}. We call $H$ a \emph{tree-structured max-stable distribution}.
\end{definition}

\begin{remark}
As kindly pointed out by the Associated Editor, there are two possible tree-based models for multivariate extremes. One is the max-stable attractor of the Markov random fields on a tree along the lines of~\citet{S20}, i.e., the tree-structured max-stable distribution in Definition~\ref{def:maxstabletree}. The other one is the multivariate Pareto distribution factorizing with respect to a tree via a redefined concept of conditional independence studied in \citet{E20}, with the extension to cases without densities provided in \citet{EV20}. In fact, these two models are related in the sense that the latter are the multivariate Pareto distributions associated to the tree-based max-stable distributions with respect to the same tree; see Appendix~\ref{sec:TSMS2ETM}.
\end{remark}

There could be several trees associated with the same max-stable distribution, see the asymptotic logistic distribution with parameter $\bm{\Psi}_3$ in Example~\ref{eg:AsyLogZero}. We show that the stable tail dependence function and upper tail dependence coefficients of a tree-structured max-stable distribution $H$ on a certain tree $\tree$ are given by the coming propositions.

\begin{proposition}
\label{pro:G_M}
Let $H$ be a tree-structured max-stable distribution as in Definition~\ref{def:maxstabletree} with respect to $\tree=(V, E)$. Let $(M_e, e \in E)$ be independent random variables with marginal distributions $\mu_e$ in \eqref{eq:mu_e}. The stable tail dependence function of $H$ is
\begin{equation}
\label{eq:stdf_G}
\ell(\bm{y})=\sum_{i=1}^{d} \E\left\{\max_{j=i,\ldots, d} \left(y_{v_{j}}\prod_{e\in \pth{v_i}{v_j}}M_{e}\right)-\max_{j=i+1,\ldots, d} \left( y_{v_{j}}\prod_{e\in \pth{v_i}{v_j}}M_{e}\right)\right\},
\end{equation}
where $v_{1},\ldots,v_{d}$ is an arbitrary permutation of $1,\ldots,d$. The maximum over the empty set is defined as zero and $M_{uv}$ is interpreted as one if $u=v$. 
If, in addition, we have $\E(M_{ab}) = 1$ for all $(a,b)\in E$, then
\begin{equation*}
\ell(\bm{y})
    = \E \left\{\max_{j=1,\ldots, d} \left( y_{v_{j}}\prod_{e\in \pth{v_1}{v_j}}M_{e}\right)\right\}
    = \E \left\{ 
        \max_{v=1,\ldots,d} \left(
         y_v \prod_{e \in \pth{1}{v}} M_e
    \right)\right\}.
\end{equation*}

Conversely, let $\tree=(V,E)$ be an undirected tree, if $\{M_{ab},(a,b)\in E\}$ is a sequence of non-negative and independent random variables with $\E( M_{ab})\le 1$ for all $(a,b)\in E$ and satisfying the relation
\begin{equation}
\label{eq:changeDir}
\pr(M_{ba}>x)=\E\left[M_{ab}\I\{M_{ab}<1/x\}\right], \quad x\ge 0,
\end{equation}
then $H(\bm{x})$ defined in~\eqref{eq:Maxstable_stdf} with $H_{v}$, $v\in V$, being univariate max-stable distributions and $\ell(\bm{x})$ given in~\eqref{eq:stdf_G} is a tree-structured max-stable distribution with respect to $\tree$.
\end{proposition}

The stable tail dependence function of the bivariate marginal distribution of a tree-structured max-stable distribution is formulated in Proposition~\ref{pro:bi_TDC} below. It stresses the fact that the distribution of the pair $(X_a,X_b)$, $a,b\in V$ and $a\neq b$, only depends on the independent random variables $M_e$, $e\in\pth{a}{b}$, which explains why we say that the distribution inherits the dependence structure from the tree.

\begin{proposition}
\label{pro:bi_TDC}
Let the random vector $\bm{X} = (X_{v}, v \in V)$ have distribution $H$ as in Proposition~\ref{pro:G_M}.
For distinct $a,b\in V$, the stable tail dependence function of $(X_{a},X_{b})$ is
\[
    \ell_{ab}(x,y)=x+y-\E \left\{\min\left(x, y \prod_{e\in \pth{a}{b} } M_{e}\right)\right\},\qquad x,y\ge 0,
\]
while its upper tail dependence coefficient is
\begin{equation}
\label{eq:TDC_GM}
    \lambda_{ab}=\E\left\{\min\left(1,\prod_{e\in \pth{a}{b} }M_{e}\right)\right\}.
\end{equation}
\end{proposition}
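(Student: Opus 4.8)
The plan is to obtain $\ell_{ab}$ as the bivariate $(a,b)$-margin of the $d$-variate stable tail dependence function $\ell$ already computed in Proposition~\ref{pro:G_M}, and then to read off $\lambda_{ab}$ from the identity $\lambda_{ab}=2-\ell_{ab}(1,1)$.

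First I would invoke the standard fact that marginalizing a stable tail dependence function amounts to setting the omitted arguments to zero: in the defining limit \eqref{eq:stdf} the constraint $F_v(\xi_v)\le 1-t^{-1}y_v$ with $y_v=0$ is vacuous, so $\ell_{ab}(y_a,y_b)=\ell(\bm y)$ where $\bm y$ has $y_v=0$ for every $v\in V\setminus\{a,b\}$. It then remains to evaluate the right-hand side of \eqref{eq:stdf_G} at such a $\bm y$, and here I would take the permutation $v_1,\dots,v_d$ with $v_1=a$ and $v_2=b$, which is legitimate since \eqref{eq:stdf_G} holds for an arbitrary permutation. Writing $P=\prod_{e\in\pth{a}{b}}M_e$, one checks term by term in the telescoping sum that: every summand with index $i\ge 3$ vanishes (all maxima then run only over coordinates $v_j$ with $j\ge 3$, hence with $y_{v_j}=0$); the summand $i=1$ equals $\E\{\max(y_a,y_bP)-y_bP\}$; and the summand $i=2$ equals $y_b$ (the inner maximum over $j\ge 3$ being an empty or all-zero maximum, which also covers the edge case $d=2$). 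Since $\max(u,w)-w=u-\min(u,w)$ for $u,w\ge 0$, the $i=1$ summand is $y_a-\E\{\min(y_a,y_bP)\}$, and adding the two contributions gives $\ell_{ab}(y_a,y_b)=y_a+y_b-\E\{\min(y_a,y_bP)\}$, the claimed formula.

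For the upper tail dependence coefficient I would simply set $y_a=y_b=1$ and use $\lambda_{ab}=2-\ell_{ab}(1,1)=2-\bigl(2-\E\{\min(1,P)\}\bigr)=\E\{\min(1,\prod_{e\in\pth{a}{b}}M_e)\}$, which is \eqref{eq:TDC_GM}. As a consistency check, when $\E(M_e)=1$ for all $e\in E$ one may instead start from the shorter expression for $\ell$ in Proposition~\ref{pro:G_M} with $v_1=a$, obtaining directly $\ell_{ab}(y_a,y_b)=\E\{\max(y_a,y_bP)\}$, which coincides with the formula above because $\E(P)=\prod_{e\in\pth{a}{b}}\E(M_e)=1$. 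The whole argument is essentially bookkeeping; the only mildly delicate point is keeping track of the telescoping structure and the empty-maximum convention in \eqref{eq:stdf_G} once all but two of the arguments vanish.
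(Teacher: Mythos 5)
Your proof is correct and follows essentially the same route as the paper's: both reduce the $d$-variate formula \eqref{eq:stdf_G} to its $(a,b)$-margin, apply the identity $\max(u,w)=u+w-\min(u,w)$, and read off $\lambda_{ab}=2-\ell_{ab}(1,1)$. Your explicit term-by-term check after setting the other coordinates to zero is a slightly more careful rendering of the paper's terser ``setting $d=2$ in \eqref{eq:stdf_G}'', but the substance is identical.
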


For a tree-structured max-stable distribution, the dependence between variables becomes weaker as the path connecting them becomes longer. This is confirmed by the next proposition. A similar result can be found in Proposition~5 of~\citet{EV20}, where an inequality related to the upper tail dependence coefficients of extremal tree models is presented. The right-hand side of our inequality in Proposition~\ref{pro:TDC_ieq} and the inequality in Proposition~5 of~\citet{EV20} can be recovered from each other in view of the relation of tree-structured max-stable distribution and the extremal tree models discussed in Section~\ref{sec:TSMS2ETM}. However, we point out that we also give a lower bound for the upper tail dependence coefficient as well as a sufficient condition for strict inequality on the right-hand side which is weaker than the conditions given in~\citet{EV20}.

\begin{proposition}
\label{pro:TDC_ieq}
Let $\bm{X}$ be the random vector in Proposition~\ref{pro:bi_TDC} and let $\lambda_{uv}$ be the upper tail dependence coefficient of $(X_{u},X_{v})$. For any distinct nodes $a,b,u\in V$ such that $u$ is on the path from $a$ to $b$, we have 
\begin{equation}
\label{eq:inequality}
\lambda_{au}\lambda_{ub} \leq \lambda_{ab}\leq \min(\lambda_{au}, \lambda_{ub}).
\end{equation}
A sufficient condition for the inequality on the right-hand side to be strict is that, for any $e \in E$ and any $\eps > 0$, we have $\pr(1-\eps < M_{e} < 1) > 0$ and $\pr(1 < M_{e} < 1+\eps) > 0$.
\end{proposition}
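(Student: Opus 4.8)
The idea is to turn the statement into an inequality for two independent non‑negative random variables. Since $u$ lies on the (unique) path from $a$ to $b$ in $\tree$, that path is the concatenation of the paths $\pth{a}{u}$ and $\pth{u}{b}$, which share no edge. Put $P=\prod_{e\in\pth{a}{u}}M_e$ and $Q=\prod_{e\in\pth{u}{b}}M_e$; as the $M_e$ are independent and the two edge sets are disjoint, $P$ and $Q$ are independent and $\prod_{e\in\pth{a}{b}}M_e=PQ$. By Proposition~\ref{pro:bi_TDC} we then have $\lambda_{au}=\E\{\min(1,P)\}$, $\lambda_{ub}=\E\{\min(1,Q)\}$ and $\lambda_{ab}=\E\{\min(1,PQ)\}$, so \eqref{eq:inequality} is equivalent to
\[
\E\{\min(1,P)\}\,\E\{\min(1,Q)\}\;\le\;\E\{\min(1,PQ)\}\;\le\;\min\bigl(\E\{\min(1,P)\},\,\E\{\min(1,Q)\}\bigr).
\]

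The first ingredient I would record is that $\E(M_e)\le1$ for every $e\in E$, whence $\E(P)=\prod_{e\in\pth{a}{u}}\E(M_e)\le1$ and likewise $\E(Q)\le1$. Since $H$ is a genuine extreme value distribution, for adjacent $a',b'$ the map $\ell_{a'b'}$ of Proposition~\ref{pro:bi_TDC} is a bona fide stable tail dependence function and therefore dominates its arguments, $\ell_{a'b'}(x,y)\ge\max(x,y)$; plugging in $y=1$ and rearranging gives $\E\{\min(x,M_{a'b'})\}\le\min(x,1)=1$ for $x\ge1$, and letting $x\to\infty$ (monotone convergence) yields $\E(M_{a'b'})\le1$. (The same bound also follows directly from the regular‑variation normalisation \eqref{eq:RVMar}.) The left inequality then needs nothing more: a short case check gives $\min(1,pq)\ge\min(1,p)\min(1,q)$ for all $p,q\ge0$, and integrating this using $P\idp Q$ yields $\E\{\min(1,PQ)\}\ge\E\{\min(1,P)\}\E\{\min(1,Q)\}$.

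For the right inequality I would condition on $P$ and set $g(p)=\E\{\min(1,pQ)\}-\min(1,p)$, so that $\lambda_{ab}-\lambda_{au}=\E\{g(P)\}$. The key elementary identity is $\min(1,pq)-p=\min(1-p,\,p(q-1))$: for $p\le1$ it gives $g(p)=\E\{\min(1-p,\,p(Q-1))\}\le p\,\E(Q-1)=p(\E(Q)-1)\le0$, while for $p>1$ simply $g(p)=-\,\E\{\max(0,1-pQ)\}\le0$. Hence $\E\{g(P)\}\le0$, i.e.\ $\lambda_{ab}\le\lambda_{au}$, and $\lambda_{ab}\le\lambda_{ub}$ follows by swapping the roles of $P$ and $Q$ (and of $\E(P),\E(Q)$). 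I expect this to be the main obstacle: the entire force of the bound lies in recognising that the identity above collapses everything to the single constraint $\E(Q)\le1$. (Equivalently, $f(q):=\E\{\min(1,qP)\}$ is concave, non‑decreasing and vanishes at $q=0$, so Jensen's inequality together with $\E(Q)\le1$ gives $\lambda_{ab}=\E\{f(Q)\}\le f(\E(Q))\le f(1)=\lambda_{au}$.)

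Finally, for strictness I would locate where $g(P)<0$ with positive probability. For $p<1$ the bound $g(p)\le p\,\E(Q-1)$ is strict as soon as $\pr(Q>1/p)>0$, because then $\min(1-p,\,p(Q-1))<p(Q-1)$ on $\{Q>1/p\}$; combined with $\E(Q)\le1$ this forces $g(p)<0$. Under the hypothesis, in particular $\pr(M_e>1)>0$ for each $e\in\pth{u}{b}$, so $\pr(Q>1)>0$ by independence, and hence $\pr(Q>q^\star)>0$ for some $q^\star>1$; consequently $g(p)<0$ for every $p\in(1/q^\star,1)$. Moreover $\pr(1-\eps<M_e<1)>0$ for each $e\in\pth{a}{u}$ gives $\pr\bigl(P\in((1-\eps)^{\lvert\pth{a}{u}\rvert},1)\bigr)>0$ for every $\eps>0$, and choosing $\eps$ small enough that $(1-\eps)^{\lvert\pth{a}{u}\rvert}>1/q^\star$ shows that $P$ lands in $(1/q^\star,1)$ with positive probability. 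Since $g\le0$ everywhere and $g(P)<0$ on an event of positive probability, $\E\{g(P)\}<0$, i.e.\ $\lambda_{ab}<\lambda_{au}$; exchanging the two sub‑paths — which is exactly why the hypothesis requires mass both just below and just above $1$ for every edge — gives $\lambda_{ab}<\lambda_{ub}$, hence $\lambda_{ab}<\min(\lambda_{au},\lambda_{ub})$.
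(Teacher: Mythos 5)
Your proof is correct and follows essentially the same route as the paper's: the same decomposition of $\prod_{e\in\pth{a}{b}}M_e$ into independent factors $P$ and $Q$, the same reduction to $\E(M_e)\le 1$, an elementary bound on $\min(1,pq)$ that, once integrated over $P$, is exactly the paper's inequality $\E\{\min(1,PQ)\}\le\E\{\min(1,P)\}-\{1-\E(Q)\}\E\{P\I(P<1)\}$, and strictness located on the same event $\{P<1<PQ\}$. The only genuine variation is your derivation of $\E(M_e)\le 1$ from the pointwise lower bound $\ell_{ab}(x,1)\ge x$ for a stable tail dependence function rather than from the joint tail computation used in the paper; both are valid.
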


The inequality on the right-hand side of \eqref{eq:inequality} can be an equality, even in non-trivial situations. Let for instance $\eps_1, Y_2, \eps_3$ be independent unit-Fr\'{e}chet random variables and define $Y_j = \max(Y_2, \eps_j)/2$ for $j \in \{1, 3\}$. Then $Y_1$ and $Y_3$ are conditionally independent given $Y_2$ and $(Y_1, Y_2, Y_3)$ is a regularly varying Markov tree with respect to the tree 1--2--3. Still, the tail dependence coefficients are $\lambda_{12} = \lambda_{23} = \lambda_{13} = 1/2$. Note that $\pr(M_{12} = 0) = \pr(M_{12} = 2) = 1/2$ while $\pr(M_{23} = 1/2) = 1$, so that indeed $\E\{\min(1, M_{12} M_{23})\} = \E( M_{12} M_{23} ) = 1/2$ and $\E\{\min(1, M_{12})\} = 1/2$.

The left-hand side inequality of \eqref{eq:inequality} can be an equality too. For instance, if $(a, u)$ and $(u, b)$ are edges and if $M_{au}$ and $M_{ub}$ are bounded by $1$ almost surely, then $\lambda_{ab} = \E\{\min(1,M_{au}M_{ub})\} = \E(M_{au}M_{ub}) = \E(M_{au})\E(M_{ub}) = \E\{\min(1,M_{au})\} \E\{\min(1,M_{ub})\} = \lambda_{au}\lambda_{ub}$.

\begin{remark}
For any pair $a,b\in V$ of distinct nodes, Proposition~\ref{pro:TDC_ieq} and a recursive argument imply
\begin{equation*}
\prod_{e\in \pth{a}{b} } \lambda_{e} \leq \lambda_{ab}\leq  \min_{e \in \pth{a}{b}} \lambda_{e}.
\end{equation*}
\end{remark}

\section{Model definition and properties}
\label{sec:model}

\subsection{Model construction}
\label{subsec:modelconstruction}

Let $G$ be a simple $d$-variate max-stable distribution to be approximated by a tree-structured max-stable distribution on a tree $\tree = (V, E)$. As before, write $V=\{1,\ldots,d\}$. Assume $\bm{Z}=(Z_{v},v\in V)$ is a random vector with distribution $G$ having stable tail dependence function $\ell$. We construct a Markov tree, denoted by $\bm{Y}=(Y_{v},v\in V)$, on $\tree$ by the following definition.

\begin{definition}
\label{def:Y}
Given an arbitrary $d$-variate simple max-stable distribution $G$ and a tree $\tree=(V,E)$, define the Markov tree $\bm{Y}=(Y_{v},v\in V)$ by the following two properties:
\begin{enumerate}[(1)]
    \item the random vector $\bm{Y}$ satisfies the global Markov property~\eqref{eq:gmp} with respect to $\tree$;
    \item for each pair $(a,b)\in E$, the distribution of $(Y_{a},Y_{b})$ is the same as the one of $(Z_{a},Z_{b})$.
\end{enumerate}
We construct the tree-structured model $G_M$, serving as an approximation of $G$, by the composite map 
\[ (G, \tree) \mapsto P_{\bm{Y}} \mapsto G_M \]
defined via
\[
    G_M(x_1,\ldots,x_d) 
    = \lim_{n \to \infty} \pr(Y_1 \le nx_1, \ldots, Y_d \le nx_d)^n, 
    \qquad (x_1,\ldots,x_d) \in (0, \infty)^d.
\]
Here $P_{\bm{Y}}$ denotes the distribution of the Markov tree $\bm{Y}$ on $\tree$.

\end{definition}

\begin{remark}
For edges $(a, b) \in E$, the bivariate distributions of $G_M$ and $G$ are the same. As a consequence, if we apply the construction with $G_M$ as input, then we get $G_M$ again. In other words, if $G$ is already tree-structured along $\tree$, then $G_M = G$.
\end{remark}

\begin{remark}
The Markov tree $\bm{Y}$ defined in Definition~\ref{def:Y} is different from the multivariate Pareto distribution 
of \citet{EV20}. The former is in the domain of attraction of $G_M$ which is associated with a multivariate Pareto distribution 
with respect to the same tree, see Section~\ref{sec:TSMS2ETM}.
\end{remark}

\begin{remark}
A similar construction method has been used in~\citet{A21} as a motivation for the structured H\"{u}sler--Reiss distribution. The constructed model here is more general where, except for the global Markov property, we only require the distributions of adjacent pairs to be bivariate max-stable distributions, with a possibility to be different from pair to pair. This gives a broad class of structured multivariate max-stable distribution which includes the structured H\"{u}sler--Reiss model considered in~\citet{A21} as a special case.
\end{remark}

By the construction, the distribution of $(Y_{a},Y_{b})$ for $(a,b)\in E$ is bivariate max-stable with unit-Fr\'{e}chet margins. Thus Condition~\ref{condition} is satisfied and $\bm{Y}$ is a regularly varying Markov tree \citep[Example~3]{S20}. Hence the conclusions in Propositions~\ref{pro:G_M}--\ref{pro:TDC_ieq} 
hold for $\bm{Y}$ and thus $G_M$ always exists. 
The representation of $G_{M}$ follows directly from equation~\eqref{eq:Maxstable_stdf} and Proposition~\ref{pro:G_M}.

\begin{corollary}
\label{pro:GM_Markov}
The Markov tree $\bm{Y}$ in Definition~\ref{def:Y} is in the domain of attraction of a simple $d$-variate max-stable distribution $G_{M}$ with dependence structure linked to the tree $\tree$. More precisely, for $\bm{x}=(x_{v},v\in V) \in (0, \infty]^d$, we have
\begin{equation*}
G_{M}(\bm{x})=\exp\left\{-\ell^M(1/x_{v},v\in V)\right\}
\end{equation*}
with stable tail dependence function $\ell^M$ given by the right-hand side in \eqref{eq:stdf_G}.
\end{corollary}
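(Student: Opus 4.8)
The plan is to check that the Markov tree $\bm{Y}$ of Definition~\ref{def:Y} fits the framework of Section~\ref{subsec:RVMarkovTree} and then to read off the statement from the results already established there; the corollary is essentially a repackaging of Definition~\ref{def:Y}, the domain-of-attraction theory of \citet{S20}, and Proposition~\ref{pro:G_M}.

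First I would confirm that $\bm{Y}$ is a non-negative regularly varying Markov tree. Since $\tree$ is connected, every node $v\in V$ is incident to some edge, so by property~(2) of Definition~\ref{def:Y} the univariate margin of $Y_{v}$ equals that of $Z_{v}$, which is unit-Fr\'echet because $G$ is simple; in particular $\bm{Y}$ is non-negative and $\pr(Y_{v}>t)=1-\exp(-1/t)\sim 1/t$ as $t\to\infty$, so \eqref{eq:RVMar} holds. For each edge $e=(a,b)\in E$, the pair $(Y_{a},Y_{b})$ is, again by property~(2), a bivariate simple extreme value distribution; for such a distribution the rescaled conditional law of $Y_{b}$ given $Y_{a}=y_{a}$ converges as $y_{a}\to\infty$ to a probability measure $\mu_{e}$ on $[0,\infty)$ in the sense of \eqref{eq:mu_e}, and the regularity property in Condition~\ref{condition}(ii) holds as well; this is the content of Example~3 in \citet{S20}. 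Hence Condition~\ref{condition} is satisfied.

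Next, with Condition~\ref{condition} and \eqref{eq:RVMar} in force, Theorem~2 and Corollary~5 of \citet{S20} yield the conditional limit \eqref{eq:W_i}, and Theorem~2 of \citet{S20} shows that $\bm{Y}$ lies in the maximum domain of attraction of a simple $d$-variate extreme value distribution with norming constants $a_{n,v}=n$ and $b_{n,v}=0$ for $v\in V$; call this limit $G_{M}$. Because $\bm{Y}$ is a non-negative regularly varying Markov tree on $\tree$ satisfying Condition~\ref{condition} and \eqref{eq:RVMar} and lies in $D(G_{M})$, Definition~\ref{def:maxstabletree} says precisely that $G_{M}$ has a dependence structure linked to the tree $\tree$, i.e.\ it is a tree-structured extreme value distribution.

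Finally I would produce the explicit expression. Since $G_{M}$ is simple, its margins are unit-Fr\'echet, so $-\log (G_{M})_{v}(x_{v})=1/x_{v}$, and the representation \eqref{eq:Maxstable_stdf} reduces to $G_{M}(\bm{x})=\exp\{-\ell^{M}(1/x_{v},v\in V)\}$ for $\bm{x}\in(0,\infty]^{d}$, where $\ell^{M}$ denotes the stable tail dependence function of $G_{M}$. Applying Proposition~\ref{pro:G_M} to the tree-structured extreme value distribution $H=G_{M}$, with $(M_{e},e\in E)$ independent and $M_{e}\sim\mu_{e}$ the very measures identified above, expresses $\ell^{M}$ as the right-hand side of \eqref{eq:stdf_G}, which is the claim. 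No step is genuinely difficult; the one point that deserves care is the verification of Condition~\ref{condition} — in particular part~(ii) when $\mu_{e}(\{0\})>0$ — for an arbitrary bivariate simple extreme value margin $(Y_{a},Y_{b})$, and this is exactly what Example~3 of \citet{S20} provides.
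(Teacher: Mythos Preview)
Your proposal is correct and follows essentially the same approach as the paper: the paper states the corollary as an immediate consequence of verifying that $\bm{Y}$ is a regularly varying Markov tree via Example~3 of \citet{S20} (so Condition~\ref{condition} and \eqref{eq:RVMar} hold), then reading off $G_{M}$ from \eqref{eq:Maxstable_stdf} and Proposition~\ref{pro:G_M}. Your write-up is slightly more explicit---spelling out why each node inherits a unit-Fr\'echet margin and flagging Condition~\ref{condition}(ii) as the delicate point---but the argument is the same.
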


For a given max-stable distribution $G$, every different tree $\tree$ on $V$ leads to a possibly different tree-structured approximation $G_M$. We will come back to the choice of $\tree$ in Subsection~\ref{subsec:LOTS}.

The distribution $G_{M}$ is determined by the increments $M_{e}$, see Corollary~\ref{pro:GM_Markov}, whose laws $\mu_e$ can be calculated through the limit in Condition~\ref{condition}(i) if $G$ is known. Alternatively, the distribution of $M_{e}$ for $e=(a,b)$ can also be obtained from the stable tail dependence function of $(Z_{a},Z_{b})$ 
analogously as in Proposition~3.9 of \citet{A21}:
\begin{equation}
    \label{eq:derivation}
    \pr(M_{ab}\leq x)=\frac{\d \ell_{ab}(x,1)} {\d x}, \qquad 0\leq x<\infty,
\end{equation}
where $\ell_{ab}$ is the stable tail dependence function of $(Z_{a},Z_{b})$ and where the derivative is taken from the right (which always exists by the convexity of $\ell_{ab}$).

The stable tail dependence functions $\ell$ and $\ell^M$ of $G$ and its tree-based approximation $G_M$, respectively, are different in general. Still, they share the same margins for pairs of adjacent nodes in $\tree$, i.e., $\ell_{ab} = \ell_{ab}^M$ for $(a, b) \in E$. The difference between $\ell$ and $\ell^M$ becomes visible when studying pairs $(a, b)$ not in $E$ or when considering higher-order margins.

\subsection{Measuring the approximation error}
\label{subsec:ApproErr}

Given the max-stable distribution $G$, it is theoretically possible to get an explicit expression of $G_{M}$ in Corollary~\ref{pro:GM_Markov}. But how well can $G$ be approximated by $G_{M}$? This question is important because, like in \citet{C68}, the Markov assumption is not believed to be satisfied exactly, especially not in high dimensions. Still, it may provide a reasonable approximation, and the question is how large or small the approximation error really is.

By construction, the adjacent pairs on the tree have precisely the same dependence structure as the corresponding pairs of the original max-table distribution. But for pairs of random variables which are nonadjacent on the tree, the tail dependence is in general distinct since the Markov tree has enclosed conditional independence in the model and this may influence the dependence structure of the limiting tree-based max-stable distribution. Therefore, differences can be expected to arise in the nonadjacent pairs on the tree. For $a, b \in V$, let $\lambda_{ab}$ and $\lambda_{ab}^M$ denote the upper tail dependence coefficients of the variables with indices $a$ and $b$ in $G$ and in $G_M$, respectively. For the reason just explained, we focus on $a,b\in V$ with $a \ne b$ such that $(a,b)\notin E$ and quantify the approximation error by
\begin{equation}
\label{eq:D}
D_{\tree}=\sum_{a,b\in V,a\neq b,(a,b)\notin E}  \left|\lambda_{ab}^{M}-\lambda_{ab}\right|,
\end{equation} 
where $\left|\lambda_{ab}^{M}-\lambda_{ab}\right|$ and $\left|\lambda_{ba}^{M}-\lambda_{ba}\right|$ only count once in the summation. Note that $D_{\tree}$ also equals $\sum_{a,b\in V}|\lambda_{ab}^{M}-\lambda_{ab}|$ since $\lambda_{ab}^{M}=\lambda_{ab}$ for $(a,b)\in E$ or when $a = b$. The quantity $D_{\tree}$ can be interpreted as the cumulative difference in bivariate tail dependence between the two max-stable distributions.

If $G$ is already tree-structured along $\tree$, then $G_M=G$ and thus $D_{\tree}=0$. The converse is not true in general: even if $G_M\neq G$, it could still occur that $D_{\tree}=0$. A counterexample can be found in \mdf{Example~S1 in Section~S1 of the Supplementary Material}, where two different distributions have the same bivariate stable tail dependence functions and thus $D_{\tree}=0$. However, within parametric models, examples can also be found for which the converse statement holds, i.e., $G=G_{M}$ as soon as $D_{\tree}=0$, see the H\"{u}sler--Reiss distribution and the asymmetric logistic distributions given in the next section. Since $G_{M}$ is a tree-based model induced by the Markov tree which encodes conditional independence between nonadjacent pairs, an intuitive conjecture could be that $\lambda_{ab}^{M}$ is smaller than $\lambda_{ab}$ for $(a,b)\notin E$. But this is not true as shown by the H\"{u}sler--Reiss distribution in Example~\ref{eg:HR} with parameter $\bm{\Gamma}_2$: if $G_M$ is constructed based on tree 1--2--3--4 with structure $\tree^{(a)}$, then $\gamma_{14}=16$ and $\gamma^M_{14}=\gamma_{12}+\gamma_{23}+\gamma_{34}=12$, leading to $\lambda_{14}<\lambda_{14}^M$ by~\eqref{eq:TDCHR} and thus contradicting with the conjecture.

Given a max-stable distribution $G$, one could wonder which tree $\tree$ yields a minimal approximation error $D_{\tree}$ and thus a kind of best possible tree approximation $G_M$. A conjecture would be that the approximation is best for a maximum-spanning tree of $G$ with respect to the tail dependence coefficients as edge weights. Indeed, if $G = G_M$, then we know from Proposition~\ref{pro:uniqueness} below that the tree that minimizes $D_{\tree}$ must be one of such maximum-spanning trees, but a maximum-spanning tree with tail dependence coefficients as edge weights does not necessarily minimize $D_{\tree}$. For example see the asymmetric logistic distribution with parameter $\bm{\Psi}_3$ in Example~\ref{eg:AsyLogZero} and Table~\ref{tab:4HR_SDH}: the tree 3--1--2--4 with line structure $\tree^{(a)}$ is a maximum-spanning tree but it does not have the minimum $D_{\tree}$. 
The conjecture seems less likely for a general max-stable distribution $G$ as falsified by the $4$-variate H\"{u}sler--Reiss distribution with variogram matrix $\bm{\Gamma}_{3}$ in Example~\ref{eg:HR} and Table~\ref{tab:4HR_SDH}, where $D_{\tree}$ attains its minimum value $0.156$ at the tree 2--3--1--4 with line tree structure $\tree^{(a)}$ but the sum of edge weights takes its maximum value $1.087$ at the tree 3--1--2--4 based on $\tree^{(a)}$.

Another measure of the approximation error, suggested by the associate editor, is:
\[
    \tilde{D}_{\tree} = \sum_{a,b \in V; a \ne b; (a,b)\notin E} \sup_{(x_a,x_b) \in [0, 1]^2} \left| \ell^M_{ab}(x_a,x_b) - \ell_{ab}(x_a,x_b) \right|
\]
with $\ell^M_{ab}$ the $(a,b)$ bivariate marginal of the tree-structured stable tail dependence function $\ell^M$. Since $\lambda_{ab} = 2 - \ell_{ab}(1,1)$ and $\lambda_{ab}^M = 2 - \ell^M_{ab}(1, 1)$, we have $\tilde{D}_{\tree} \ge D_{\tree}$. If $G$ is already tree-structured along $\tree$, then $G_M = G$ and $\tilde{D}_{\tree} = 0$. The converse implication does not hold either, see again \mdf{Example~S1 in the Supplementary Material}.

\subsection{Examples}
\label{sec:Examples}
We explain the construction through two examples. Throughout the rest of this paper, the symbols $G$, $G_{M}$, $\bm{Y}$ and $\bm{Z}$ are defined in the same way as in Subsection~\ref{subsec:modelconstruction}. Further, $\bm{Z}_{M} = (Z_{v}^{M}, v \in V)$ denotes a random vector with distribution function $G_M$. For nonempty $U\subseteq V$, let the stable tail dependence functions of $\bm{Z}_{U} = (Z_{v},v\in U)$ and $\bm{Z}_{U}^{M} = (Z_{v}^{M},v\in U)$ be denoted by $\ell_{U}$ and $\ell_{U}^{M}$, respectively, where we write $\ell_{U}$ as $\ell_{ab}$ when $U=\{a,b\}$ and where the subscript $U$ is omitted when $U=V$.

\begin{example}[H\"{u}sler--Reiss model]
\label{eg:HR}
Assume $\bm{\Gamma}=(\gamma_{uv})_{u,v\in V}$ is a $d\times d$ dimensional symmetric, conditionally negative definite matrix with elements $\gamma_{uv}\in [0,\infty)$ satisfying $\gamma_{uu}=0$ for $u \in V$, i.e., $\bm{a}^\top \Gamma \bm{a}< 0$ for any nonzero $d$-dimensional column vector $\bm{a}=(a_{v},v\in V)$ such that $\sum_{v\in V}a_{v}=0$. For $\bm{x}\in (0,\infty)^{d}$, let $G(\bm{x})$ be a H\"{u}sler--Reiss distribution \citep{H89} defined by 
\begin{equation*}
G(\bm{x})=\exp \left\{-\sum_{u\in V} x_{u}^{-1} \Phi_{d-1}\left(\log \frac{x_{v}}{x_{u}}+ \frac{\gamma_{uv}}{2}, v \in V\setminus\{u\}; \bm{\Sigma}_{V,u}\right)\right\},
\end{equation*}
where $\bm{\Sigma}_{V,u}=(\sigma_{ab}^{2})_{a,b\in V\setminus \{u\}}$ is the $(d-1)\times (d-1)$ dimensional square matrix with elements $\sigma_{ab}^{2}=(\gamma_{au}+\gamma_{bu}-\gamma_{ab})/2$. The function $\Phi_{p}\left(\bm{x}; \bm{\Sigma} \right)$ is the $p$-dimensional normal cumulative distribution function with zero mean and covariance matrix $\bm{\Sigma}$, with subscript omitted when $p=1$. For nonempty  $U\subseteq V$, the stable tail dependence function of $(Z_{v},v\in U)$ is
\begin{equation*}
\ell_{U}(\bm{x}_{U})=\sum_{u\in U} x_{u} \Phi_{|U\setminus{\{u\}}|}\left(\log \frac{x_{u}}{x_{v}}+ \frac{\gamma_{uv}}{2}, v \in  V\setminus\{u\}; \bm{\Sigma}_{U,u}\right),\qquad \bm{x}_{U}=(x_{u},u\in U)\in [0,\infty)^{|U|},
\end{equation*}
where $|U|$ denotes the number of elements in $U$. Consequently, we have
\begin{equation}
\label{eq:TDCHR}
\lambda_{ab}=2\left\{1-\Phi\left(\sqrt{\gamma_{ab}}/2\right)\right\},\qquad a,b\in V.
\end{equation}

From Proposition~2.1 of \citet{A21}, $G_{M}$ is also a H\"{u}sler--Reiss extreme value distribution parameterized by the variogram matrix $\bm{\Gamma}_{M}=(\gamma^{M}_{uv})_{u,v\in V}$ with elements
\begin{align}
\label{eq:Gamma_M}
    \gamma^{M}_{uv} = \sum_{e\in \pth{u}{v}} \gamma_{e}.
\end{align}
Thus $\lambda_{ab}^{M}=\lambda_{ab}$ for $(a,b)\in E$ and 
\begin{equation*}
    \lambda_{ab}^{M}=2\left\{1-\Phi\left(2^{-1}\sqrt{\gamma_{ab}^M}\right)\right\}, 
    \qquad (a,b)\notin E. 
\end{equation*}

Consider the $4$-dimensional H\"{u}sler--Reiss max-stable distributions with variogram matrices
\begin{align}
\label{eq:Gamma123}
\bm{\Gamma}_{1}&=
\left(
  \begin{array}{cccc}
    0 & 4 & 4 & 4 \\
    4 & 0 & 8 & 8 \\
    4 & 8 & 0 & 8 \\
    4 & 8 & 8 & 0 \\
  \end{array}
\right),&
\bm{\Gamma}_{2}&=
\left(
  \begin{array}{cccc}
    0 & 4 & 8 & 16 \\
    4 & 0 & 4 & 8 \\
    8 & 4 & 0 & 4 \\
    16 & 8 & 4 & 0 \\
  \end{array}
\right),&
\bm{\Gamma}_{3}&=
\left(
  \begin{array}{cccc}
    0 & 11 & 10 & 15 \\
    11 & 0 & 10 & 4 \\
    10 & 10 & 0 & 10 \\
    15 & 4 & 10 & 0 \\
  \end{array}
\right).
\end{align}
For a tree having $4$ nodes, two types of tree structures, $\tree^{(a)}$ and $\tree^{(b)}$ in Figure~\ref{fig:treestructure}, can be used to create Markov trees. 

\begin{figure}
    \centering
    \begin{tikzpicture}[> = stealth, 
 	shorten > = 0.5pt, 
 	auto,
 	node distance = 3cm,
 	semithick 
 	,scale=0.8,auto=left]
 	\node[circle,fill=cyan!30] (n1) at (-2,-1)	{$v_{a}$};
 	\node[circle,fill=cyan!30] (n2) at (0,-1)  	{$v_{b}$};
 	\node[circle,fill=cyan!30] (n3) at (2,-1) 	{$v_{c}$};
 	\node[circle,fill=cyan!30] (n4) at (4,-1) 	{$v_{d}$};
 	\draw (n1)--(n2);
 	\draw (n2)--(n3);
 	\draw (n3)--(n4);

 	\node[circle,fill=cyan!30] (n1) at (6,0)	{$v_{b}$};
 	\node[circle,fill=cyan!30] (n2) at (8,0)  	{$v_{a}$};
 	\node[circle,fill=cyan!30] (n3) at (10,0) 	{$v_{c}$};
 	\node[circle,fill=cyan!30] (n4) at (8,-2) 	{$v_{d}$};
 	\draw (n1)--(n2);
 	\draw (n2)--(n3);
 	\draw (n2)--(n4);
 	\end{tikzpicture}
    \caption{Structures $\tree^{(a)}$ (left) and $\tree^{(b)}$ (right) for a tree with $4$ nodes.}
    \label{fig:treestructure}
\end{figure}

Given a tree, we can construct Markov trees and obtain the variogram matrix of $G_{M}$ through \eqref{eq:Gamma_M}. Table~\ref{tab:4HR_SDH} presents the value of $D_{\tree}$ defined in \eqref{eq:D} of the constructed models corresponding to different tree structures. As was expected, the value of $D_{\tree}$ is zero if $G$ is a tree-structured max-stable distribution itself and the model $G_{M}$ is constructed based on the real tree structure of $G$: this is the case for the H\"{u}sler--Reiss distribution with variogram matrix $\bm{\Gamma}_{1}$. Actually, although it is not the case in general, the converse statement also holds for H\"{u}sler--Reiss distributions: if $G$ is a H\"{u}sler--Reiss distribution and $D_{\tree}=0$, then we have $G=G_{M}$. To see this, note that the variogram matrix of a H\"{u}sler--Reiss distribution can be uniquely recovered from its pairwise upper tail dependence coefficients through~\eqref{eq:TDCHR}. Since $G_M$ is also a H\"{u}sler--Reiss distribution and the assumption $D_{\tree}=0$ implies $\lambda_{ab}=\lambda_{ab}^M$ for all $a,b\in V$, the distributions $G$ and $G_M$ share the same variogram matrix and are therefore equal.

For a non-tree structured H\"{u}sler--Reiss distribution, the tree which minimizes $D_{\tree}$ does not necessarily have the maximal sum of upper tail dependence coefficients along its edges, see the 4-variate H\"{u}sler--Reiss distribution with variogram matrix $\Gamma_3$. Still, we can show that for an arbitrary trivariate H\"{u}sler--Reiss distribution $G$, the set of maximum spanning trees with upper tail dependence coefficients as edge weights (see the definition of maximum dependence tree in~\eqref{eq:tree_star}) is exactly the set of trees which minimize the approximation error $D_{\tree}$. For more details, see the discussion in \mdf{Section~S1 of the Supplementary Material}.
\end{example}

\begin{example}[Asymmetric logistic model]
\label{eg:AsyLogZero}
We consider a special case of the asymmetric logistic model of \citet{T90}. For a parameter vector $\bm{\Psi}=(\psi_{v}, v\in V) \in [0, 1]^d$, let
\begin{equation}
\label{eq:AsyLogistDF}
G(\bm{x})=\exp\left\{-\sum_{u\in V}\frac{1-\psi_{u}}{x_{u}}-\max_{u\in V}\left(\frac{\psi_{u}}{x_{u}}\right)\right\},\qquad \bm{x}\in (0,\infty]^{d}.
\end{equation}
It appears as the distribution function of the max-linear model $\bm{Z}=(Z_{v},v\in V)$ with
\begin{equation}
    \label{eq:ML}
    Z_{v} = \max \left\{ \psi_{v} \eps, (1-\psi_{v}) \eps_{v} \right\},
\end{equation}
where $\eps$ and $\eps_{v}$, for $v\in V$, are independent unit-Fr\'{e}chet random variables. For $U\subseteq{V}$, the stable tail dependence function of $(Z_{v},v\in U)$ is
\begin{equation*}
    \ell_{U}(\bm{x}_{U})=\sum_{u\in U}(1-\psi_{u})x_{u}+\max_{u\in U}\left(\psi_{u} x_{u}\right),
\end{equation*}
which implies that
\begin{equation}
\label{eq:TDCAL}
    \lambda_{ab}=\min(\psi_{a},\psi_{b}), \qquad a,b\in V, \quad a\neq b.
\end{equation}

By \eqref{eq:derivation}, the distribution of $M_{ab}$ for $(a, b) \in E$ is discrete with at most two atoms:
\begin{align}
\label{eq:increase_ML}
\pr(M_{ab}=0)&=1-\psi_{a}, & \pr\left(M_{ab}=\frac{\psi_{b}}{\psi_{a}}\right)&=\psi_{a}.
\end{align}
Write $V_{i}=\{i,\ldots,d\}$ and $e=(e_{p},e_{s})$ for $e\in E$. 
By Corollary~\ref{pro:GM_Markov} and the maximum--minimums identity, the stable tail dependence function of $G_{M}$ is 
\begin{align}
\label{eq:stdf_G_M_ML}
\nonumber
\ell^{M}(\bm{y})
=& \sum_{i=1}^{d} \psi_{i}^{-1} \sum_{\emptyset \ne U\subseteq V_{i}} (-1)^{(|U|+1)} \left(\prod_{e\in \bigcup_{j\in U}\pth{i}{j}} \psi_{e_{p}}\right)  \left\{\min_{j\in U}\left(y_{j} \psi_{j}\right)\right\}\\
&\mbox{} - \sum_{i=2}^{d} \psi_{i-1}^{-1} \sum_{\emptyset \ne U\subseteq V_{i}} (-1)^{(|U|+1)} \left(\prod_{e\in \bigcup_{j\in U}\pth{(i-1)}{j}} \psi_{e_{p}}\right) \left\{\min_{j\in U}\left(y_{j} \psi_{j}\right)\right\};
\end{align}
see Appendix~\ref{subsec:LemProf} for details. Moreover, for $a,b\in V$, $a\neq b$ and $(a,b)\notin E$, 
it follows from Proposition~\ref{pro:bi_TDC} and the distribution of $M_{e}$ for $e\in E$ in~\eqref{eq:increase_ML} that
\begin{equation}
\label{eq:TDC_G_M_ML}
\lambda_{ab}^{M}=\E\left\{\min\left(1,\prod_{e\in \pth{a}{b}}M_{e}\right)\right\}
=\min\left(\prod_{e\in \pth{a}{b}}\psi_{e_{p}}, \prod_{e\in \pth{a}{b}}\psi_{e_{s}}\right).
\end{equation}
For the $4$-variate asymmetric logistic models with parameter vectors $\bm{\Psi}_{1}=(0.8,0.7,0.4,0.2)$, $\bm{\Psi}_{2}=(0.5,0.4,0.3,0.2)$ and $\bm{\Psi}_{3}=(0.5,1,1,0.2)$, the numeric results of $D_{\tree}$ in \eqref{eq:D} 
for different trees are given in Table~\ref{tab:4HR_SDH}.

As shown in \mdf{Lemma~S2 of the Supplementary Material}, if $G$ is an asymmetric logistic distribution in~\eqref{eq:AsyLogistDF}, then we have $D_{\tree}=0$ if and only if $G=G_M$. Moreover, for a general asymmetric logistic distribution, the star-shaped tree, denoted by $\tree^{\circledast}$, whose central node $v$ is the one having the largest coefficient $\psi_v$ is one of the maximum tail dependence trees and is also a tree that minimizes $D_{\tree}$. Note that in this example $\tree^{\circledast}$ is not necessarily the unique tree minimizing $D_{\tree}$ and the set of trees maximizing the sum of upper tail dependence coefficients along their edges is not a singleton in general either, as shown by the distribution with parameter vector $\bm{\Psi}_3$ (see Table~\ref{tab:4HR_SDH}). But for an arbitrary asymmetric logistic distribution $G$ given by~\eqref{eq:AsyLogistDF}, the tree minimizes $D_{\tree}$ must be one of the maximum-spanning tree with upper tail dependence coefficients as edge weights. A detailed discussion can be found in \mdf{Section~S1 of the Supplementary Material}.

 \begin{sidewaystable}[!htbp] \centering 
 \caption{The values of $D_{\tree}$ in \eqref{eq:D} and $S_{\tree_{\lambda}}$ in \eqref{eq:tree_star} for the $4$-variate H\"{u}sler--Reiss model in Example~\ref{eg:HR} and the $4$-variate asymmetric logistic model in Example~\ref{eg:AsyLogZero} based on trees with structures $\tree^{(a)}$ and $\tree^{(b)}$ in Figure~\ref{fig:treestructure}, where $(i,j,k,l)$ in the second column denotes the tree $\tree$ with $v_a=i$, $v_b=j$, $v_c=k$ and $v_d=l$ in tree structure $\tree^{(a)}$ or $\tree^{(b)}$.}
   \label{tab:4HR_SDH} 
 \begin{tabularx}{0.9\textwidth}{ccXccccccXcccccc} 
\toprule
  \multirow{3}{*}{Structure} & \multirow{3}{*}{Trees} &&
  \multicolumn{6}{c}{H\"{u}sler--Reiss} && \multicolumn{6}{c}{Asymmetric logistic}\cr
  \cmidrule(lr){4-9} \cmidrule(lr){11-16}
   & &&
  \multicolumn{2}{c}{$\bm{\Gamma}_{1}$} & \multicolumn{2}{c}{$\bm{\Gamma}_{2}$} & \multicolumn{2}{c}{$\bm{\Gamma}_{3}$} && \multicolumn{2}{c}{$\bm{\Psi}_{1}$} & \multicolumn{2}{c}{$\bm{\Psi}_{2}$}  & \multicolumn{2}{c}{$\bm{\Psi}_{3}$} \cr 
  \cmidrule(lr){4-5}  \cmidrule(lr){6-7} \cmidrule(lr){8-9} \cmidrule(lr){11-12} \cmidrule(lr){13-14}  \cmidrule(lr){15-16} 
  & && $S_{\tree_{\lambda}}$ & $D_{\tree}$ & $S_{\tree_{\lambda}}$ & $D_{\tree}$  & $S_{\tree_{\lambda}}$ & $D_{\tree}$ && $S_{\tree_{\lambda}}$ & $D_{\tree}$ & $S_{\tree_{\lambda}}$ & $D_{\tree}$ & $S_{\tree_{\lambda}}$ & $D_{\tree}$\cr
  \midrule
 \multirow{12}{*}{$\tree^{(a)}$} 
 & $(1,2,3,4)$ && $0.632$ & $0.638$ & $0.952$ & $0.038$ & $0.686$ & $0.194$ && $1.3$ & $0.376$ & $0.9$ & $0.490$ & $1.7$  &  $0$  \\
 & $(1,2,4,3)$ && $0.632$ & $0.638$ & $0.792$ & $0.384$ & $0.553$ & $0.200$ && $1.1$ & $0.716$ & $0.8$ & $0.630$ & $0.9$  &  $1.2$  \\
 & $(1,3,2,4)$ && $0.632$ & $0.638$ & $0.632$ & $0.488$ & $0.138$ & $0.699$ && $1.0$ & $0.568$ & $0.8$ & $0.560$ & $1.7$  &  $0$  \\
 & $(1,3,4,2)$ && $0.632$ & $0.638$ & $0.632$ & $0.564$ & $0.268$ & $0.514$ && $0.8$ & $1.076$ & $0.7$ & $0.750$ & $0.9$  &  $1.2$  \\
 & $(1,4,2,3)$ && $0.632$ & $0.638$ & $0.520$ & $0.686$ & $0.300$ & $0.469$ && $0.8$ & $0.908$ & $0.7$ & $0.700$ & $1.4$  &  $0.8$  \\
 & $(1,4,3,2)$ && $0.632$ & $0.638$ & $0.680$ & $0.435$ & $0.242$ & $0.441$ && $0.8$ & $1.076$ & $0.7$ & $0.750$ & $1.4$  &  $0.8$  \\
 & $(2,1,3,4)$ && $0.792$ & $0.346$ & $0.792$ & $0.384$ & $0.287$ & $0.313$ && $1.3$ & $0.344$ & $0.9$ & $0.466$ & $1.2$  &  $0.6$ \\ 
 & $(2,1,4,3)$ && $0.792$ & $0.346$ & $0.680$ & $0.567$ & $0.288$ & $0.182$ && $1.1$ & $0.704$ & $0.8$ & $0.616$ & $0.9$  &  $1.4$ \\
 & $(2,3,1,4)$ && $0.792$ & $0.346$ & $0.520$ & $0.686$ & $0.865$ & $0.156$ && $1.0$ & $0.548$ & $0.8$ & $0.540$ & $1.7$  &  $0.2$ \\
 & $(2,4,1,3)$ && $0.792$ & $0.346$ & $0.360$ & $0.919$ & $0.333$ & $0.220$ && $0.8$ & $0.888$ & $0.7$ & $0.680$ & $0.9$  &  $1.4$ \\
 & $(3,2,1,4)$ && $0.792$ & $0.346$ & $0.680$ & $0.435$ & $0.230$ & $0.862$ && $1.3$ & $0.296$ & $0.9$ & $0.450$ & $1.7$  &  $0.2$\\
 & $(3,1,2,4)$ && $0.792$ & $0.346$ & $0.632$ & $0.564$ & $1.087$ & $0.513$ && $1.3$ & $0.284$ & $0.9$ & $0.446$ & $1.2$  &  $0.6$ \\
\midrule
 \multirow{4}{*}{$\tree^{(b)}$} 
 & $(1,2,3,4)$ && $0.952$ & $0$     & $0.520$ & $0.669$ & $0.549$ & $0.245$ && $1.3$ & $0.160$ & $0.9$ & $0.350$ & $1.2$  &   $0.7$ \\
 & $(2,3,4,1)$ && $0.632$ & $0.580$ & $0.792$ & $0.272$ & $0.269$ & $1.007$ && $1.3$ & $0.240$ & $0.9$ & $0.420$ & $1.7$  &  $0$ \\ 
 & $(3,4,1,2)$ && $0.632$ & $0.580$ & $0.792$ & $0.272$ & $0.828$ & $0.582$ && $1.0$ & $0.660$ & $0.8$ & $0.560$ & $1.7$  &  $0$  \\
 & $(4,1,2,3)$ && $0.632$ & $0.580$ & $0.520$ & $0.669$ & $0.357$ & $0.479$ && $0.6$ & $1.200$ & $0.6$ & $0.800$ & $0.6$  &  $1.6$  \\
 \bottomrule
 \end{tabularx} 
 \end{sidewaystable} 
\end{example}

\section{Estimation and modelling}
\label{sec:estim}

In this section, we assume the data are drawn independently from a distribution that, up to increasing marginal transformations, is attracted by a simple max-stable distribution $G$, not necessarily tree-structured. Given such data, we propose a three-step procedure to construct a tree-structured dependence model for extremes. The three steps were given in the introduction and are detailed below.

\subsection{Structure learning for the Markov tree}
\label{subsec:LOTS}

To construct the model, a tree structure is required. So the first crucial step is to choose an appropriate tree structure from all possible candidates. In the framework of structure learning for graphical models, a commonly used way is to give each edge a weight in the complete graph and then choose the tree with maximum or minimum sum of weights along its edges; see for instance \citet{C68}, \citet{H20} and \citet{EV20}. We follow this method and try to find the tree structure which can retain the dependence of $G$ as much as possible. 

An intuitive way to define the edge weight of a pair of variables is by the value of a bivariate dependence measure.
Here we consider the upper tail dependence coefficient $\lambda$ defined in \eqref{eq:TDC}.
Another possible edge weight, advocated in \citet{EV20}, is the extremal variogram. In that paper, the upper tail dependence coefficient is used as edge weight as well but under the name `extremal correlation': they propose to find the minimal spanning tree with respect to $\rho = -\log(\lambda)$. This corresponds to maximizing $\sum_{(a,b) \in E} \log(\lambda_{ab}) = \log \left( \prod_{(a,b) \in E} \lambda_{ab} \right)$ and thus the product of the tail dependence coefficients rather than their sum which we used in the following.
Note that, in contrast to \citet{EV20}, we do not assume that $G$ is tree-structured.

Let $\omega_{ab}$ denote the weight associated to $(a,b) \in V \times V$
. We define the set of maximum (tail) dependence trees by 
 \begin{equation}
 \label{eq:tree_star}
    \bm{T}_{\omega}^{\star} = \argmax_{\tree} S_{\tree_{\omega}} \quad 
    \text{where} ~ S_{\tree_{\omega}} = \sum_{(a, b) \in E} \omega_{ab} ~ \text{for} ~ \tree = (V, E),
\end{equation}   
the maximum being taken over all possible trees over the set $V$, 
and with $\omega_{ab}=\lambda_{ab}$ the tail dependence coefficient
of the data-generating distribution for the pair of variables indexed by $(a, b) \in V^2$. Similarly to the definition of $D_{\tree}$, only one of $\omega_{ab}$ and $\omega_{ba}$ for $(a,b)\in E$ is added to $S_{\tree_{\omega}}$ since they are equal. The notation $\tree_{\omega}^{\star}$ will be used for a generic member of the set $\bm{T}_{\omega}^{\star}$, which will often be a singleton.

We clarify again that we do not assume $G$ is tree-structured throughout the construction of our model. However, for the special case where $G$ is a tree-based max-stable distribution, following the same line of proof for Proposition 5 in~\citet{EV20}, a result similar to that proposition can be obtained based on the inequality in Proposition~\ref{pro:TDC_ieq}. If $G$ has a dependence structure linked to a tree $\tree$, then $\tree$ belongs to the set of maximum dependence trees $\bm{T}_{\lambda}^{\star}$ weighted by the upper tail dependence coefficients. If, moreover, the latter set is a singleton, then $\tree$ can be identified as the unique maximum dependence tree. A sufficient condition is that the inequality on the right-hand side of Proposition~\ref{pro:TDC_ieq} is strict.

\begin{proposition}
\label{pro:uniqueness}
Let $G$ be a max-stable distribution with dependence structure linked to a tree $\tree=(V,E)$ as in Definition~\ref{def:maxstabletree}. Then the tree $\tree$ belongs to $\bm{T}_{\lambda}^{\star}$ in \eqref{eq:tree_star} with edge weights equal to the bivariate upper tail dependence coefficients of $G$, that is, $\sum_{(a, b) \in E} \lambda_{ab} \ge \sum_{(a', b') \in E'} \lambda_{a'b'}$ for any tree $\tree' = (V, E')$. The set $\bm{T}_{\lambda}^{\star}$ is a singleton and thus equal to $\{ \tree \}$ if for every triple of distinct nodes $a, u, b \in V$ with $u$ on the path between $a$ and $b$ we have $\lambda_{ab} < \min(\lambda_{au}, \lambda_{ub})$; in this case, $\tree$ is the unique tree corresponding to $G$.
\end{proposition}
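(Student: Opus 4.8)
The plan is to reduce the statement to a standard exchange argument for maximum-weight spanning trees, fed with the path inequalities for tail dependence coefficients from Proposition~\ref{pro:TDC_ieq} and the remark following it. The one fact I would extract first is this: because $G$ has a dependence structure linked to $\tree=(V,E)$ in the sense of Definition~\ref{def:maxstabletree}, Proposition~\ref{pro:TDC_ieq} and the recursion in the remark after it give, for every pair of distinct nodes $a,b\in V$, the inequality $\lambda_{ab}\le\min_{g\in\pth{a}{b}}\lambda_g$, where $\pth{a}{b}$ denotes the unique path in $\tree$ joining $a$ and $b$; in words, along the tree path between two nodes the edge weights never drop below the weight one would assign directly to the pair. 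This is exactly the path-optimality condition characterising maximum-weight spanning trees, but I would spell out the short exchange proof so that the uniqueness part can reuse it.

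For the membership claim, fix a maximum-weight spanning tree $\tree^\star=(V,E^\star)$ chosen so as to share as many edges with $\tree$ as possible, and suppose $\tree^\star\neq\tree$, so that there is an edge $e^\star=(a,b)\in E^\star\setminus E$. Deleting $e^\star$ disconnects $\tree^\star$ into components $A\ni a$ and $B\ni b$; since the tree path $\pth{a}{b}$ runs from $A$ to $B$ it contains an edge $f$ with one endpoint in $A$ and one in $B$. As $f\in E$ and $e^\star\notin E$ we have $f\neq e^\star$, hence $f\notin E^\star$ (the only edge of $\tree^\star$ crossing the cut $(A,B)$ is $e^\star$), and $(E^\star\setminus\{e^\star\})\cup\{f\}$ is again a spanning tree. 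Its total weight differs from that of $\tree^\star$ by $\lambda_f-\lambda_{ab}\ge 0$ by the inequality above, so it is again of maximum weight, yet it shares one more edge with $\tree$, contradicting the choice of $\tree^\star$. Hence $\tree^\star=\tree$, i.e.\ $\tree\in\bm{T}_\lambda^\star$.

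For uniqueness, assume the strict hypothesis $\lambda_{ab}<\min(\lambda_{au},\lambda_{ub})$ for every triple with $u$ on the path between $a$ and $b$, take any $\tree'=(V,E')\in\bm{T}_\lambda^\star$, and suppose $\tree'\neq\tree$. Pick $e'=(a,b)\in E'\setminus E$; since $(a,b)\notin E$ the path $\pth{a}{b}$ in $\tree$ has at least two edges. The crucial point is that now \emph{every} edge $g$ on $\pth{a}{b}$ satisfies $\lambda_g>\lambda_{ab}$: because the path has at least three vertices, $g$ has an endpoint $w$ lying on $\pth{a}{b}$ with $w\notin\{a,b\}$; writing $\pth{a}{b}=\pth{a}{w}\cup\pth{w}{b}$, the edge $g$ lies in one of these two subpaths, say $g\in\pth{a}{w}$, and then the inequality from the first paragraph together with the strict hypothesis applied to the triple $(a,w,b)$ yields $\lambda_g\ge\lambda_{aw}>\lambda_{ab}$, the other case being symmetric. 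Running the same cut-and-exchange step as in the membership part with $\tree'$ in place of $\tree^\star$, the edge $f$ produced lies on $\pth{a}{b}$, so $\lambda_f>\lambda_{ab}=\lambda_{e'}$, and the swapped spanning tree $(E'\setminus\{e'\})\cup\{f\}$ has strictly larger total weight than $\tree'$, contradicting $\tree'\in\bm{T}_\lambda^\star$. Therefore $\tree'=\tree$ and $\bm{T}_\lambda^\star=\{\tree\}$.

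I do not expect a genuine obstacle: the mathematical substance sits entirely in Proposition~\ref{pro:TDC_ieq}. The points needing a little care are purely combinatorial, namely verifying that an edge on a path of length at least two always has an endpoint strictly between the path's ends — this is what upgrades the ``triangle'' strict inequality into a strict inequality for every path edge — and making the exchange precise, i.e.\ that the chosen $f$ is not in the other tree and that the swap yields a spanning tree. Both are elementary and I would dispatch them in a line or two, recalling the standard fundamental-cycle/cut lemma for spanning trees if convenient.
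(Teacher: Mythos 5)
Your proof is correct, and it reaches the conclusion by a genuinely different combinatorial route than the paper. The paper invokes Hall's marriage theorem (following Proposition~5 of \citet{EV20}) to produce a bijection $f:E\to E'$ sending each edge $(u,v)\in E$ to an edge $(u',v')\in E'$ whose tree path $\pth{u'}{v'}$ contains $(u,v)$; the path inequality of Proposition~\ref{pro:TDC_ieq} then gives $\lambda_{uv}\ge\lambda_{f(u,v)}$ edge by edge, and summing over the bijection yields both the membership claim and, under the strict hypothesis, uniqueness in a single stroke (if $\tree'\ne\tree$ some edge must move, and for that edge the inequality is strict). You instead run the classical cut-and-exchange argument for maximum-weight spanning trees, choosing a maximizer sharing as many edges as possible with $\tree$ and swapping a cut-crossing edge of $\pth{a}{b}$ for the offending edge $(a,b)$. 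Both proofs rest on exactly the same analytical input, namely $\lambda_{ab}\le\min_{e\in\pth{a}{b}}\lambda_e$ from the remark after Proposition~\ref{pro:TDC_ieq}, and your upgrade of the strict triangle hypothesis to a strict inequality for \emph{every} edge of a path of length at least two (via an interior endpoint $w$) is the same observation the paper uses implicitly when it asserts $\lambda_{uv}>\lambda_{u'v'}$ for a displaced edge. What your version buys is self-containedness — no appeal to Hall's theorem or to an external reference — at the cost of a slightly longer argument and the extremal choice of $\tree^\star$; what the paper's version buys is brevity and a uniform treatment of membership and uniqueness through one bijection. Your combinatorial housekeeping (the swapped edge set is a spanning tree, the swap strictly increases the edge overlap with $\tree$, an edge of a path with at least two edges has an interior endpoint) is all accurate.
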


By way of example, consider the $4$-variate H\"{u}sler--Reiss distribution $G$ with variogram matrix $\bm{\Gamma}_{1}$ in \eqref{eq:Gamma123}. This $G$ is structured along the tree $\tree^{(b)}$ on the right-hand side of Figure~\ref{fig:treestructure} with $(a,b,c,d)=(1,2,3,4)$. According to Table~\ref{tab:4HR_SDH}, the value of $S_{\tree_{\lambda}}$ is maximized uniquely at the true tree, with value $0.952$. In general, however, the collection of trees $\bm{T}_{\lambda}^{\star}$ in Proposition~\ref{pro:uniqueness} need not be singleton, not even if $0 < \lambda_{ab} < 1$ for all $(a, b) \in V \times V$ with $a \ne b$. For a counter-example, see the trivariate regularly varying Markov tree after Proposition~\ref{pro:TDC_ieq}, for which all pairwise tail dependence coefficients are equal to $1/2$ and thus all trees have the same sum of edge weights.

Given an arbitrary max-stable distribution $G$, with relation to the discussion in the fourth paragraph of Section~\ref{subsec:ApproErr}, we would also wonder whether the converse side is true: the tree $\tree$ that minimizes $D_{\tree}$, does it necessarily belong to $\bm{T}_\lambda^{\star}$? We know this is the case for a tree-structured $G$ from Proposition~\ref{pro:uniqueness}. However, the answer is no in general as illustrated by the $4$-variate H\"{u}sler--Reiss distribution with variogram matrix $\bm{\Gamma}_3$ in Example~\ref{eg:HR} and Table~\ref{tab:4HR_SDH}: the tree 2--3--1--4 with line tree structure $\tree^{(a)}$ minimizes $D_{\tree}$ but does not belong to $\bm{T}_\lambda^{\star}$. But as we have discussed in Section~\ref{sec:Examples} (see also \mdf{Lemma~S1 and~S3 in Section~S1 of the Supplementary Material}), the trivariate H\"{u}sler--Reiss distributions and all the asymptotic logistic models are examples where this property holds. Somehow, it provides a kind of support and motivation for us to use the maximum dependence tree as the basis for model construction.

Consider again the general case where $G$ is not necessarily tree-structured.
In practice, the true max-stable distribution $G$ is unknown and the edge weights must be estimated from the data. Assume we observe independent copies $\bm{\xi}^{i}=(\xi_{i,v},v\in V)$, $i=1,\ldots, n$, of the $d$-variate random vector $\bm{\xi}=(\xi_{v},v\in V)$ with marginal distributions $F_{v}$ for $v\in V$ such that the standardized vector $(1/(1-F_{v}(\xi_{v})), v \in V)$ belongs to the max-domain of attraction of $G$. Equivalently, we have relation \eqref{eq:stdf} for $\bm{\xi}$ with $\ell$ equal to the stable tail dependence function of the random vector $\bm{Z}$ with distribution function $G$. Note that $\bm{\xi}$ does not necessarily belong to the domain of attraction of a multivariate max-stable distribution since the margins are not required to be attracted by univariate extreme value distributions. 
From Corollary~\ref{pro:G_M}, the Markov tree $\bm{Y}$ constructed in Definition~\ref{def:Y} has the same stable tail dependence function as the random vector $\bm{Z}_{M}$ with distribution $G_{M}$ in Corollary~\ref{pro:GM_Markov} since $\bm{Y}\in D(G_{M})$. Moreover, for all $(a,b)\in E$, the pairs $(Y_{a},Y_{b})$ and $(Z_{a},Z_{b})$ are equal in distribution by construction. Consequently, the stable dependence functions of all adjacent pairs on $\tree$ of $\bm{\xi}$, $\bm{Z}$, $\bm{Y}$ and $\bm{Z}_{M}$ are identical. The same statement holds also for the upper tail dependence coefficients. Note that the increasing component-wise transformations $x_{v} \mapsto 1/(1-F_{v}(x_{v}))$ have no effect on $\lambda$. 
Thus we can estimate the edge weights directly from samples drawn from $\bm{\xi}$.

Let $\I(\cdot)$ denote the indicator function. For the upper tail dependence coefficient defined in \eqref{eq:TDC}, we consider the empirical estimator of $\lambda_{ab}$ for $a,b\in V$ defined by
\begin{equation}
\label{eq:empestimator_TDC}
    \hat{\lambda}_{ab}=\frac{1}{k_{\lambda}}\sum_{i=1}^{n} \I\left\{
        \hat{F}_{na}(\xi_{i,a})\geq 1-\frac{k_{\lambda}}{n}, \,
        \hat{F}_{nb}(\xi_{i,b})\geq 1-\frac{k_{\lambda}}{n}
    \right\},
\end{equation}
where $\hat{F}_{na}(x) = n^{-1} \sum_{j=1}^{n} \I(\xi_{j,a}\le x)$ is the empirical distribution function of $\xi_{a}$ and $k_{\lambda}=k_{\lambda}(n)$ is an intermediate sequence such that $k_{\lambda}\to\infty$ and $k_{\lambda}/n\to 0$ as $n\to\infty$. Its consistency and asymptotic normality were shown, for example, in \citet{SS06}. Consequently, 
we define
 \begin{equation}
 \label{eq:tree_star_est}
    \hat{\bm{T}}_{\omega}^{\star}
    =\argmax_{\tree}\hat{S}_{\tree_{\omega}}
    =\argmax_{\tree=(V,E)}\left(\sum_{(a,b)\in E} \hat{\omega}_{ab}\right)
\end{equation} 
to be the collection of maximum dependence trees with respect to the estimated edge weights $\hat{\omega}_{ab}$, where $\hat{\omega}$ is the estimate of $\omega$.
A generic tree in the set $\hat{\bm{T}}_{\omega}^{\star}$ will be denoted by $\hat{\tree}_{\omega}^{\star}$.

The following proposition is a direct consequence of the consistency of estimates of the upper tail dependence coefficients. It states that, with probability tending to one, the tree estimate $\hat{\tree}_{\lambda}^{\star}$ is an element of $\bm{T}^{\star}_{\lambda}$. If, further, $\bm{T}^{\star}_{\lambda}$ is a singleton with element $\tree_{\lambda}^{\star}$, then $\hat{\tree}_{\lambda}^{\star}$ is a consistent estimator of $\tree_{\lambda}^{\star}$.

\begin{proposition}
\label{pro:consitency_lambda}
Let $\bm{T}^{\star}_{\lambda}$ be the set of maximum dependence trees in~\eqref{eq:tree_star} weighted by the tail dependence coefficients and let $\hat{\bm{T}}_{\lambda}^{\star}$ in \eqref{eq:tree_star_est} be its sample equivalent based on the empirical tail dependence coefficients in \eqref{eq:empestimator_TDC}. If $k_{\lambda}=k_{\lambda}(n)$ is an intermediate sequence such that $k_{\lambda}\to\infty$ and $k_{\lambda}/n\to 0$ as $n\to\infty$, then
\[
    \pr\left(\hat{\bm{T}}_{\lambda}^{\star} \subseteq \bm{T}_{\lambda}^{\star}\right)\to 1, \qquad n\to\infty. 
\]
Moreover, if $\bm{T}^{\star}_{\lambda}$ is a singleton, with unique element $\tree_{\lambda}^{\star}$, then, with probability tending to one, $\hat{\bm{T}}_{\lambda}^{\star}$ is a singleton too and its unique element $\hat{\tree}_{\lambda}^{\star}$ satisfies
\[
    \pr\left(\hat{\tree}_{\lambda}^{\star}=\tree_{\lambda}^{\star}\right)\to 1, \qquad n\to\infty.
\]
\end{proposition}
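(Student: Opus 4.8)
The plan is to treat this as a standard consistency-of-argmax argument, made painless by the fact that there are only finitely many trees on the vertex set $V$ (exactly $d^{d-2}$ by Cayley's formula) and that every such tree has exactly $d-1$ edges. Two ingredients are needed: (i) the consistency of the empirical tail dependence coefficient, $\hat{\lambda}_{ab}\to\lambda_{ab}$ in probability as $n\to\infty$ for each pair $(a,b)$, which holds under the stated intermediate-sequence condition on $k_\lambda$ by the results of \citet{SS06} (the relevant bivariate tail limit exists because $(1/(1-F_v(\xi_v)), v\in V)$ lies in the max-domain of attraction of $G$); and (ii) a gap argument separating the score of the optimal trees from the rest.

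First I would record that, since the index set $\{(a,b): a,b\in V,\ a\ne b\}$ is finite, (i) upgrades to $\max_{a\ne b}|\hat{\lambda}_{ab}-\lambda_{ab}|\to 0$ in probability, and hence $\max_{\tree}|\hat{S}_{\tree_\lambda}-S_{\tree_\lambda}|\to 0$ in probability as well, using $|\hat{S}_{\tree_\lambda}-S_{\tree_\lambda}|\le (d-1)\max_{a\ne b}|\hat{\lambda}_{ab}-\lambda_{ab}|$ for every tree $\tree$. Next I would define the score gap $\Delta=\min_{\tree\notin\bm{T}_\lambda^\star}\bigl(S^\star-S_{\tree_\lambda}\bigr)$, where $S^\star=\max_{\tree}S_{\tree_\lambda}$; if every tree belongs to $\bm{T}_\lambda^\star$ the first assertion is trivially true, so we may assume the minimum is over a nonempty finite set and $\Delta>0$. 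On the event $\{\max_{\tree}|\hat{S}_{\tree_\lambda}-S_{\tree_\lambda}|<\Delta/2\}$, whose probability tends to one, any $\tree\notin\bm{T}_\lambda^\star$ satisfies $\hat{S}_{\tree_\lambda}<S_{\tree_\lambda}+\Delta/2\le S^\star-\Delta/2<\hat{S}_{\tree'_\lambda}$ for every $\tree'\in\bm{T}_\lambda^\star$, so $\tree\notin\hat{\bm{T}}_\lambda^\star$; that is, $\hat{\bm{T}}_\lambda^\star\subseteq\bm{T}_\lambda^\star$, which yields the first display.

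For the singleton case, $\hat{\bm{T}}_\lambda^\star$ is always nonempty, being an argmax over a nonempty finite collection of trees, so $\hat{\bm{T}}_\lambda^\star\subseteq\bm{T}_\lambda^\star=\{\tree_\lambda^\star\}$ forces $\hat{\bm{T}}_\lambda^\star=\{\tree_\lambda^\star\}$, and the conclusion $\pr(\hat{\tree}_\lambda^\star=\tree_\lambda^\star)\to 1$ follows. There is no serious obstacle here; the only points requiring care are the correct invocation of the asymptotic behaviour of $\hat{\lambda}_{ab}$ from \citet{SS06} together with the role of the intermediate sequence $k_\lambda$, and the bookkeeping of the degenerate case in which $\bm{T}_\lambda^\star$ contains all trees (e.g.\ the trivariate example following Proposition~\ref{pro:TDC_ieq}, where all pairwise coefficients equal $1/2$). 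Everything else is a finite-set perturbation argument.
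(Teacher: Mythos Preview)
Your proposal is correct and follows essentially the same route as the paper: both use the consistency of the empirical tail dependence coefficients $\hat{\lambda}_{ab}$ together with the finiteness of the set of trees to run a perturbation/gap argument, and both handle the singleton case via the nonemptiness of the argmax. The only cosmetic difference is that you package the argument with a uniform gap $\Delta$ and a single high-probability event, whereas the paper works tree-by-tree, showing $\hat{S}_{\tree'_\lambda}-\hat{S}_{\tree^\star_\lambda}=S_{\tree'_\lambda}-S_{\tree^\star_\lambda}+o_p(1)$ for each suboptimal $\tree'$ and then taking a finite union.
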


We stress that Proposition~\ref{pro:consitency_lambda} holds for a general max-stable distribution $G$ which can be either tree-based or not. In case $G$ has a dependence structure linked to the tree $\tree$ and the set $\bm{T}_{\lambda}^{\star}$ is a singleton, then there is a true tree structure $\tree$ and by Proposition~\ref{pro:uniqueness} and~\ref{pro:consitency_lambda} we know that the sample-based tree estimator $\hat{\tree}_{\lambda}$ is a consistent estimator of the true tree $\tree$. If $G$ is not tree-based, we then are estimating the elements of $\bm{T}_{\lambda}^{\star}$ based on which we proposed to construct the max-stable distribution $G_M$ and in this case Proposition~\ref{pro:consitency_lambda} ensures that the tree estimate we obtain is, with probability tending to one, an element of $\bm{T}_{\lambda}^{\star}$. If, further, $\bm{T}_{\lambda}^{\star}$ is a singleton with element $\tree_{\lambda}^{\star}$, then $\hat{\tree}_{\lambda}^{\star}$ is a consistent estimator of $\tree_{\lambda}^{\star}$.

Once all edge weights $\hat{\omega}_{ab}$ have been computed, Prim's algorithm is used to find a global optimizer $\hat{\tree}_{\omega}^{\star}$ in \eqref{eq:tree_star_est}, the solution being unique if all edge weights are distinct. The time complexity of this procedure to find a tree with $d$ nodes is $\mathcal{O}(d^{2})$, see \citet{P57} and \citet[Theorem~12.2]{P98}. The algorithm is given in \mdf{Section~S2 of the Supplementary Material}.

\begin{remark}
Another possible edge weight, although it is not commonly used in the field of extreme value theory, is Kendalls' $\tau$. The upper tail dependence coefficient of a random vector is the same as the one of the max-stable distribution to which it is attracted, and it provides motivation for using this coefficient when given data sampled from a distribution in the domain of attraction of a max-stable distribution. For Kendall's $\tau$, however, this argument does not hold: its value can differ considerably between the data-generating distribution and the max-stable distribution to which it is attracted. Nevertheless, the method based on Kendall's $\tau$ performs well in the simulations and in the case study in Sections~\ref{sec:Sim} and~\ref{sec:app}, and this is why we discuss it here too. 

For $(a,b)\in E$, the value $\tau_{ab}$ for $(\xi_a,\xi_b)$ can be estimated by the empirical estimator
\begin{equation*}
\hat{\tau}_{ab}=\frac{2}{n(n-1)} \sum_{1\le i<j\le n} \left\{\sgn(\xi_{i,a}-\xi_{j,a}) \sgn(\xi_{i,b}-\xi_{j,b})\right\},    
\end{equation*}
which, by classical theory of $U$-statistics, is consistent and asymptotically normal as $n\to\infty$; see, e.g., \citet[Theorem~12.3 and Example~12.5]{V00}. Based on the estimates of $\tau_{ab}$, we can also define the collection of maximum dependence trees $\hat{\bm{T}}_{\tau}^{\star}$ with respect to the estimated edge weights $\hat{\tau}_{ab}$ according to~\eqref{eq:tree_star_est}.

However, as also kindly pointed out by the two reviewers and the associated editor, this approach cannot work in great generality, especially for data sets where the dependence structure in the center differs greatly from that in the tails.  
\end{remark}

\subsection{Estimation of pairwise dependence structures}
\label{subsec:EstSTDF}

Once a tree structure $(V, E)$ is specified, we are able to construct a Markov tree according to Definition~\ref{def:Y}. Actually, there may be several maximum dependence trees in the set $\bm{T}_{\lambda}^{\star}$ if it is not a singleton. However, 
we can only obtain one estimate in practice and by Proposition~\ref{pro:consitency_lambda}, it would be an element of the set $\bm{T}_{\lambda}^{\star}$ with probability tending to one, thus equal to a fixed member in $\bm{T}_{\lambda}^{\star}$. For simplicity, we assume henceforth that the tree is fixed and non-random.

Since $G_{M}$ in Corollary~\ref{pro:GM_Markov} is determined by the tree and the distributions of $M_{e}$ with $e\in E$, it is sufficient to estimate the bivariate distributions of the adjacent pairs. Ultimately, this comes down to the estimation of bivariate dependence structures, as the univariate marginal distributions are standardized. We will estimate the bivariate stable tail dependence function $\ell_{ab}$ of each adjacent pair $(Z_{a},Z_{b})$ with $(a,b)\in E$, which also equals the stable tail dependence function $\ell_{ab}^{M}$ of $(Y_{a},Y_{b})$ and of $(Z_{a}^{M},Z_{b}^{M})$ as discussed in Subsection~\ref{subsec:LOTS}.

The estimation of the stable tail dependence function dates back to \citet{H92} and 
\citet{D98} and has been extensively investigated in the literature, see for instance \citet{P08}, \citet{EKS12}, \citet{B16}, \citet{E16}, \citet{K18}, and the references therein. 

In our framework, for each $e=(a,b)\in E$, we assume that the stable tail dependence function $\ell_e$ of $(Z_{a},Z_{b})$ belongs to a parametric family $\{(x,y) \mapsto \ell(x,y;\bm{\beta}_{e}),\bm{\beta}_{e}\in \bm{B}_{e}\}$, where the parameter $\bm{\beta}_{e}=(\beta_{e,1},\ldots,\beta_{e,p_{e}})^{\top}$ belongs to some parameter space $\bm{B}_{e} \subseteq \mathbb{R}^{p_{e}}$ of dimension $p_{e}\ge 1$. The stable tail dependence functions for different pairs are allowed to belong to different parametric families. Assume that the true parameter $\bm{\beta}_{e}^{0}$ of $\ell_e$ belongs to the interior $\bm{B}_{e}^{o}$ of $\bm{B}_{e}$. Let $\bm{\beta}_{0}=\left(\bm{\beta}_{e}^{0}, e\in E\right)$ be the $(\sum_{e\in E}p_{e})\times 1$ column vector of true parameters. 
It is to be noted that Definition~\ref{def:maxstabletree} of a tree-structured extreme-value distribution does not pose any constraints whatsoever on the combination of parametric families and parameter values. This flexibility greatly facilitates model construction and statistical inference.

For each $e=(a,b)\in E$, we estimate the parameter $\bm{\beta}_{e}^{0}$ using only the data in components $a$ and $b$. Define 
\begin{equation*}
    \hat{\bm{\beta}}_{n}=(\hat{\bm{\beta}}_{n,e}, e\in E)
\end{equation*} 
to be the estimator of $\bm{\beta}_{0}$, where $\hat{\bm{\beta}}_{n,e}$ is any consistent estimator of $\bm{\beta}_{e}^{0}$ for $e\in E$. Since the convergence in probability of a sequence of vectors is equivalent to the convergence of every one of the component sequences separately \citep[Theorem~2.7]{V00}, the estimator $\hat{\bm{\beta}}_{n}$ is jointly consistent provided the estimators of $\bm{\beta}_{e}^{0}$ for $e\in E$ are all consistent. The analogous statement for convergence in distribution is, however, not true. Still, we show that the joint estimator is asymptotically normal if the edge-wise estimators have a certain asymptotic expansion in terms of the empirical stable tail dependence function, which is recalled below. The expansion is shared by the estimators proposed in \cite{E08}, \cite{EKS12} and \cite{E18}, as we will show as well.

Given an independent random sample $\bm{\xi}^{i}=(\xi_{i,v},v\in V)$, $i=1,\ldots, n$, let $R_{v}^{i}$ denote the rank of $\xi_{i,v}$ among $\xi_{1,v},\ldots,\xi_{n,v}$ for $v\in V$. The empirical stable tail dependence function $\hat{\ell}_{n,k}$ is
\begin{equation*}
    \hat{\ell}_{n,k}(\bm{x})=\frac{1}{k}\sum_{i=1}^{n} \I\left(R_{1}^{i}>n+\frac{1}{2}-k x_{1}~\text{or}~\ldots~\text{or}~R_{d}^{i}>n+\frac{1}{2}-k x_{d}\right), \qquad \bm{x}\in [0,\infty)^{d},
\end{equation*} 
where $k=k(n)$ is an intermediate sequence such that $k\to\infty$ and $k/n\to 0$ as $n\to\infty$ \citep{H92,D98}. The asymptotic distribution of $\hat{\ell}_{n,k}$ was studied in \citet{EKS12} and \citet{BSV14} under the following assumption.

\begin{assumption}
\label{con:SedOrdstdf}
Let $k=k(n)$ be an intermediate sequence such that $k\to\infty$ and $k/n\to 0$ as $n\to\infty$. There exists $\zeta > 0$ such that the following two asymptotic relations hold:
\begin{enumerate}[(i)]
\item
    $t^{-1} \pr\left[1-F_{1}(\xi_{1}) \le t x_{1} \text { or } \ldots \text { or } 1-F_{d}(\xi_{d}) \le t x_{d} \right] - \ell(\bm{x}) = O(t^{\zeta})$ as $t \downarrow 0$, uniformly in $\bm{x}\in \Delta_{d-1} = \left\{ \bm{w} \in [0,1]^{d} : w_{1}+\cdots+w_{d} = 1 \right\}$;
\item 
    $k=o\left(n^{2\zeta/(1+2\zeta)}\right)$ as $n\to\infty$.
\end{enumerate}
\end{assumption}

Let $W$ be a mean-zero Gaussian process on $[0, \infty)^d$ with continuous trajectories and covariance function 
\[ 
    \E[W(\bm{x}) W(\bm{y})] = \ell(\bm{x}) + \ell(\bm{y}) - \ell(\bm{x} \vee \bm{y}),
    \qquad \bm{x}, \bm{y} \in [0, \infty)^d, 
\]
where $\bm{x} \vee \bm{y} = (\max\{x_j, y_j\})_{j=1}^d$. Further, let $\dot{\ell}_j(x)$ denote the right-hand partial derivative of $\ell$ with respect to the $j$-th coordinate, for $j \in \{1,\ldots,d\}$. Since $\ell$ is convex, this one-sided partial derivative always exists and is continuous in points where the ordinary two-sided partial derivative exists. Consider the stochastic process $\alpha$ on $[0, \infty)^d$ defined by
\begin{equation}
\label{eq:limproc}
    \alpha(\bm{x}) = W(\bm{x}) - \sum_{j=1}^d \dot{\ell}_j(\bm{x}) W(x_j \bm{e}_j), \qquad \bm{x} \in [0, \infty)^d,
\end{equation}
where $\bm{e}_j$ denotes the $j$-th canonical unit vector in $\mathbb{R}^d$. In \citet[Theorem~4.6]{EKS12}, the asymptotic distribution of $\sqrt{k} (\hat{\ell}_{n,k} - \ell)$ is established with respect to the topology of uniform convergence on $[0, 1]^d$. An additional condition needed on $\ell$ is that $\dot{\ell}_j$ is continuous in $\bm{x}$ whenever $x_j > 0$. For certain models, however, this condition fails; a counter-example is the asymmetric logistic distribution in Example~\ref{eg:AsyLogZero}. A more general result is given in \citet[Theorem~5.1]{BSV14}, who establish weak convergence of $\sqrt{k} (\hat{\ell}_{n,k} - \ell)$ in a slightly weaker topology but without additional differentiability conditions on $\ell$. The topology is based on epi- and hypographs of functions, whence the name hypi-topology. What concerns us here is that the topology is still strong enough to yield weak convergence in certain $L^p$-spaces and thus of certain integrals of the empirical stable tail dependence function. This will be crucial when studying parameter estimates based on $\hat{\ell}_{n,k}$.

Recall that we observe an independent random sample $\bm{\xi}^1,\ldots,\bm{\xi}^n$ from a distribution $F$ such that, after marginal transformation, the standardized random vector belongs to the max-domain attraction of a max-stable distribution $G$ which is not necessarily tree-structured. In the coming theorem we show that the certain integrals of the empirical stable tail dependence function converge weakly to a centered multivariate normal distributed random vector, which is crucial for the proof of asymptotic normality of $\bm{\beta}_n$ in Corollary~\ref{cor:AN}.

\begin{theorem}
\label{thm:estdf:AN}
Let $\mu_1,\ldots,\mu_q$ be finite Borel measures on $[0, 1]^d$ such that for all $j \in \{1,\ldots,d\}$ and $r \in \{1,\ldots,q\}$, the set of points $\bm{x}$ with $x_j > 0$ in which $\dot{\ell}_j$ is not continuous is a $\mu_r$-null set. Let $\psi_r \in L^2(\mu_r)$ for all $r \in \{1,\ldots,q\}$. Then under Assumption~\ref{con:SedOrdstdf}, we have as $n \to \infty$ that
\begin{equation}
\label{eq:estdf:AN}
    \left( 
        \int_{[0, 1]^d} 
            \sqrt{k} \left( \hat{\ell}_{n,k}(\bm{x}) - \ell(\bm{x}) \right)
            \psi_r(\bm{x}) \, 
        \mu_r(\d \bm{x}) 
    \right)_{r=1}^{q}
    \dto
    \left(
        \int_{[0, 1]^d} \alpha(\bm{x}) \, \psi_r(\bm{x}) \, \mu_r(\d \bm{x})
    \right)_{r=1}^{q}.
\end{equation}
The limit is centered multivariate normal with covariance matrix $\bm{\Sigma} = (\sigma_{rs})_{r,s=1}^q$ having entries
\[
    \sigma_{rs} = 
    \int_{[0, 1]^d} \int_{[0, 1]^d} 
        \E[\alpha(\bm{x}) \alpha(\bm{y})] \,
        \psi_r(\bm{x}) \, \psi_s(\bm{y}) \, 
    \mu_r(\d \bm{x}) \, \mu_s(\d \bm{y}).
\]
\end{theorem}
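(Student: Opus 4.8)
The plan is to read the statement off from the weak convergence of the empirical stable tail dependence process $\hat G_{n,k} := \sqrt{k}\,(\hat\ell_{n,k} - \ell)$ in the hypi-topology, established in \citet[Theorem~5.1]{BSV14} under Assumption~\ref{con:SedOrdstdf}, by upgrading it to weak convergence in the Hilbert spaces $L^2(\mu_r)$ and then applying the bounded linear functionals $f \mapsto \int_{[0,1]^d} f\,\psi_r\,\d\mu_r$. The limit is afterwards identified as a centered Gaussian vector by linearity of integration and Fubini's theorem.

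Two preliminary facts are needed. First, the trajectories of $\hat G_{n,k}$ and of the process $\alpha$ in \eqref{eq:limproc} lie a.s.\ in $L^2(\mu_r)$ for every $r$: directly from the definitions one has $0 \le \hat\ell_{n,k}(\bm x) \le x_1 + \dots + x_d + d/k$ and $0 \le \ell(\bm x) \le x_1 + \dots + x_d$, so the trajectories of $\hat G_{n,k}$ are bounded on $[0,1]^d$, while $\E\|\alpha\|_{L^2(\mu_r)}^2 = \int_{[0,1]^d}\E[\alpha(\bm x)^2]\,\mu_r(\d\bm x) < \infty$ because $\E[\alpha(\bm x)^2]$ is bounded on the compact set $[0,1]^d$ — here one uses $0 \le \dot\ell_j \le 1$ (valid for any stable tail dependence function) together with the boundedness of $\E[W(\bm x)W(\bm y)] = \ell(\bm x) + \ell(\bm y) - \ell(\bm x \vee \bm y)$ on $[0,1]^d \times [0,1]^d$ — and $\mu_r$ is finite. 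Second, $\alpha$ is continuous at $\mu_r$-almost every point: since $W$ has a.s.\ continuous trajectories and $W(\bm 0) = 0$ a.s., the map $\bm x \mapsto \dot\ell_j(\bm x)\,W(x_j\bm e_j)$ is continuous at every $\bm x$ with $x_j = 0$ and at every $\bm x$ with $x_j > 0$ at which $\dot\ell_j$ is continuous; by hypothesis the exceptional set $\{\bm x : x_j > 0,\ \dot\ell_j \text{ discontinuous at } \bm x\}$ is $\mu_r$-null, and there are finitely many $j$, so $\alpha$ is $\mu_r$-a.e.\ continuous.

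The core of the argument is then as follows. By \citet[Theorem~5.1]{BSV14}, $\hat G_{n,k}$ converges weakly to $\alpha$ in the hypi-topology on bounded functions over $[0,1]^d$. The key — and hardest — step is to promote this to joint weak convergence of $\hat G_{n,k}$ to $\alpha$ as random elements of the product Hilbert space $\prod_{r=1}^q L^2(\mu_r)$ under the diagonal embedding $f \mapsto (f,\dots,f)$. Because hypi-convergence is strictly weaker than uniform, and even than pointwise, convergence, this is not a continuous-mapping argument: one truncates $\hat G_{n,k}$ at a level $M$ — truncation being hypi-continuous — so that hypi-convergence to the $\mu_r$-a.e.-continuous limit $\alpha$ forces convergence of the truncated $L^2(\mu_r)$-integrals, while the untruncated tails are controlled uniformly in $n$ by the moment bounds for the empirical stable tail dependence function process underlying \citet{BSV14}; this is precisely the sense in which the hypi-topology is ``strong enough to yield weak convergence in certain $L^p$-spaces'' alluded to before the statement. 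Granting this, the map $T : \prod_r L^2(\mu_r) \to \mathbb{R}^q$, $T(f_1,\dots,f_q) = \bigl(\int f_r\,\psi_r\,\d\mu_r\bigr)_{r=1}^q$, is linear and bounded — since $\bigl|\int f_r\,\psi_r\,\d\mu_r\bigr| \le \|f_r\|_{L^2(\mu_r)}\,\|\psi_r\|_{L^2(\mu_r)}$ by Cauchy--Schwarz — hence continuous, and the continuous mapping theorem delivers \eqref{eq:estdf:AN} with limit $\bigl(\int_{[0,1]^d}\alpha\,\psi_r\,\d\mu_r\bigr)_{r=1}^q$.

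It remains to identify this limit, for which we use a jointly measurable version of $\alpha$ (built from the continuous-path process $W$ and the Borel functions $\dot\ell_j$). Each coordinate $\int_{[0,1]^d}\alpha(\bm x)\,\psi_r(\bm x)\,\mu_r(\d\bm x)$ is a well-defined element of $L^1(\pr)$: by Tonelli and Cauchy--Schwarz, $\E\int|\alpha|\,|\psi_r|\,\d\mu_r \le \sup_{\bm x \in [0,1]^d}\bigl(\E[\alpha(\bm x)^2]\bigr)^{1/2}\,\|\psi_r\|_{L^1(\mu_r)} < \infty$, the last factor finite because $\psi_r \in L^2(\mu_r) \subseteq L^1(\mu_r)$ for the finite measure $\mu_r$. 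Moreover each coordinate lies in the closed linear span in $L^2(\pr)$ of the Gaussian family $\{W(\bm z) : \bm z \in [0,1]^d\}$, so the whole vector is jointly centered Gaussian, the centering following from $\E[\alpha(\bm x)] = \E[W(\bm x)] - \sum_j \dot\ell_j(\bm x)\,\E[W(x_j\bm e_j)] = 0$. Finally, Fubini's theorem — justified by the same absolute-integrability bound applied to $\E|\alpha(\bm x)\alpha(\bm y)|\,|\psi_r(\bm x)\psi_s(\bm y)|$ on the product space — gives $\E\bigl[\int\alpha\psi_r\,\d\mu_r\,\int\alpha\psi_s\,\d\mu_s\bigr] = \int\!\!\int \E[\alpha(\bm x)\alpha(\bm y)]\,\psi_r(\bm x)\psi_s(\bm y)\,\mu_r(\d\bm x)\,\mu_s(\d\bm y) = \sigma_{rs}$, which is the asserted covariance. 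I expect all of this last stretch to be routine, the single genuine difficulty being the hypi-to-$L^2(\mu_r)$ upgrade in the third paragraph, which is where the structure of $\hat G_{n,k}$ and the $\mu_r$-null hypothesis on the discontinuity sets of the $\dot\ell_j$ are really used.
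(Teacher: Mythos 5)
Your proposal is correct and follows essentially the same route as the paper: weak convergence of $\sqrt{k}(\hat{\ell}_{n,k}-\ell)$ in the hypi-topology via Theorem~5.1 of \citet{BSV14}, an upgrade to weak convergence in $L^2(\mu_r)$ using the $\mu_r$-a.e.\ continuity of $\alpha$ (which rests on the null-set hypothesis for the $\dot{\ell}_j$ and on $W(\bm{0})=0$), then the continuous mapping theorem for the bounded linear functionals $f \mapsto \int f\,\psi_r\,\d\mu_r$ and Fubini for the covariance. The only difference is that the hypi-to-$L^2(\mu_r)$ step you sketch by hand via truncation is exactly the content of Corollary~3.3 of \citet{BSV14}, which the paper cites directly (so it \emph{is} a continuous-mapping argument after all, the embedding into $L^2(\mu_r)$ being continuous at $\mu_r$-a.e.-continuous limits).
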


\begin{remark}
\label{rk:estdf:AN}
The measures $\mu_r$ in Theorem~\ref{thm:estdf:AN} can be discrete, in which case \eqref{eq:estdf:AN} states the asymptotic normality of the finite-dimensional distributions of $\sqrt{k} \{ \hat{\ell}_{n,k}(\bm{x}) - \ell(\bm{x}) \}$, at least in points $\bm{x} \in [0, 1]^d$ such that for every $j \in \{1,\ldots,d\}$ we have either $x_j = 0$ or $\dot{\ell}_j$ is continuous in $\bm{x}$.
\end{remark}

In the bivariate case, the non-parametric estimator $\hat{\ell}_{n,k,e}(x_{a},x_{b})$ of $\ell_{e}(x_{a},x_{b})$ for $e=(a,b)\in E$ becomes
\begin{equation*}
    \hat{\ell}_{n,k,e}(x_a,x_b) 
    = \frac{1}{k}\sum_{i=1}^{n} \I \left(
        R_{a}^{i} > n + \frac{1}{2} - k x_a
        ~\text{or}~
        R_{b}^{i} > n + \frac{1}{2} - k x_b
    \right)
    = \hat{\ell}_{n,k}(x_a \bm{e}_a + x_b \bm{e}_b).
\end{equation*} 
As the formula indicates, it arises from $\hat{\ell}_{n,k}$ by setting $x_j = 0$ whenever $j \ne a$ and $j \ne b$. 

\begin{assumption}
\label{ass:expansion}
There exist an intermediate sequence $k = k(n)$ and, for every $e = (a, b) \in E$, a finite Borel measure $\nu_{e}$ on $[0, 1]^2$ and a vector of functions $\bm{\psi}_e = (\psi_{e,1},\ldots,\psi_{e,p_{e}})^{\top} : [0, 1]^2 \to \mathbb{R}^{p_e}$ in $L^2(\nu_{e})$ such that the following two properties hold:
\begin{enumerate}[(i)]
\item
As $n \to \infty$, we have
\begin{equation}
\label{eq:expansion}
    \sqrt{k} \left( \hat{\bm{\beta}}_{n,e} - \bm{\beta}_{e}^{0} \right)
    = \int_{[0,1]^2} 
        \sqrt{k} \left( \hat{\ell}_{n,k,e}(x_a,x_b) - \ell_e(x_a,x_b) \right)
        \bm{\psi}_{e}(x_a,x_b) \, 
    \nu_{e}(\d (x_a,x_b)) + o_{p}(\bm{1}).
\end{equation}
\item
For $c \in \{a, b\}$, the set of points $(x_{a}, x_{b})$ such that $x_{c} > 0$ and such that the first-order partial derivative of $\ell_{e}$ with respect to $x_{c}$ does not exist is a $\nu_{e}$-null set.
\end{enumerate}
\end{assumption}

\begin{corollary}
\label{cor:AN}
Under Assumptions~\ref{con:SedOrdstdf} and~\ref{ass:expansion}, for the same intermediate sequence $k(n)$, we have
\[
    \sqrt{k} \left( \hat{\bm{\beta}}_{n} - \bm{\beta}_{0} \right)
    \dto
    \left( 
        \int_{[0,1]^2} 
            \alpha(x_{a} \bm{e}_{a} + x_{b} \bm{e}_{b}) \,
            \bm{\psi}_{e}(x_{a}, x_{b}) \,
        \nu_{e}(\d (x_{a}, x_{b}))
    \right)_{e = (a, b) \in E},
    \qquad n \to \infty,
\]
with $\alpha$ as in \eqref{eq:limproc}. The limit vector is centered multivariate normal with covariance matrix $\bm{\Sigma} = (\bm{\Sigma}_{e e'})_{e, e' \in E}$ whose block indexed by $e = (a, b)$ and $e' = (a', b')$ in $E$ is of dimension $p_{e} \times p_{e'}$ and is given by
\begin{multline*}
    \bm{\Sigma}_{e e'} =
    \int_{[0, 1]^2} \int_{[0, 1]^2}
        \E\left[
            \alpha(x_{a} \bm{e}_{a} + x_{b} \bm{e}_{b}) \,
            \alpha(x_{a'} \bm{e}_{a'} + x_{b'} \bm{e}_{b'}) 
        \right] \\
        \cdot 
        \bm{\psi}_{e}(x_{a},x_{b}) \,
        \bm{\psi}_{e'}(x_{a'},x_{b'})^{\top} \,
    \nu_{e}(\d(x_{a}, x_{b})) \, \nu_{e'}(\d(x_{a'}, x_{b'})).
\end{multline*}
\end{corollary}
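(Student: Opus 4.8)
The plan is to reduce Corollary~\ref{cor:AN} to Theorem~\ref{thm:estdf:AN} by a judicious bookkeeping of the edge-wise measures and test functions. First I would set up the index set for the multivariate normal limit: write $p = \sum_{e \in E} p_e$ and enumerate the pairs $(e, \ell)$ with $e = (a,b) \in E$ and $\ell \in \{1, \ldots, p_e\}$, giving a list of $q := p$ scalar integrals. For each such pair I define a finite Borel measure $\mu_{(e,\ell)}$ on $[0,1]^d$ as the pushforward of $\nu_e$ under the embedding $(x_a, x_b) \mapsto x_a \bm{e}_a + x_b \bm{e}_b$, and I set $\psi_{(e,\ell)}$ to be the corresponding component $\psi_{e,\ell}$ of $\bm{\psi}_e$, transported along the same embedding (it lies in $L^2(\mu_{(e,\ell)})$ precisely because $\bm{\psi}_e \in L^2(\nu_e)$). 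The key point is that, because the embedding sends $[0,1]^2$ into the two-dimensional coordinate face spanned by $\bm{e}_a$ and $\bm{e}_b$, we have the identity $\int_{[0,1]^2} \sqrt{k}(\hat{\ell}_{n,k,e}(x_a,x_b) - \ell_e(x_a,x_b)) \psi_{e,\ell}(x_a, x_b) \, \nu_e(\d(x_a,x_b)) = \int_{[0,1]^d} \sqrt{k}(\hat{\ell}_{n,k}(\bm{x}) - \ell(\bm{x})) \psi_{(e,\ell)}(\bm{x}) \, \mu_{(e,\ell)}(\d\bm{x})$, using that $\hat{\ell}_{n,k,e}(x_a,x_b) = \hat{\ell}_{n,k}(x_a \bm{e}_a + x_b \bm{e}_b)$ and $\ell_e(x_a,x_b) = \ell(x_a \bm{e}_a + x_b \bm{e}_b)$ as noted in the excerpt just before Assumption~\ref{ass:expansion}.

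Next I would verify the hypotheses of Theorem~\ref{thm:estdf:AN} for this family of $(\mu_{(e,\ell)}, \psi_{(e,\ell)})$. The $L^2$ condition is immediate from Assumption~\ref{ass:expansion}. For the null-set condition on the discontinuity points of $\dot{\ell}_j$: for a fixed edge $e = (a,b)$, the measure $\mu_{(e,\ell)}$ is concentrated on the face $\{x_j = 0 : j \notin \{a,b\}\}$, so for $j \notin \{a, b\}$ the condition "$x_j > 0$" fails $\mu_{(e,\ell)}$-almost everywhere and there is nothing to check; for $j \in \{a, b\}$, the discontinuity set of $\dot{\ell}_j$ restricted to this face corresponds, under the embedding, exactly to the set where the first-order partial derivative of $\ell_e$ with respect to $x_j$ fails to exist, which is a $\nu_e$-null set by Assumption~\ref{ass:expansion}(ii). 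Hence all hypotheses of Theorem~\ref{thm:estdf:AN} hold, and under Assumption~\ref{con:SedOrdstdf} we obtain joint asymptotic normality of the vector of integrals $\big(\int_{[0,1]^d} \sqrt{k}(\hat{\ell}_{n,k} - \ell)(\bm{x}) \psi_{(e,\ell)}(\bm{x}) \, \mu_{(e,\ell)}(\d\bm{x})\big)_{(e,\ell)}$ with the stated covariance structure in terms of $\E[\alpha(\bm{x})\alpha(\bm{y})]$.

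Finally I would transfer this back to $\sqrt{k}(\hat{\bm{\beta}}_n - \bm{\beta}_0)$. By Assumption~\ref{ass:expansion}(i), each block $\sqrt{k}(\hat{\bm{\beta}}_{n,e} - \bm{\beta}_e^0)$ equals the corresponding vector of integrals plus an $o_p(\bm{1})$ term; stacking over $e \in E$ and invoking Slutsky's lemma, the joint limit law of $\sqrt{k}(\hat{\bm{\beta}}_n - \bm{\beta}_0)$ coincides with that of the stacked integral vector. Re-expressing the integrals over $[0,1]^d$ back as integrals over $[0,1]^2$ against $\nu_e$ with integrand $\alpha(x_a \bm{e}_a + x_b \bm{e}_b) \bm{\psi}_e(x_a, x_b)$ gives the displayed limit, and reading off the covariance of Theorem~\ref{thm:estdf:AN} under the same change of variables yields the block formula for $\bm{\Sigma}_{ee'}$ (note the paper's displayed covariance contains the integrand $\bm{\psi}_e(x_a,x_b)\bm{\psi}_e(x_{a'},x_{b'})^\top$, which should read $\bm{\psi}_e(x_a,x_b)\bm{\psi}_{e'}(x_{a'},x_{b'})^\top$ to match the dimensions $p_e \times p_{e'}$). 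The main obstacle is not any deep estimate but the careful measure-theoretic bookkeeping of the preceding paragraph: one must check that pushing a measure on $[0,1]^2$ into a coordinate face of $[0,1]^d$ genuinely moves the "bad set" of Assumption~\ref{ass:expansion}(ii) onto the "bad set" of Theorem~\ref{thm:estdf:AN}, and that the one-sided partial derivatives $\dot{\ell}_j$ of the full $\ell$ restrict correctly to one-sided partials of the bivariate margin $\ell_e$ on that face — a point that is true but deserves an explicit sentence, since $\ell$ being a $d$-variate stable tail dependence function forces its restriction to the $\{a,b\}$-face to be $\ell_e$ with matching one-sided derivatives in the $a$ and $b$ directions.
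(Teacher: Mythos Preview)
Your proposal is correct and follows essentially the same route as the paper: push each $\nu_e$ forward to $[0,1]^d$ along the coordinate embedding $(x_a,x_b)\mapsto x_a\bm e_a+x_b\bm e_b$, apply Theorem~\ref{thm:estdf:AN} componentwise, and finish with Slutsky's lemma. Your verification of the null-set hypothesis (splitting into $j\notin\{a,b\}$ and $j\in\{a,b\}$) is more explicit than the paper's one-line appeal to Assumption~\ref{ass:expansion}(ii), and you are right that the displayed covariance should read $\bm\psi_{e'}(x_{a'},x_{b'})^\top$ rather than $\bm\psi_{e}(x_{a'},x_{b'})^\top$.
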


Corollary~\ref{cor:AN} allows for a combination of different estimators for the edge parameters $\bm{\beta}_{e}$, as long as each of them satisfies Assumption~\ref{ass:expansion}. 
Three such estimators are the moments estimator in \citet{E08}, the M-estimator in \citet{EKS12} and the weighted least squares estimator in \citet{E18}. Details are given in Appendix~\ref{subsec:three_estimators}.

\subsection{Description of the approximation error}

As soon as the bivariate margins of all the adjacent pairs on $\bm{Y}$ are estimated, an estimate of distribution function $G_{M}$ can be obtained from Corollary~\ref{pro:GM_Markov} by replacing the unknown quantities in there by their estimates. Likewise, a model-based estimate $\hat{\lambda}_{ab}^{M}$ of the upper tail dependence coefficient $\lambda_{ab}^{M}$ can be obtained from \eqref{eq:TDC_GM}. Thus we take 
\begin{equation}
\label{eq:D_est}
\hat{D}_{\tree}=\sum_{a,b\in V,(a,b)\notin E}|\hat{\lambda}_{ab}^{M}-\hat{\lambda}_{ab}|
\end{equation} 
as a consistent estimator of $D_{\tree}$ defined in \eqref{eq:D} to measure the approximation error, where $\hat{\lambda}_{ab}$ is the empirical estimator of $\lambda_{ab}$ given in \eqref{eq:empestimator_TDC}.

Under the null hypothesis that $G$ is a tree-structured max-stable distribution along a tree, the asymptotic null distribution of $\hat{D}_{\tree}$ could be derived, although this would be tedious. But since we view $G_M$ as an approximation of $G$, we are not really interested in testing the tree-structure hypothesis on $G$. Instead, we use $\hat{D}_{\tree}$ as a descriptive statistic, a large value of which means that the approximation is inaccurate and needs to be improved.

In line with the measure $\tilde{D}_{\tree}$ defined Section~\ref{subsec:ApproErr}, one can consider other error measures such as
\[
    \hat{\tilde{D}}_{\tree} =
    \max_{a,b \in V; a \ne b; (a,b) \notin E} 
    \sup_{(x_a,x_b) \in [0, 1]^2}
    \left| 
        \hat{\ell}_{ab}^M(x_a,x_b) - \hat{\ell}_{n,k,ab}(x_a,x_b) 
    \right|
\]
where $\hat{\ell}_{ab}^M$ is the bivariate stable tail dependence function implied by the fitted model, while $\hat{\ell}_{n,k,ab}$ is the empirical tail dependence function. However, this quantity will be difficult to calculate. Rather than actually quantifying the approximation error via $\hat{D}_{\tree}$, we could also simply plot the pairs $(\hat{\lambda}_{ab}^{M}, \hat{\lambda}_{ab})$ as in Figure~8 in \citet{hentschel2022statistical}. Such a scatterplot can be used as a kind of diagnostic plot: the closer the points are concentrated around the main diagonal, the better the fit.

\section{Simulation study}
\label{sec:Sim}

To evaluate the performance of the proposed approach, we perform our modelling procedure on random samples from the H\"{u}sler--Reiss and asymmetric logistic max-stable distributions in Examples~\ref{eg:HR} and~\ref{eg:AsyLogZero} respectively. 
To achieve a more realistic situation, we add some lighter-tailed noise.
Specifically, starting from random vectors $\bm{Z}^{i}$, for $i=1,\ldots,n$, drawn independently from a simple H\"{u}sler--Reiss or asymmetric logistic distribution,
we construct the samples from a random vector $\bm{\xi}$ by 
\begin{align}
    \label{eq:sample}
    \bm{\xi}^{i}=\bm{Z}^{i}+\bm{\varepsilon}^{i}, \qquad i=1,\ldots,n,
\end{align}
where 
$\bm{\varepsilon}^{i}$, for $i=1,\ldots,n$, are independent of $\bm{Z}^{i}$ and are themselves independent and identically distributed random vectors with independent Fr\'echet($2$) distributed entries, i.e., $\pr(\varepsilon^{i}_{v}\le x)=\exp(-1/x^{2})$ for $x > 0$ and $v\in V$. 

The measure $2\hat{D}_{\tree}/[(d-1)(d-2)]$ with $\hat{D}_{\tree}$ in~\eqref{eq:D_est} is calculated to assess the model's goodness-of-fit. The resulting tree-based models are applied to the estimation of the rare event probabilities in~\eqref{eq:rareevent}.
To do that, we need take account of both the dependence structure and the margins of the sample. Thus it is more convenient to work with the assumption that a general random vector $\bm{\xi}$ is in the domain of attraction of a multivariate max-stable distribution $H$ with margins $H_{v}$ for $v\in V$. This is equivalent to the assumption $(1/(1-F_{v}(\xi_{v})),v\in V)\in D(G)$ with $G$ the simple multivariate max-stable distribution that relates to $H$ through $G(\bm{x})=H(H_{v}^{\leftarrow}(e^{-1/x_{v}}),v\in V)$ (the superscript arrow denoting the functional inverse), together with the 
condition that $F_{v}$ is in the domain of attraction of the univariate extreme value distribution $H_{v}$ for $v\in V$ respectively. Then by equation~(8.81) in \citet{B04}, we have
\begin{align}
\nonumber
\pr\left(\xi_{1}>u_{1}~\text{or}~\ldots ~\text{or}~\xi_{d}>u_{d}\right) 
    &\approx 1- G(-1/\log{F_{v}(u_{v})},\, v\in V) \\
\label{eq:rareeventappxi}
    &\approx 1- G_{M}(-1/\log{F_{v}(u_{v})},\, v\in V)
\end{align}
for thresholds $u_{v}$ such that all probabilities $F_{v}(u_{v})$ are close to unity. 
 
For the random vector $\bm{\xi}$ with copies constructed in \eqref{eq:sample}, we know that $\bm{\xi}\in D(G)$ with $G$ equal to the common distribution of $\bm{Z}^{i}$, $i=1,\ldots,n$. Moreover, the marginal distributions $F_{v}$, $v \in V$, belong to the domain of attraction of the unit-Fr\'{e}chet distribution $G_v$, so that $-\log{F_{v}(u_{v})}\approx 1/u_{v}$ provided $u_{v}$ is sufficiently large. Therefore, the rare event probability on the left-hand side of~\eqref{eq:rareeventappxi} can be approximated by $1-G_{M}(\bm{u})$ by plugging the marginal approximations into the quantity on the right-hand side. We will show boxplots of the logarithm of the relative approximation error
\begin{align}
\label{eq:lograreevent}
    \operatorname{AE}
    =\log \frac{1-\hat{G}_{M}(\bm{u})}{1-F(\bm{u})},
\end{align}
where $\hat{G}_{M}$ is the estimate of $G_{M}$ based on the three-step procedure in Section~\ref{sec:estim} and $F$ is the joint cumulative distribution function of $\bm{\xi}$. For simplicity, three components are considered simultaneously.  
In each experiment, we take $u_{j}$, for $j\in J$ with $J\subset V$ and $|J|=3$, as the $0.999$ empirical marginal quantiles and $u_{j}=\infty$ for $j\in V\setminus J$.
All simulations are done in \texttt{R}. Realizations of samples of the H\"{u}sler--Reiss distribution are simulated using the package \texttt{graphicalExtremes} \citep{EHG19}. Samples of the asymmetric logistic distribution are generated through \eqref{eq:ML}. 
The true probability $1-F(\bm{u})$ is calculated by numerical integration as implemented in the package \texttt{cubature} \citep{cubature} based on the convolution of $\bm{Z}$ and $\bm{\varepsilon}$: $F(\bm{u})=\int_{[\bm{0},\bm{u}]} G(u_{1}-x_{1},\ldots,u_{d}-x_{d}) \prod_{v\in V} \left(2x_{v}^{-3}\exp(-1/x_v^{2})\right) \d \bm{x}$.

\subsection{H\"{u}sler--Reiss distribution}
\label{subsec:Sim_HR}

The H\"{u}sler--Reiss distribution may have a dependence structure linked to a tree, see Example~\ref{eg:HR}. Random samples can be generated from either a general H\"{u}sler--Reiss distribution or a structured one. Both cases are considered and experiments are based on $300$ repetitions with samples of size $n=1000$. The procedure is detailed in \mdf{Section~S2 of the Supplementary Material}.   

For random samples generated from the $10$-dimensional H\"{u}sler--Reiss distribution with variogram matrix $\bm{\Gamma}_{4}$ in \mdf{Table~S1 and dependence structure linked to the tree given in Figure~S1 of the Supplementary Material},
the proportion of wrongly estimated edges
\[
    1-\frac{|E_{\hat{\tree}_{\omega}^{\star}}\cap E_{\tree}|}{d-1}
\]
for $\omega\in \{\lambda, \tau\}$ is recorded. For Kendall's tau, all trees $\hat{\tree}_{\tau}^{\star}$ found in the $300$ replications are the same as the true one, while for the upper tail dependence coefficient, there are about $23.2\%$ of replications in which $\hat{\tree}_{\lambda}^{\star}$ does not agree with the true tree in one out of nine edges and $1.3\%$ of replications where two out of nine edges are wrong. This is partly caused by the selected $k_{\lambda}=0.1n$ for the estimation of $\lambda$. \mdf{Figure~S2 of the Supplementary Material} shows an interesting tendency in the influence of $k_{\lambda}$ on the number of replications where $\hat{\tree}_{\lambda}^{\star}$ does not coincide with the true tree: the best proportion  $k_{\lambda}/n$ does not appear in the region close to zero but around $0.5$, even if the estimation bias is large for such $k_{\lambda}$. This provides a practical argument for using Kendall's tau to learn the tree structure, even though $\tau$ is not a measure of tail dependence.
    
Recall that the limiting distribution $G_{M}$ is also a H\"{u}sler--Reiss distribution with variogram matrix $\bm{\Gamma}_{M}$ which can be recovered from the estimated parameters of bivariate stable tail dependence functions and the selected tree structure, see Example~\ref{eg:HR}. A H\"{u}sler--Reiss distribution is totally determined by its variogram matrix. Therefore, the distance between the variogram matrix $\hat{\bm{\Gamma}}_{M}=(\hat{\gamma}^{M}_{ij})_{i,j\in V}$ of the tree-based model $\hat{G}_{M}$ and the true variogram matrix $\bm{\Gamma}=(\gamma_{ij})_{i,j\in V}$ can also be used to measure the approximation error. We compare the variogram matrices through the Frobenius norm of their difference. Boxplots of the relative distance of $\hat{\bm{\Gamma}}_{M}$ to $\bm{\Gamma}$ given by
\begin{equation}
\label{eq:Variodis}
\frac{\Vert \hat{\bm{\Gamma}}_{M}-\bm{\Gamma} \Vert_{F}}{\Vert\bm{\Gamma} \Vert_{F}}=\frac{\left[\sum_{i=1}^{d}\sum_{j=1}^{d}(\hat{\gamma}^{M}_{ij}-\gamma_{ij})^2\right]^{1/2}}{\left[\sum_{i=1}^{d}\sum_{j=1}^{d}\gamma_{ij}^2\right]^{1/2}}
\end{equation}
based on $300$ replications are given in Figure~\ref{fig:10-HR-tree_STDFbias}. The parameters of the stable tail dependence functions are estimated using the moments estimator, the M-estimator and the weighted least squares estimator discussed in Subsection~\ref{subsec:EstSTDF}. The moments estimator and M-estimator turned out to have quite similar performances, which is why we only show the results for the M-estimator and weighted least squares estimator. The two types of edge weights seem comparable in the sense of approximation distance even though $\lambda$ is not as good as $\tau$ in recovering the true tree structure. Compared with the M-estimator, the weighted least squares estimator is more sensitive to the choice of $k$.
Boxplots of the average of $\hat{D}_{\tree}$ basically correspond to those of the relative distance of $\hat{\bm{\Gamma}}_{M}$ to $\bm{\Gamma}$, supporting the use of $\hat{D}_{\tree}$ as a goodness-of-fit measure.

The model based on 
$\hat{\tree}_{\lambda}^{\star}$
is used to estimate the probability in \eqref{eq:rareevent}. From Figure~\ref{fig:10-HR-tree-RareeventBias} we see that the proposed model underestimates the probability in \eqref{eq:rareevent}. This is mainly caused by the underestimation of the variogram matrix. As suggested by a referee, the lengths of the paths connecting the chosen components $\xi_j$ for $j\in J$ which we consider in \eqref{eq:rareevent} may impact the approximation error of the threshold exceedance probability. For illustration, we considered two options for $J$: $J=\{1,2,3\}$ and $J=\{1,3,5\}$, where in the former case one of the nodes, node~2, is on the path from node~1 to node~3 in \mdf{Figure~S1 of the Supplementary Material} but in the latter case the three selected nodes are connected in a looser way. As shown in Figure~\ref{fig:10-HR-tree-RareeventBias}, it turns out that the approximation is more accurate if the considered components are ``closer'' on the tree.

\begin{figure}
    \centering
    \includegraphics[width=0.47\textwidth]{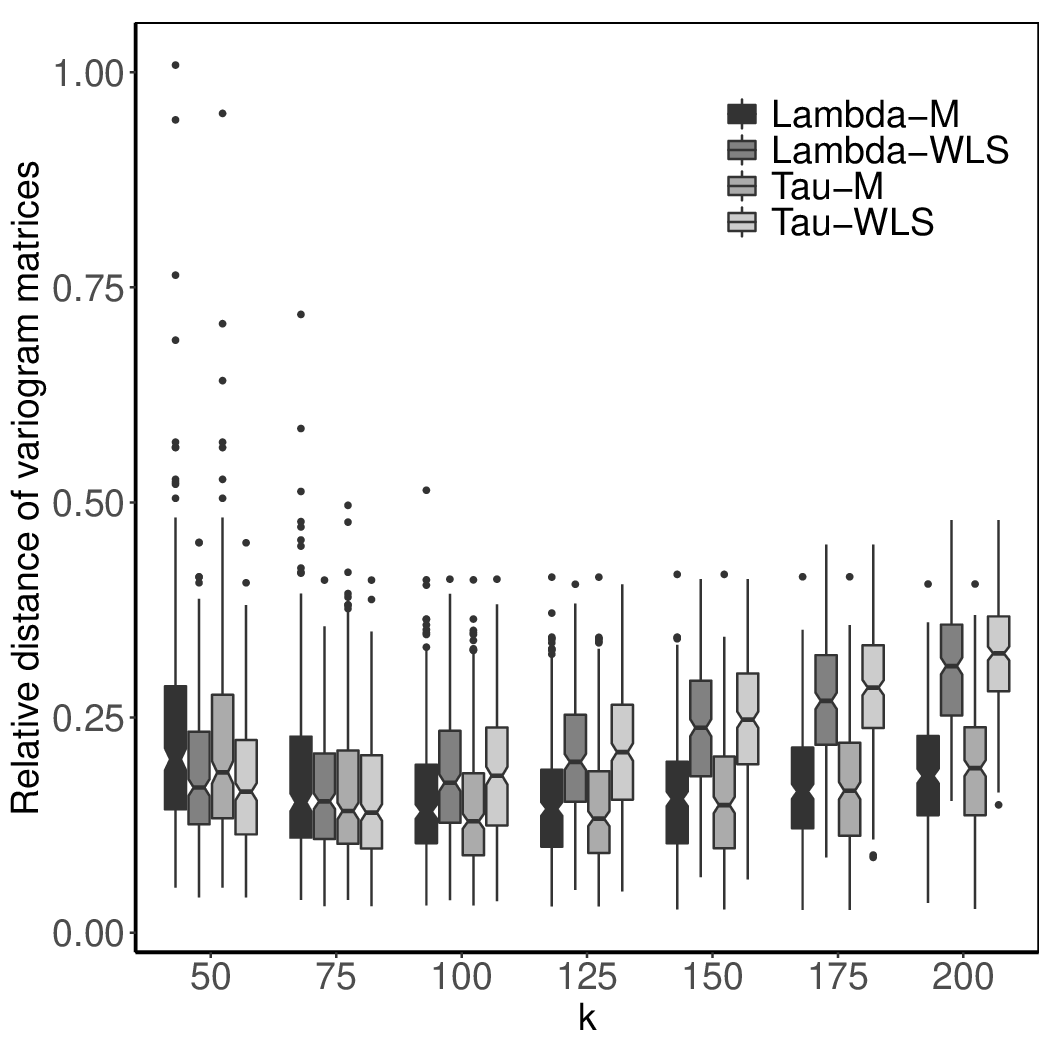}
    \includegraphics[width=0.47\textwidth]{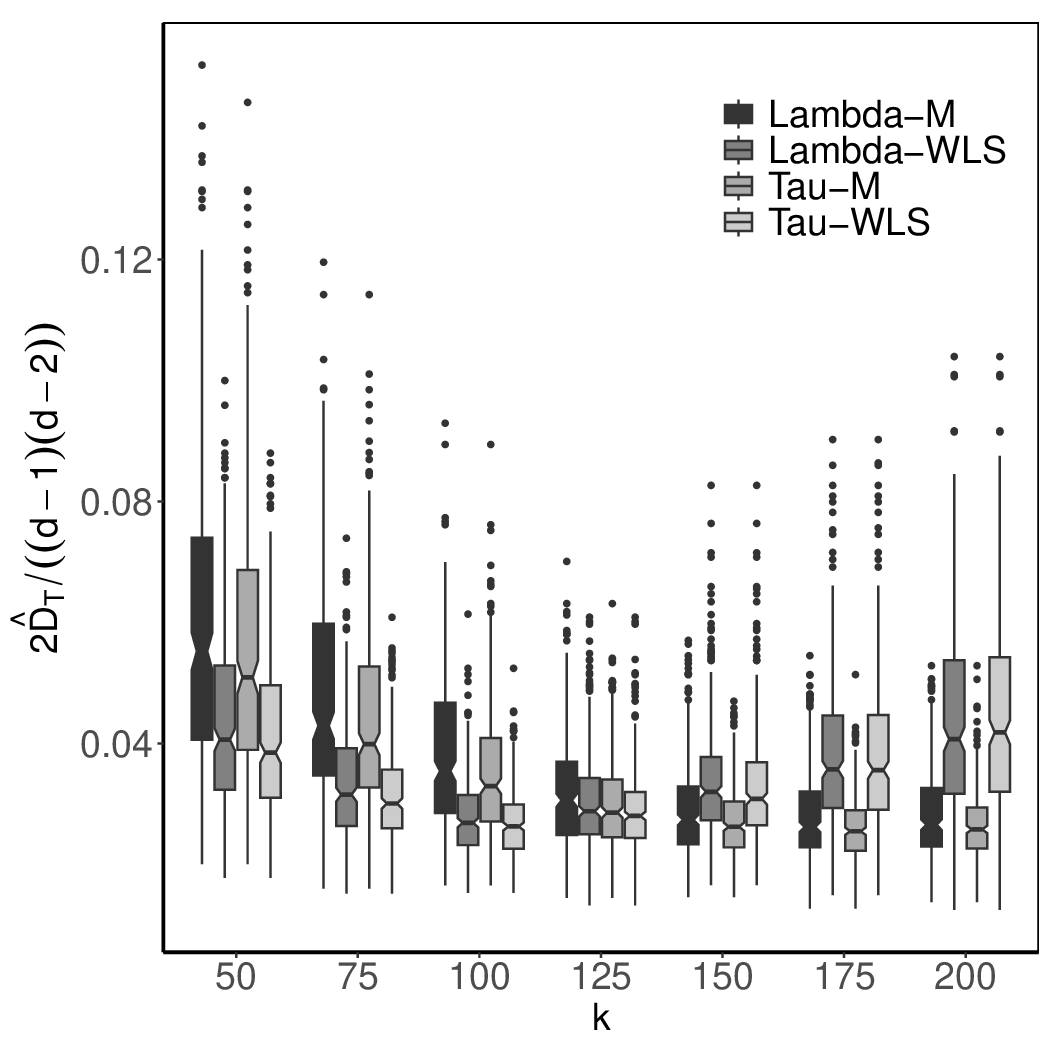}
    \caption{Boxplots of $\frac{\Vert \hat{\bm{\Gamma}}_{M}-\bm{\Gamma} \Vert_{F}}{\Vert \bm{\Gamma} \Vert_{F}}$ (left) and $2\hat{D}_{\tree}/[(d-1)(d-2)]$ with \mdf{$\hat{D}_{\tree}$} in \eqref{eq:D_est} (right) for $10$-dimensional H\"{u}sler--Reiss distribution with variogram matrix $\bm{\Gamma}_{4}$ based on $\hat{\tree}_{\lambda}^{\star}$ (Lambda) and $\hat{\tree}_{\tau}^{\star}$ (Tau), where bivariate stable tail dependence functions are fitted based on M-estimator (M) and weighted least squares estimator (WLS) at $k=50,75,\ldots,200$ ($300$ samples of size $n=1000$)}
    \label{fig:10-HR-tree_STDFbias}
\end{figure}

\begin{figure}
    \centering
    \includegraphics[width=0.47\textwidth]{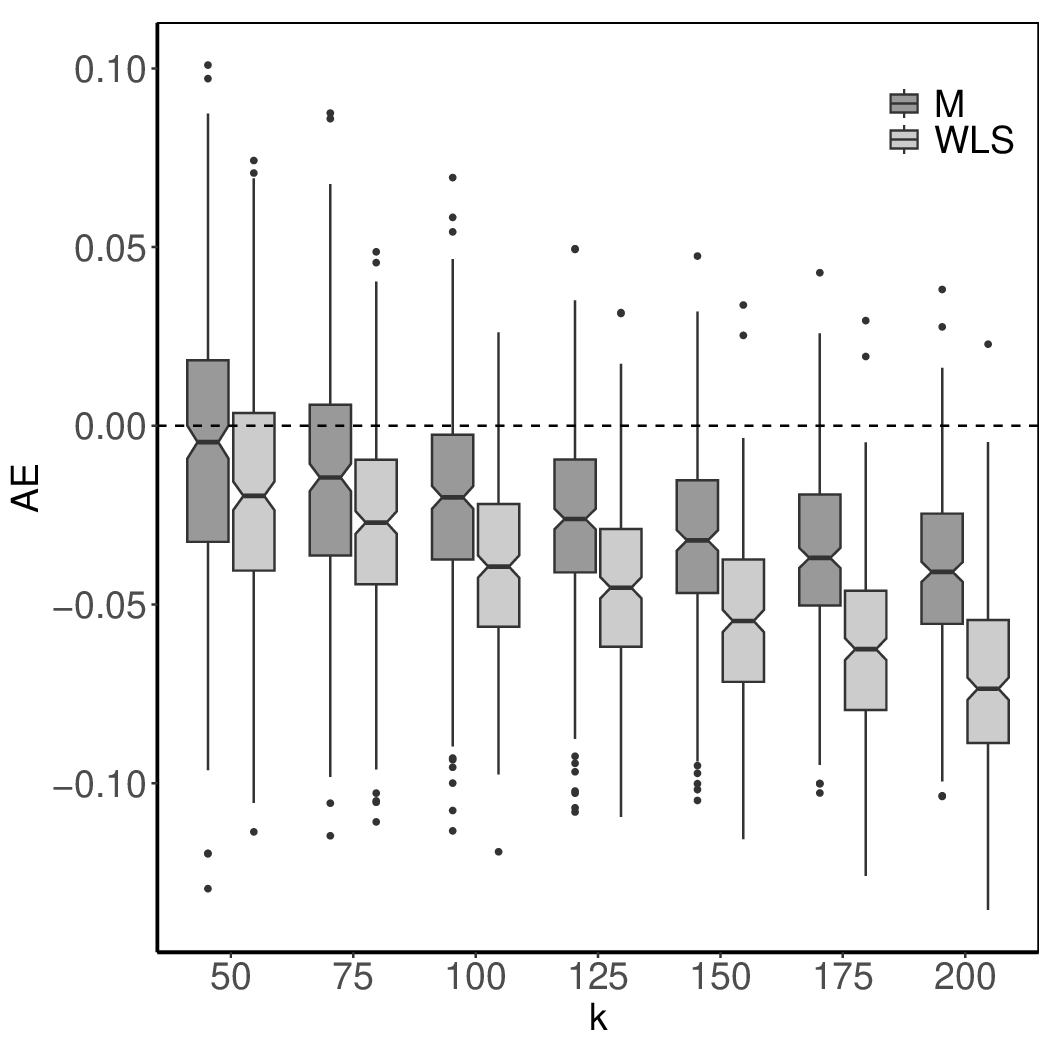}
    \includegraphics[width=0.47\textwidth]{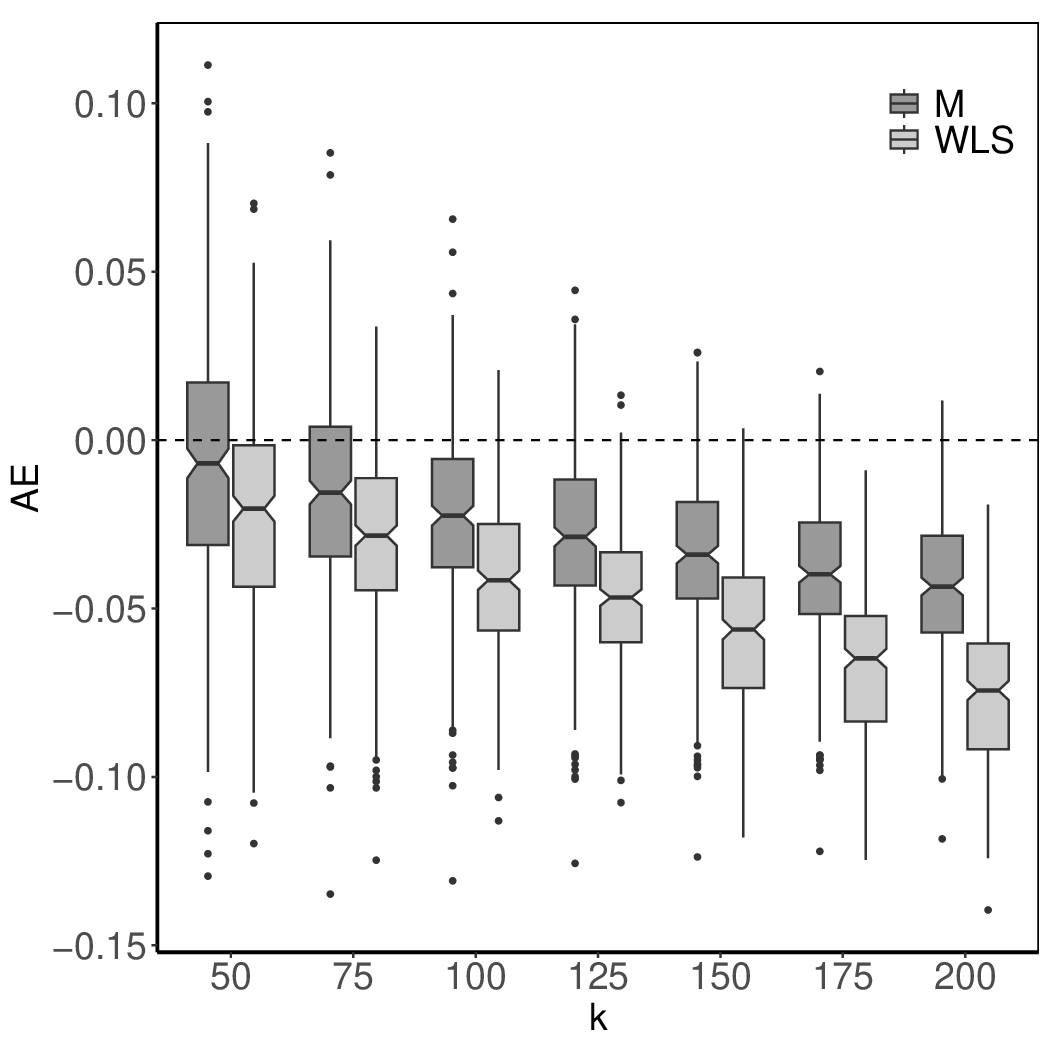}
    \caption{Boxplots of the approximation errors of the rare event probability in~\eqref{eq:lograreevent} based on $\hat{\tree}_{\lambda}^{\star}$ for $10$-dimensional H\"{u}sler--Reiss distribution with variogram matrix $\bm{\Gamma}_{4}$, where bivariate stable tail dependence functions are fitted using the M-estimator (M) and the weighted least squares estimator (WLS) at $k=50,75,\ldots,200$, the levels $u_{j}$ with $j=1,2,3$ (left) and $j=1,3,5$ (right) are taken as the $0.999$ empirical marginal quantiles and $u_{j}=\infty$ for the rest components ($300$ samples of size $n=1000$)}
    \label{fig:10-HR-tree-RareeventBias}
\end{figure}

For samples generated from the $10$-dimensional H\"{u}sler--Reiss distribution with variogram matrix $\bm{\Gamma}_{5}$ which is not necessarily tree-structured, the simulation results have similar patterns as those in the first case but with larger errors. The variogram matrix $\bm{\Gamma}_{5}$ and boxplots can be found in \mdf{Table~S1, Figures~S3 and~S4 in the Supplementary Material}.

\subsection{Asymmetric logistic distribution}
\label{subsec:asymlog}

In reality, we do not know the true distribution $G$. The bivariate distribution family we use to model adjacent pairs may be different from the true one. To see the estimation error in this scenario, we generate samples from the $5$-dimensional asymmetric logistic distribution in Example~\ref{eg:AsyLogZero} (a special max-linear model) with randomly generated parameter $\bm{\Psi}_{4}=(0.763, 0.835, 0.602, 0.747, 0.859)$. Still, we model the adjacent pairs on the constructed Markov tree by bivariate H\"{u}sler--Reiss distributions. 

According to Figure~\ref{fig:5-ML}, the two types of edges weights, $\lambda$ and $\tau$, still exhibit a similar performance. Larger $k$ leads to a higher accuracy for both estimators in terms of $\hat{D}_{\tree}$, while the weighted least squares estimator behaves better than the M-estimator for the chosen $k$, as can also be seen from the approximation error on the right-hand panel in the figure. However, compared with the experiments for the H\"{u}sler--Reiss distribution above, the approximation errors are much larger.

\begin{figure}
    \centering
    \includegraphics[width=0.47\textwidth]{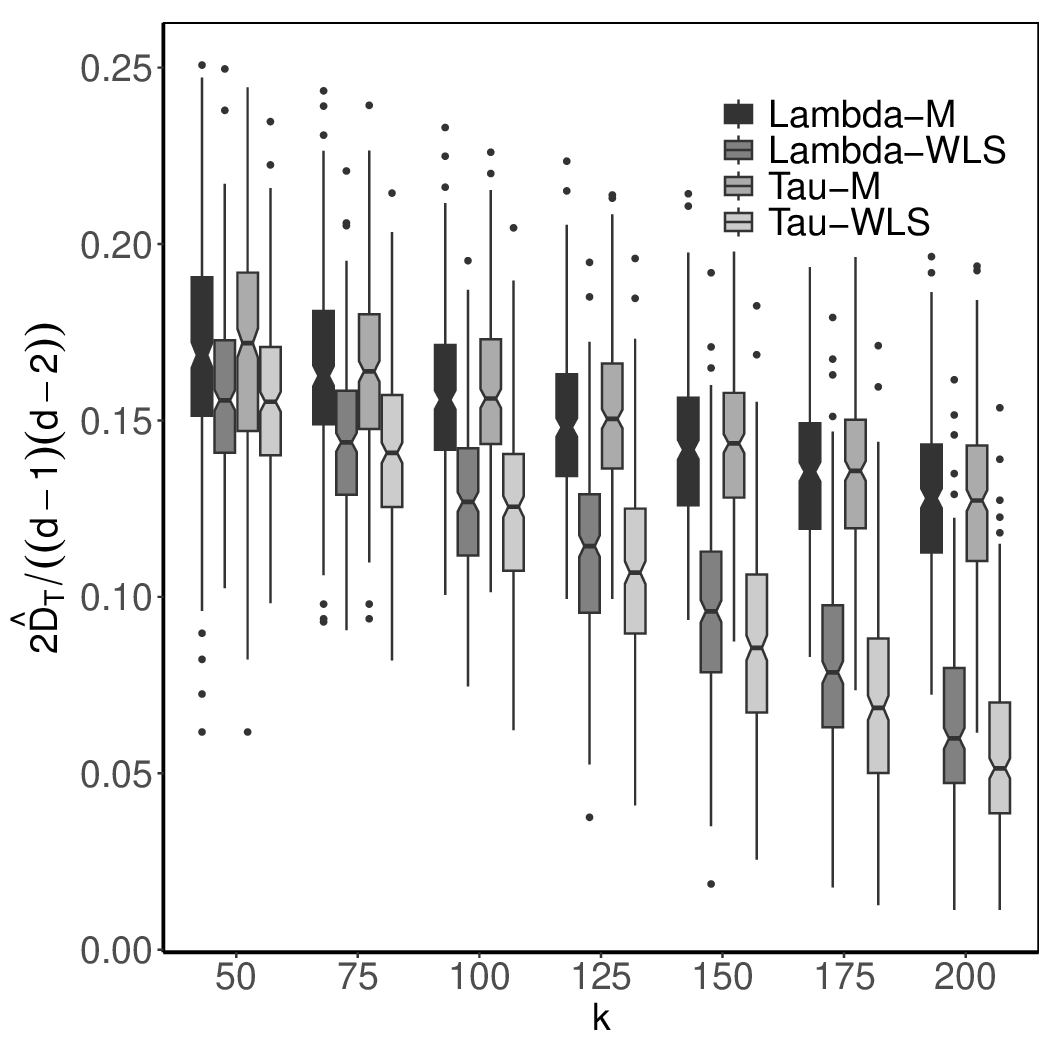}
   \includegraphics[width=0.47\textwidth]{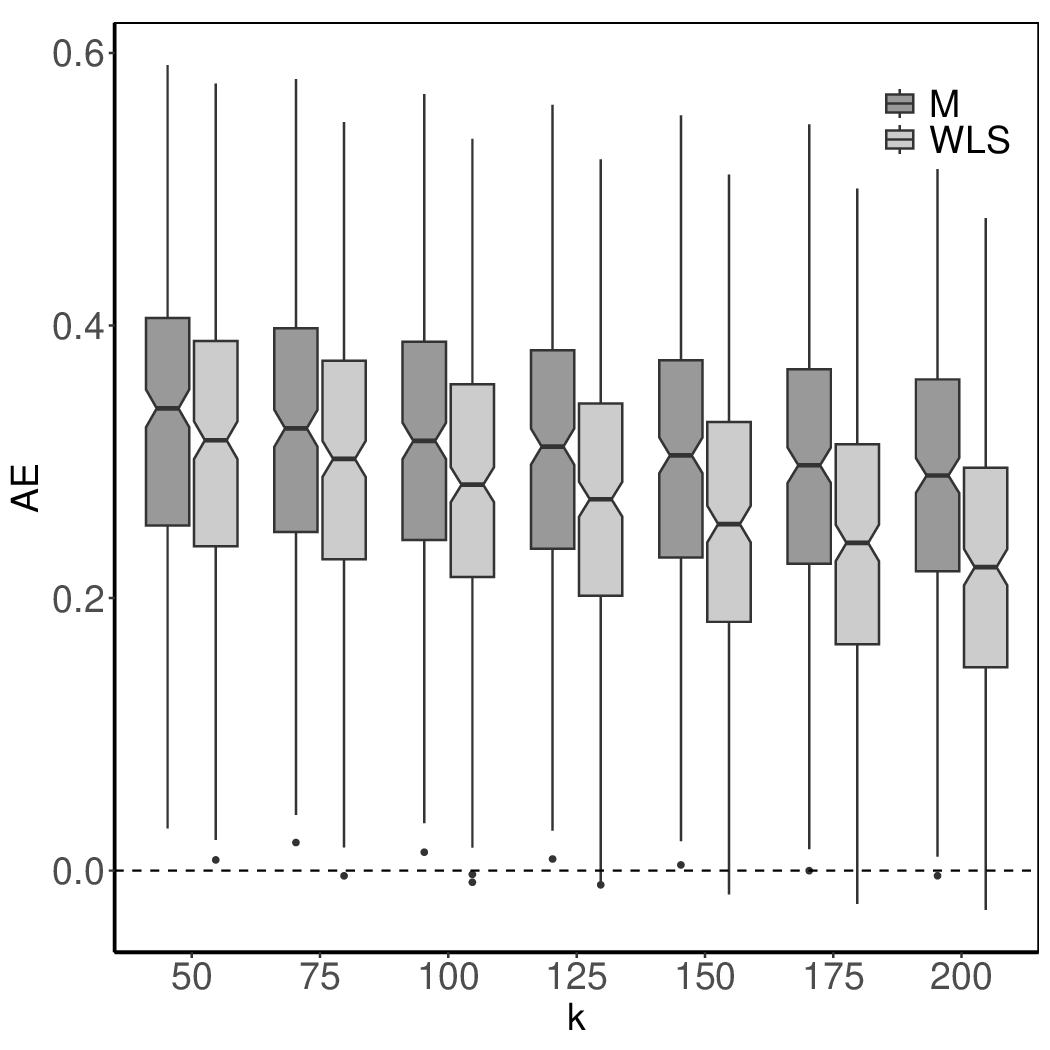}
    \caption{Boxplots of $2\hat{D}_{\tree}/[(d-1)(d-2)]$ (left) with \mdf{$\hat{D}_{\tree}$} in \eqref{eq:D_est} based on $\hat{\tree}_{\lambda}^{\star}$ (Lambda) and $\hat{\tree}_{\tau}^{\star}$ (Tau) and the approximation errors (right) of the rare event probability in \eqref{eq:lograreevent} based on 
    $\hat{\tree}_{\lambda}^{\star}$ for $5$-dimensional asymmetric logistic distribution with parameter vector $\bm{\Psi}_{4}$ in Section~\ref{subsec:asymlog}, where bivariate H\"{u}sler--Reiss stable tail dependence functions are fitted with M-estimator (M) and weighted least squares estimator (WLS) at $k=50,75,\ldots,200$ ($300$ samples of size $n=1000$)}
    \label{fig:5-ML}
\end{figure}

\section{Application}
\label{sec:app}

In this section we consider the average daily discharges recorded at $31$ gauging stations in the upper Danube basin covering parts of Germany, Austria and Switzerland. The topographic map can be found in Figure~1 of \citet{A15} and the data are available in the supplementary materials of that paper. The series at individual stations have lengths from $54$ to $113$ years, with $51$ years of data for all stations from $1960$ to $2010$. The average daily discharges were measured in $m^3/s$, ranging from $20$ to $1400$. Following \citet{A15} and \citet{E20}, we only consider the daily discharges in the summer months (June, July and August) to get rid of the seasonality. The information for summer discharges at each station is given in \mdf{Table~S2 of the Supplementary Material}. 

For $i=1,\ldots,n$, denote the daily mean discharges at the station $j$ on day $i$ by $\xi_{i,j}$ and assume the distribution function of $\bm{\xi}^{i}=(\xi_{i,1},\ldots,\xi_{i,d})$ belongs to the domain of attraction of a multivariate max-stable distribution function. We are concerned about the probability that there will be a flood above a certain (high) threshold for at least one site within the considered stations, precisely, the probability in~\eqref{eq:rareevent}. 
In view of \eqref{eq:rareeventappxi}, the modelling can be split into two parts: fitting the marginal distributions and estimating the simple tree-based multivariate extreme distribution function $G_{M}$.

Extreme discharges often occur in clusters. To remove the clusters and create a sequence of independent and identically distributed variables, we follow the procedure in \citet{A15} for each of the $51$ summer periods. It ranks the data within each series and starts from the day with the highest rank of the period across all series (only one series when extracting univariate events) and takes a window of $9$ days. An event is formed by taking the largest observation within this window. Then the data in this window are deleted and the process is repeated to form a new event. All events are found when no window of $9$ consecutive days remains. 

\subsection{Marginal fitting}

Under the assumption that the distribution of $\bm{\xi}^{i}$ belongs to the domain of attraction of a multivariate max-stable distribution, the marginal distributions are also attracted by univariate extreme value distributions.
In line with asymptotic theory, we model high-threshold excesses by the generalized Pareto distribution to obtain the margins $F_{j}$ for $j=1,\ldots,d$. Since the extremal behavior from any available earlier data does not change relative to the $51$ common years, see \citet{A15}, all the available data for each station are used to do the marginal fitting.

For $j=1,\ldots,d$ and $0 < p < 1$, let $q_{j,p}$ be the empirical $p$-quantile of $\xi_{i,j}$ and put $\mathcal{I}_{j}=\{i\in \{1,\ldots,n\}: \xi_{i,j}>q_{j,p}\}$.
For each station $j$, let $\hat{\sigma}_{j} > 0$ and $\hat{\vartheta}_{j} \in \mathbb{R}$ be the maximum likelihood estimates of the scale and shape parameters of the generalized Pareto distribution fitted to the excesses $\xi_{i,j} - q_{j,p}$ for $i\in \mathcal{I}_{j}$. Let $N_{j}$ and $n_{j}$ denote the sample size at station $j$ and the number of elements in $\mathcal{I}_{j}$, respectively. Then the tail distribution function $1-F_j$ is estimated by
\begin{equation}
    \label{eq:MDFFitting}
     1-\hat{F}_{j}(x) =
    \frac{n_{j}}{N_{j}} \left\{ 
        1 - \operatorname{GPD}(x-q_{j,p};\hat{\sigma}_{j},\hat{\vartheta}_{j})
    \right\},
\end{equation}
where $\operatorname{GPD}(\,\cdot\,;\sigma,\vartheta)$ denotes the cumulative distribution function of the generalized Pareto distribution with scale parameter $\sigma > 0$ and shape parameter $\vartheta \in \mathbb{R}$.

In pursuit of a good fit, an appropriate threshold $q_{j,p}$ has to be found.
By inspection of the mean residual life plot, it was decided to set the threshold for margin $j$ equal to $q_{j,p}$ with $p = 0.9$.
The number of declustered excesses used for the estimation at each station as well as the corresponding thresholds $q_{j,p}$ and maximum likelihood estimates of $\hat{\sigma}_{j}$ and $\hat{\vartheta}_{j}$ of fitted generalized Pareto distribution for $j=1,\ldots,31$ are given in \mdf{Tables~S3 and~S4 of the Supplementary Material} respectively.

\subsection{Estimation of the dependence structure}

In this subsection, we approximate the dependence structure between extremes of the $31$ gauging stations by a multivariate max-stable distributions $G_{M}$ related to a Markov tree. 
After declustering, there are $N=428$ independent observations from the $51$ years of data for all stations. The tree structure corresponding to the physical flow connections at these stations is presented in \mdf{Figure~S5}, while the tree structures $\hat{\tree}_{\lambda}^{\star}$ and $\hat{\tree}_{\tau}^{\star}$~\eqref{eq:tree_star_est} are given in \mdf{Figure~S6 of the Supplementary Material}. We construct the max-stable distribution $G_{M}$ by fitting the H\"{u}sler--Reiss distribution to all pairs of connected variables and estimate their parameters by the moments estimator, the M-estimator and the weighted least squares estimator with $k=65$. 

For model assessment, we show the scatterplots of the pairs $(\hat{\lambda}_{ab}^M, \hat{\lambda}_{ab})$, $(a, b)\in V\times V$ and $a\neq b$, for models based on $\hat{\tree}_{\lambda}^{\star}$ and $\hat{\tree}_{\tau}^{\star}$ in Figure~\ref{fig:ScatterLambda} and \ref{fig:ScatterTau}, comparing the model-based and empirical tail dependence coefficients. The approximation errors $\hat{D}_{\hat{\tree}_{\lambda}^{\star}}$ with bivariate stable tail dependence functions estimated by moments estimator, M-estimator and weighted least squares estimator are $42.10$, $41.41$, $40.34$ while those for $\hat{D}_{\hat{\tree}_{\tau}}$ are $38.18$, $37.67$, $44.98$ respectively. We see that all the models, although with a slight overestimation of the bivariate tail dependence coefficients, seem to behave well in quantifying the bivariate dependence. The estimated GPD-based marginal probability in~\eqref{eq:MDFFitting} and the probability $\pr\left(\xi_{4}>u_{4}~\text{or}~\xi_{7}>u_{7}~\text{or}~\xi_{13}>u_{13}\right)$ of flooding at one or more stations $4$, $7$ and $13$, which are connected differently on the maximum dependence trees $\hat{\tree}_{\lambda}^{\star}$ and $\hat{\tree}_{\tau}^{\star}$ (\mdf{Figure~S6 in the Supplementary Material}),
are given in Tables~\ref{tab:MTDFitting} and~\ref{tab:Danube_Rareevents} respectively, where $u_{j}$ for $j=4,7,13$ are taken as the $0.95$, $0.99$, $0.995$ and $0.999$ empirical quantiles of the $51$-year samples. Compared with the empirical probabilities, the tree-based model yields higher estimates of the flooding probabilities. As shown in Table~\ref{tab:MTDFitting}, this may be partly due to the difference between the non-parametric and GPD-based estimates of the marginal probabilities $F_{j}(u_{j})$ for $j=4,7,13$.

\begin{figure}
    \centering
    \includegraphics[width=0.3\textwidth,height=0.3\textwidth,angle=-90]{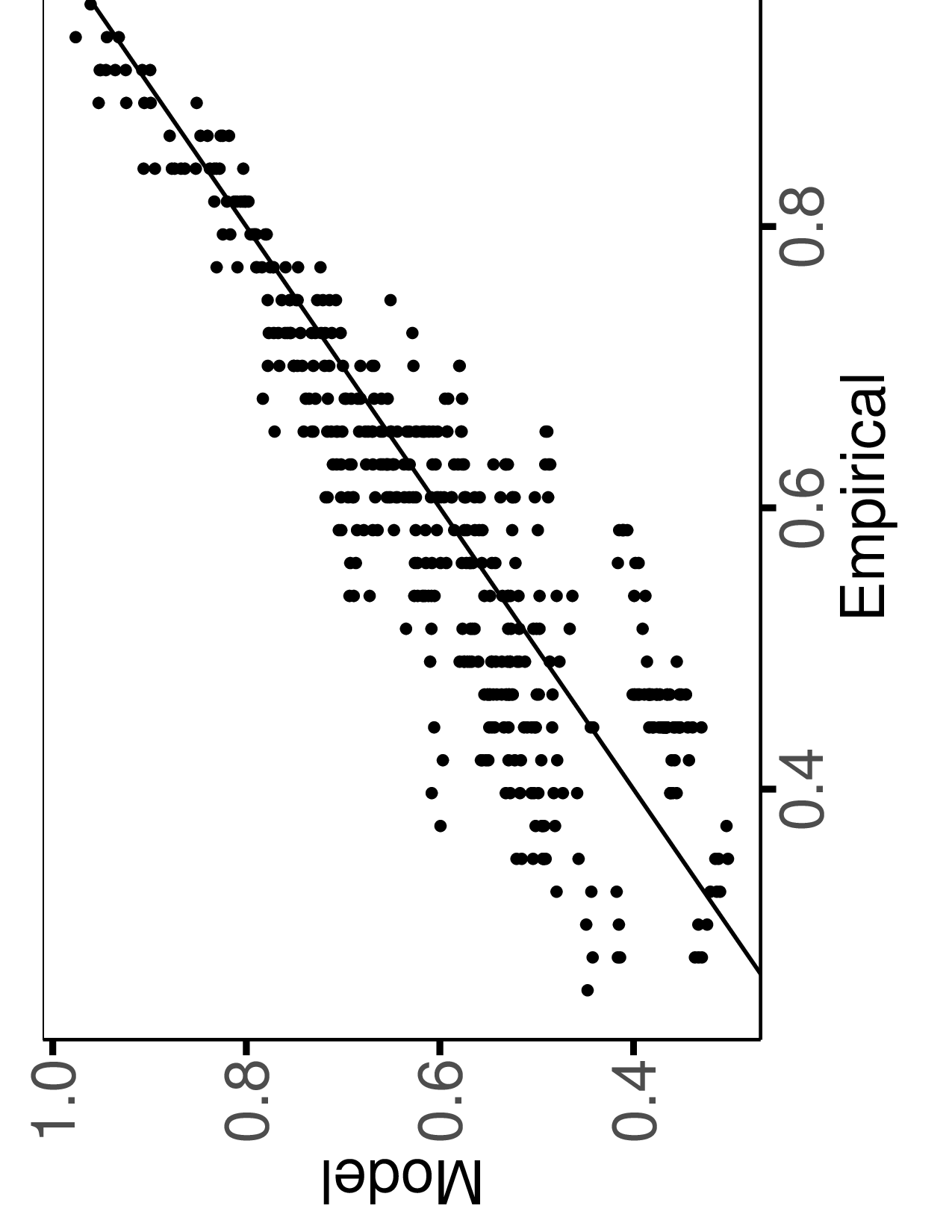}
    \includegraphics[width=0.3\textwidth,height=0.3\textwidth,angle=-90]{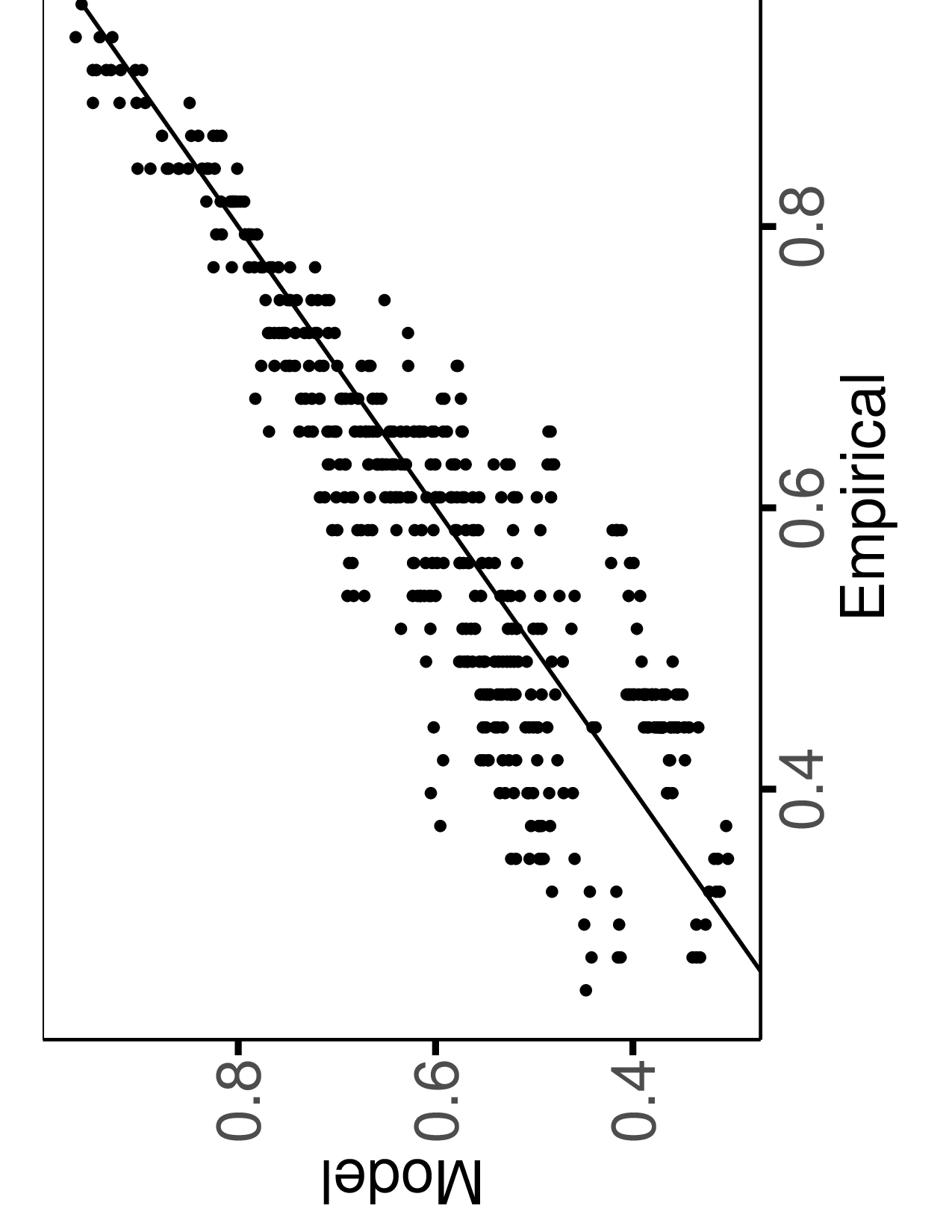}
   \includegraphics[width=0.3\textwidth,height=0.3\textwidth,angle=-90]{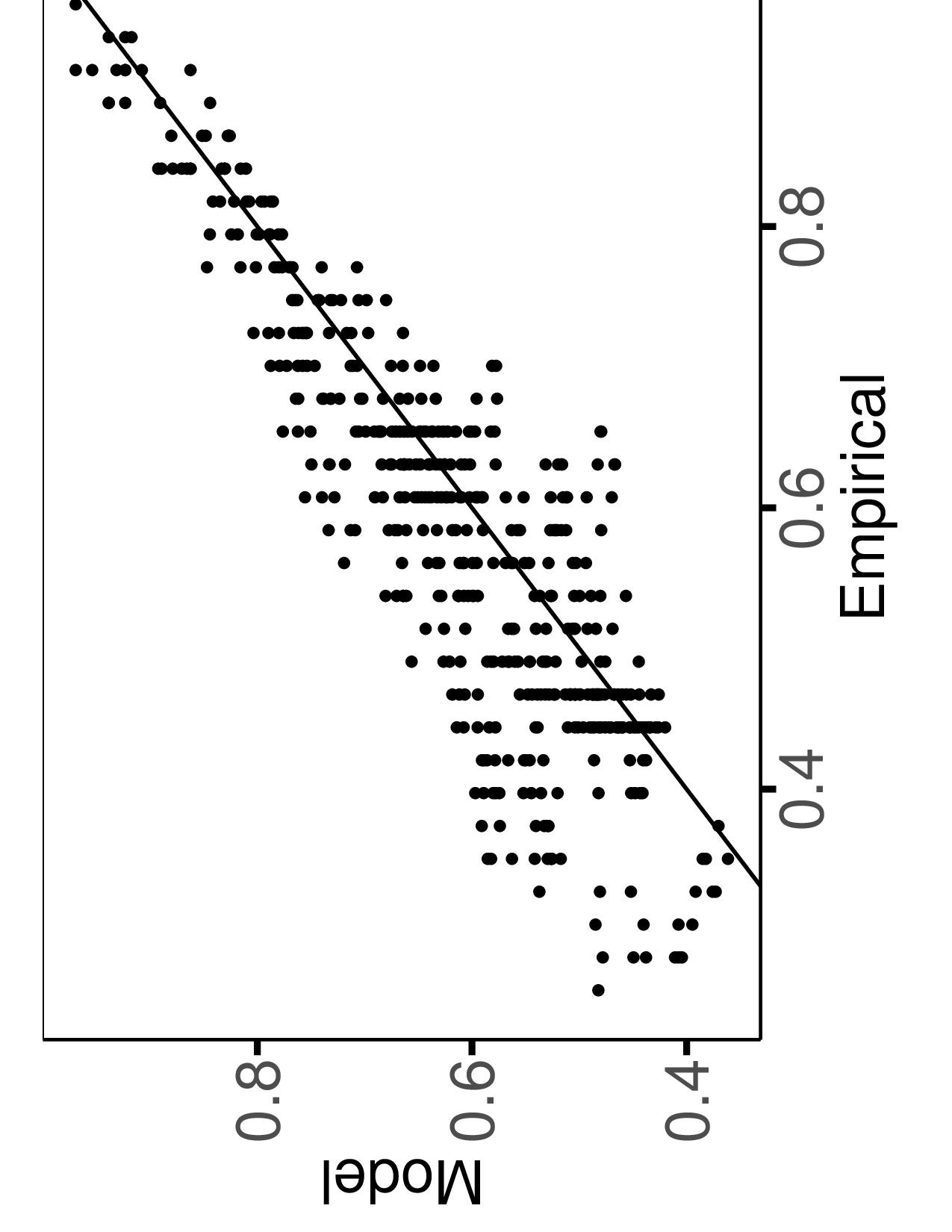}
    \caption{Scatterplots of the empirical upper tail dependence coefficients and model-based estimates of upper tail dependence coefficients $(\lambda_{ab},\lambda_{ab}^M)$ for all $a,b\in V$ and $a\neq b$, where the models are constructed based on tree $\tree_{\lambda}^{\star}$ and the bivariate tail dependence functions are estimated by moments estimator (left), M-estimator (middle) and weighted least squares estimator (right) at $k=65$}
    \label{fig:ScatterLambda}
\end{figure}

\begin{figure}
    \centering
    \includegraphics[width=0.3\textwidth,height=0.3\textwidth,angle=-90]{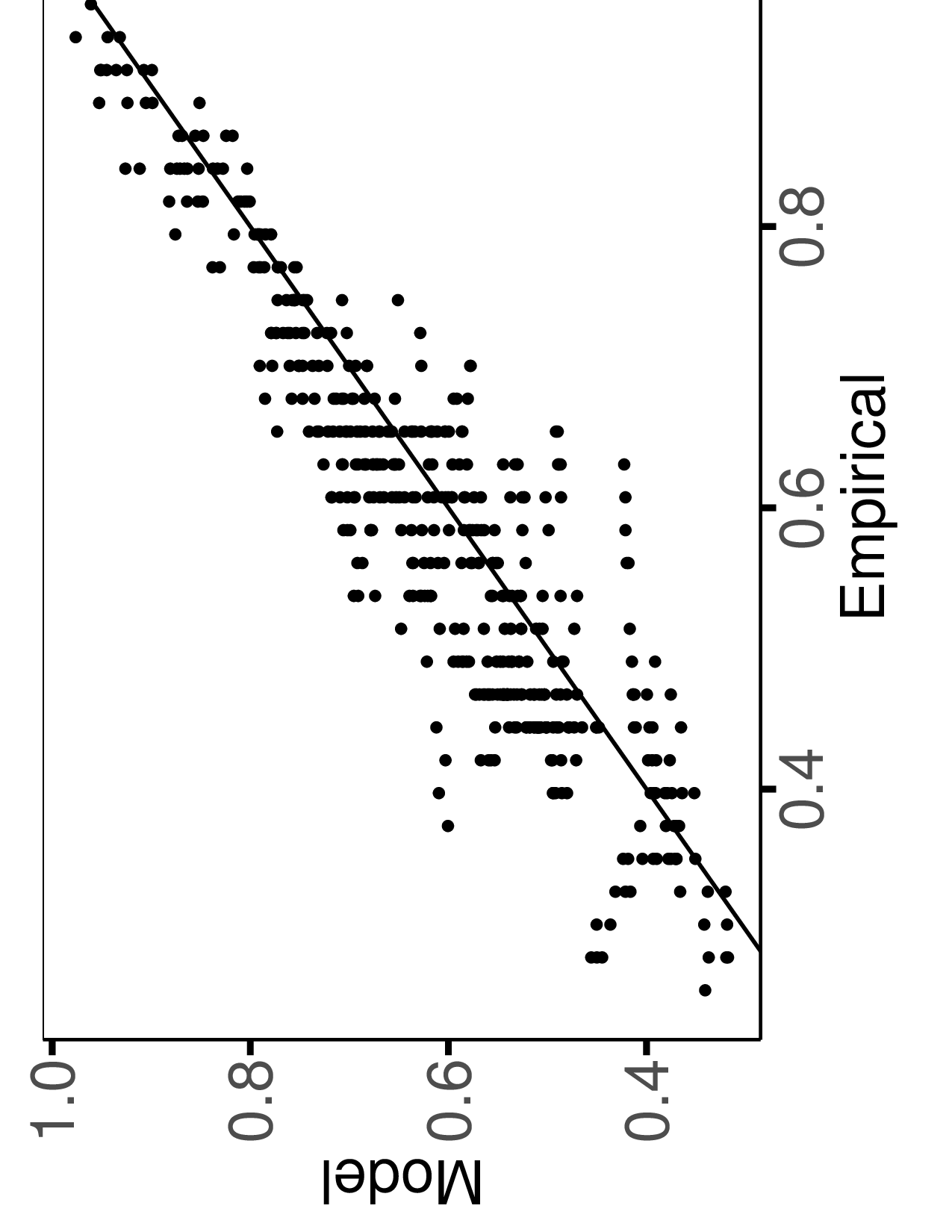}
    \includegraphics[width=0.3\textwidth,height=0.3\textwidth,angle=-90]{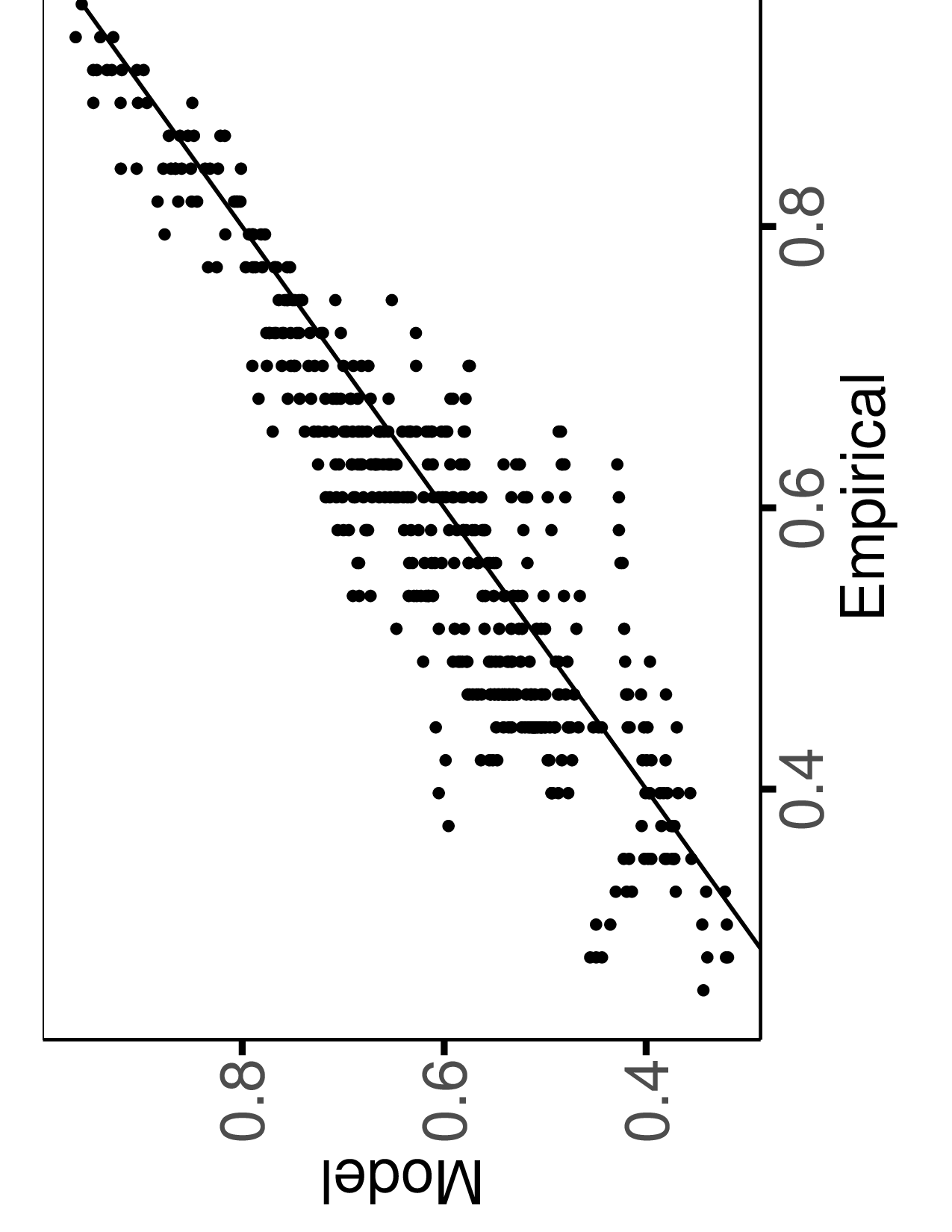}
   \includegraphics[width=0.3\textwidth,height=0.3\textwidth,angle=-90]{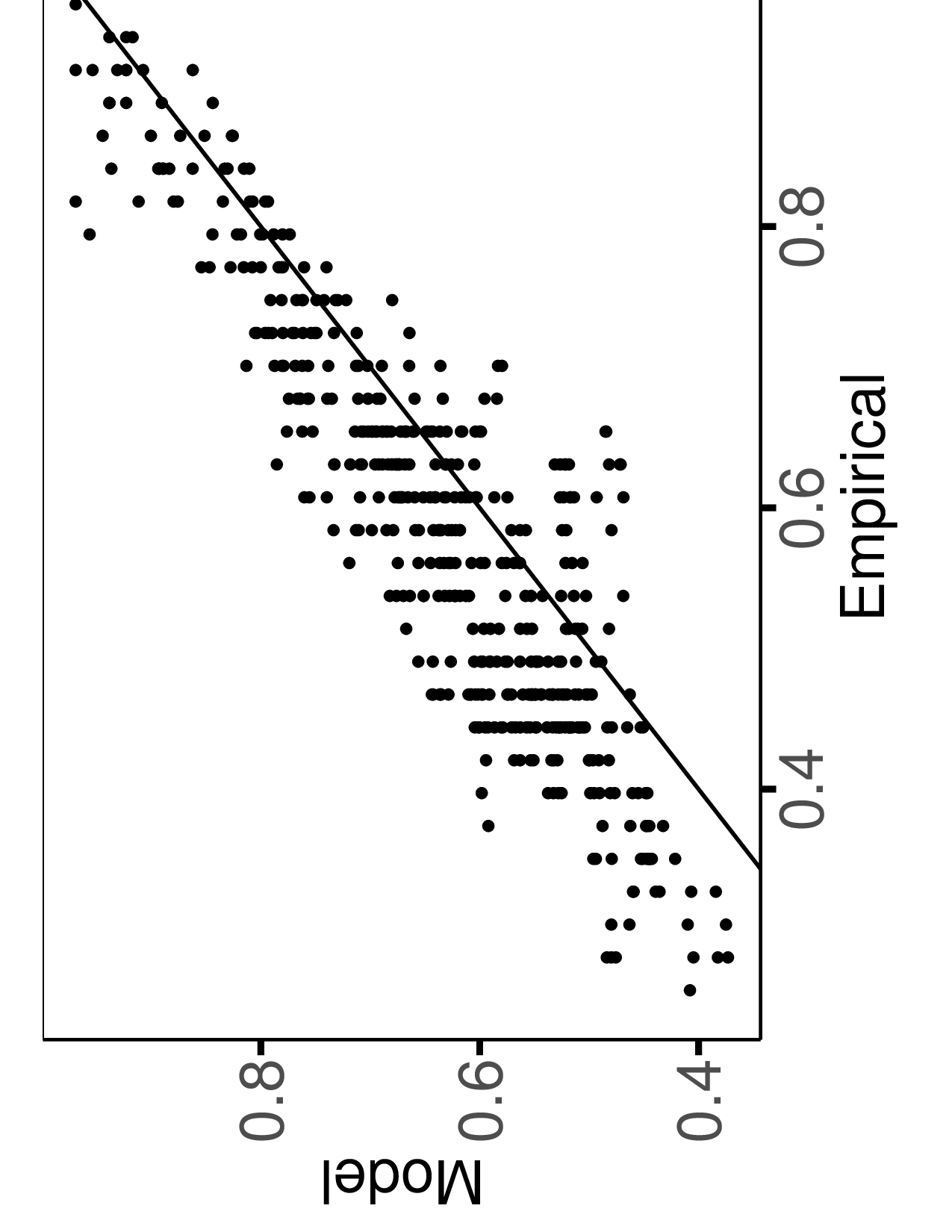}
    \caption{Scatterplots of the empirical upper tail dependence coefficients and model-based estimates of upper tail dependence coefficients $(\lambda_{ab},\lambda_{ab}^M)$ for all $a,b\in V$ and $a\neq b$, where the models are constructed based on tree $\tree_{\tau}^{\star}$ and the bivariate tail dependence function are estimated by moments estimator (left), M-estimator (middle) and weighted least squares estimator (right) at $k=65$}
    \label{fig:ScatterTau}
\end{figure}

\begin{table}[!htbp] \centering 
\caption{GPD-based estimates $1-\hat{F}_{j}(u_{j})$ of marginal tail probability in~\eqref{eq:MDFFitting} with $u_{j}$ taken as the $0.95$, $0.99$, $0.995$ and $0.999$ empirical quantiles of the $51$-year samples for $j=4,7,13$ }
\label{tab:MTDFitting} 
\begin{tabular}{lccccc} 
\toprule
\multicolumn{1}{c}{$u_j$} & $q_{j,0.95}$ & $q_{j,0.99}$ & $q_{j,0.995}$ & $q_{j,0.999}$ \\
\midrule
$1-\hat{F}_{4}$  &  $6.52\%$ & $1.87\%$ & $0.99\%$ & $0.22\%$ \\  
$1-\hat{F}_{7}$  &  $6.66\%$ & $1.69\%$ & $1.03\%$ & $0.15\%$ \\ 
$1-\hat{F}_{13}$ &  $7.02\%$ & $1.90\%$ & $1.18\%$ & $0.33\%$ \\ 
\bottomrule
\end{tabular} 
\end{table}

\begin{table}[!htbp] \centering 
   \caption{Empirical and model-based estimates of the flooding probability \eqref{eq:rareevent} over the $0.95$, $0.99$, $0.995$ and $0.999$ empirical quantiles of the $51$-year samples at stations $4$, $7$ and $13$ in the upper Danube basin, where bivariate H\"{u}sler--Reiss stable tail dependence functions of connected pairs on maximum dependence tree $\hat{\tree}_{\lambda}^{\star}$ and $\hat{\tree}_{\tau}^{\star}$ are fitted with moments estimator (MM), M-estimator (M) and weighted least squares estimator (WLS) at $k=65$}
   \label{tab:Danube_Rareevents} 
\begin{tabular}{clcccc} 
 \toprule
  & $u_j$ & $q_{j,0.95}$ & $q_{j,0.99}$ & $q_{j,0.995}$ & $q_{j,0.999}$ \\
 \midrule
  \multicolumn{2}{c}{Empirical} & $8.44\%$ & $1.88\%$ & $1.02\%$ & $0.21\%$ \\ 
 \midrule
 \multirow{4}{*}{$\hat{\tree}_{\lambda}^{\star}$} & 
   MM &   $9.71\%$ & $2.65\%$ & $1.58\%$ & $0.38\%$ \\  
 & M & $9.72\%$ & $2.65\%$ & $1.58\%$ & $0.38\%$ \\ 
 & WLS & $9.62\%$ & $2.62\%$ & $1.56\%$ & $0.37\%$ \\ 
 \midrule
 \multirow{4}{*}{$\hat{\tree}_{\tau}^{\star}$} & 
   MM & $9.54\%$ & $2.61\%$ & $1.55\%$ & $0.37\%$ \\ 
 & M & $9.55\%$ & $2.61\%$ & $1.55\%$ & $0.37\%$ \\ 
 & WLS & $9.26\%$ & $2.53\%$ & $1.50\%$ & $0.37\%$ \\ 
 \bottomrule
\end{tabular} 
\end{table}

\section*{Acknowledgments}
We would like to thank the two Referees and the Associate Editor for their constructive suggestions and comments, which helped us greatly to enhance the manuscript.
The research of Shuang Hu is financially supported by the State Scholarship Fund (CSC No.202106990036) from the China Scholarship Council.

\mdf{\section*{Supporting information}
Additional supporting information can be found online in the Supplementary Material including detailed proofs and discussions for Sections~\ref{subsec:ApproErr} and~\ref{sec:Examples}, examples, and additional
tables and figures for the simulation and application studies.}

\bibliography{sample}

\section*{Address}
\Address

\appendix
\clearpage

\setcounter{equation}{0}
\renewcommand{\theequation}{A\arabic {equation}} 

\section{Appendix}
\label{sec:SM}

\subsection{Proofs}
\label{subsec:LemProf}

\begin{proof}[Proof of Proposition~\ref{pro:G_M}]
By assumption, there exists a regularly varying Markov tree $\bm{Y}$ satisfying Condition~\ref{condition} and \eqref{eq:RVMar} such that $\bm{Y}\in D(G)$, where $G$ is a simple multivariate extremes value distribution transformed from $H$. Moreover, $G$ and $H$ have the same stable tail dependence functions. Thus it is sufficient to derive the result in terms of $G$.
Let $v_{1},\ldots,v_{d}$ be an arbitrary permutation of $1,\ldots,d$. Since $\bm{Y}$ satisfies Condition~\ref{condition}, it follows from~\eqref{eq:RVMar} and Theorem~2 in \citet{S20} that
\begin{align}
\label{eq:cvg_stdf}
\nonumber
\lefteqn{n\pr\left(\bigcup_{v=1}^{d} \{Y_{v}>n y_{v}\}\right)} \\
\nonumber
&= n\sum_{i=1}^{d-1} \pr( Y_{v_{i}}>n y_{v_{i}}, Y_{v_{j}}\leq n y_{v_{j}}, j=i+1,\ldots,d)+n \pr(Y_{v_{d}}>n y_{v_{d}})\\
\nonumber
&= \sum_{i=1}^{d-1} n\pr(Y_{v_{i}}>n y_{v_{i}}) \pr(Y_{v_{j}}\leq n y_{v_{j}}, j=i+1,\ldots,d \mid Y_{v_{i}}>n y_{v_{i}})+n \pr(Y_{v_{d}}> n y_{v_{d}})\\
&\to \sum_{i=1}^{d} y_{v_{i}}^{-1}  \pr\left(W \Theta_{v_{i},v_{j}}\leq y_{v_{j}}/y_{v_{i}}, j=i+1,\ldots, d\right)
\end{align}
as $n\to\infty$, where the probability in the sum \eqref{eq:cvg_stdf} is interpreted as $1$ for $i = d$. Recall that $W$ is unit-Pareto distributed and is independent of $\bm{\Theta}_{v_{i}}$. By the fact that $1/W$ is uniformly distributed on $[0,1]$, we have
\begin{align}
\label{eq:expection}
\nonumber
\lefteqn{ y_{v_{i}}^{-1} \pr\left(W \Theta_{v_{i},v_{j}}\leq y_{v_{j}}/y_{v_{i}}, j=i+1,\ldots, d\right)} \\
\nonumber
&= y_{v_{i}}^{-1} \int_{0}^{1} \pr\left\{\Theta_{v_{i},v_{j}}\leq (y_{v_{j}}u) /y_{v_{i}} , j=i+1,\ldots, d\right\} \d{u}\\
\nonumber
&= y_{v_{i}}^{-1} \int_{0}^{1} \pr\left\{\max_{j=i+1,\ldots,d} \left(y_{v_j}^{-1} \Theta_{v_{i},v_{j}} \right)\leq y_{v_{i}}^{-1} u\right\} \d{u}\\
\nonumber
&= \int_{0}^{1/y_{v_i}} \pr\left\{\max_{j=i+1,\ldots,d} \left(y_{v_j}^{-1} \Theta_{v_{i},v_{j}} \right)\leq u\right\} \d{u}\\
\nonumber
&=  \int_{0}^{\infty} \pr\left\{\max_{j=i+1,\ldots, d} \left(y_{v_j}^{-1} \Theta_{v_{i},v_{j}} \right) \leq u \leq y_{v_{i}}^{-1} \Theta_{v_{i},v_{i}}\right\} \d{u}\\
&= \E\left\{\max_{j=i,\ldots, d} \left(y_{v_j}^{-1} \Theta_{v_{i},v_{j}} \right)-\max_{j=i+1,\ldots, d}\left(y_{v_j}^{-1} \Theta_{v_{i},v_{j}} \right)\right\},
\end{align}
where the last step follows from $\int_0^\infty \pr(A \le u \le B) \d u = \E[\int_0^\infty \I(A \le u \le B) \d{u}] = \E[\max(A,B)-A]$ for non-negative random variables $A$ and $B$.
By \eqref{eq:G_M} with $a_{n,v}=n$ and $b_{n,v}=0$, \eqref{eq:cvg_stdf}--\eqref{eq:expection} and \eqref{eq:Maxstable_stdf}, we have
\begin{align}
\label{eq:stdf_G_M_theta}
\nonumber
\ell(\bm{y})
&=-\log G(1/y_1,\ldots,1/y_d)=\lim_{n\to\infty} n \pr\left(\bigcup_{v=1}^{d} \{Y_{v}>n y_{v}\}\right) \\
&= \sum_{i=1}^{d} \E\left\{\max_{j=i,\ldots, d} \left(y_{v_{j}} \Theta_{v_{i},v_{j}}\right)-\max_{j=i+1,\ldots, d} \left(y_{v_{j}} \Theta_{v_{i},v_{j}}\right)\right\},
\end{align}
where the maximum over the empty set is defined as zero. The identity $\Theta_{v_{i},v_{j}}=\prod_{e\in \pth{v_{i}}{v_{j}} } M_{e}$ yields the desired expression of $\ell$, which is totally determined by the distributions of the increments $M_{e}$ with $e\in E$. 

Assume $\E (M_{ab})=1$ for all $(a,b)\in E$. Then $\E (\Theta_{u,v})=\prod_{(a,b)\in \pth{u}{v}}\E (M_{ab})=1$ for $u,v\in V$ by the independence of $M_{e}$ for $e\in E$. Thus it follows from Corollary~3 in \citet{S20} that the root-change formula 
\begin{equation*}
    \E\left\{g(\bm{\Theta}_{u})\right\}=\frac{\E\left\{g\left(\bm{\Theta}_{v}/\Theta_{v,u}\right) \Theta_{v,u} \right\}}{\E(\Theta_{v,u})}
\end{equation*}
holds for any Borel measurable function $g:[0,\infty)^{d}\setminus\{\bm{0}\}\rightarrow [0,\infty]$, where $\bm{0}$ denotes the $d$-dimensional vector whose components are zero. For fixed $\bm{y}=(y_{v},v\in V)$, taking $g_{i}(\bm{x})=\max_{j=i,\ldots,d}\left(y_{v_{j}}x_{v_{j}}\right)$ for $i=2,\ldots,d$ and using the root-change formula we have
\begin{align*}
\E\left\{\max_{j=i,\ldots, d} \left( y_{v_{j}}\Theta_{v_{i},v_{j}}\right)\right\}
&= \E\left[\max_{j=i,\ldots, d} \left\{ y_{v_{j}}\Theta_{v_{i},v_{j}} \I (\Theta_{v_{i},v_{i-1}}>0) \right\}\right]
\\
&= \E\left[\Theta_{v_{i},v_{i-1}} \max_{j=i,\ldots,d}\left\{ \left(y_{v_{j}} \Theta_{v_{i},v_{j}}\right)/ \Theta_{v_{i},v_{i-1}} \right\}\right]
\\
&= \E\left\{\max_{j=i,\ldots,d} \left( y_{v_{j}}\Theta_{v_{i-1},v_{j}}\right)\right\} \E\left( \Theta_{v_{i},v_{i-1}}\right)\\
&= \E\left\{\max_{j=i,\ldots,d}\left( y_{v_{j}}\Theta_{v_{i-1},v_{j}}\right)\right\},
\end{align*}
where the first step holds since $\E(\Theta_{v_{i-1},v_{i}})=1$ implies $\pr(\Theta_{v_{i},v_{i-1}}>0)=1$ by Corollary~3 in \citet{S20}. Consequently, in view of \eqref{eq:stdf_G_M_theta} and the above equality we obtain
\begin{align*}
\ell(\bm{y})
&=\sum_{i=1}^{d} \E\left\{\max_{j=i,\ldots, d} \left( y_{v_{j}}\Theta_{v_{i},v_{j}}\right)\right\}-\sum_{i=2}^{d} \E\left\{\max_{j=i,\ldots, d} \left( y_{v_{j}}\Theta_{v_{i-1},v_{j}}\right)\right\}\\
&= \E\left\{\max_{j=1,\ldots, d} \left( y_{v_{j}}\Theta_{v_{1},v_{j}}\right)\right\}.
\end{align*}
The final identity holds because $v_1,\ldots,v_d$ was an arbitrary permutation of $1,\ldots,d$.

For the converse, since $H_{v}$, $v\in V$, are univariate extreme value distributions, any max-stable distribution $H$ can be transferred to a simple max-stable distribution by monotone marginal transformations. For brevity, we assume $H_{v}(x)=1-\exp(-1/x)$ for $x>0$ and all $v\in V$ here. In the following, we show that we can construct a regularly varying Markov tree $\bm{Y}$ on the tree $\tree=(V,E )$ in~\eqref{eq:stdf_G} such that $\bm{Y}$ is in the domain of attraction of $H$ and satisfies Condition~\ref{condition} and equation~\eqref{eq:RVMar}. 

Given nonnegative $M_{ab}$ with $\E (M_{ab})\le 1$, define for every $(a,b)\in E$ the function
\begin{equation}
\label{eq:CstSTDF}
\ell_{ab}(x,y)=x+y-\E\left[\min(x,y M_{ab})\right], \quad x,y\ge 0.
\end{equation}
Since $M_{ab}$ and $M_{ba}$ satisfy the relation in~\eqref{eq:changeDir}, we have $\ell_{ab}(x,y)=\ell_{ba}(y,x)$. Moreover, one can easily check that for each $(a,b)\in E$ the properties (L1)--(L4) in \citet[page~257]{B04} hold for $\ell_{ab}(x,y)$ and thus it is a well-defined stable tail dependence function. As a consequence, the function $H_{ab}(x,y):=\exp[-\ell_{ab}(1/x,1/y)]$ a bivariate max-stable distribution with unit-Fr\'{e}chet marginal distributions.

Let $\bm{Y}=(Y_v,v\in V)$ be a Markov tree on $\tree=(V,E)$ satisfying the global Markov property with respect to $\tree$. Assume the distribution function of the adjacent pair $(Y_a,Y_b)$ for every $(a,b)\in E$ is given by $H_{ab}(x,y)$. Then by Example~3 in~\citet{S20} and the discussion before this proposition, $\bm{Y}$ is a regularly varying Markov tree satisfying Condition~\ref{condition} and Equation~\eqref{eq:RVMar} and belongs to the domain of attraction of a multivariate max-stable distribution, which is $H$ with stable tail dependence function given by~\eqref{eq:stdf_G} from the proof of the first part statement of this proposition.
\end{proof}

\begin{proof}[Proof of Proposition~\ref{pro:bi_TDC}]
For $y_{a},y_{b}\ge 0$, setting $d=2$ in \eqref{eq:stdf_G} and using the relation $\max(x,y)=x+y-\min(x,y)$ yields
\begin{align*}
\ell_{ab}(y_{a},y_{b})&=\E\left\{\max \left(y_{a}, y_{b} \prod_{e\in \pth{a}{b} } M_{e}\right)- y_{b} \prod_{e\in \pth{a}{b} } M_{e}\right\}+ y_{b}\\
&= y_{a}+y_{b}-\E\left\{ \min\left(y_{a},  y_{b} \prod_{e\in \pth{a}{b} } M_{e}\right)\right\}.
\end{align*}
The desired result for $\lambda_{ab}$ follows since $\lambda_{ab}=2-\ell_{ab}(1,1)$. The proof is complete. 
\end{proof}

\begin{proof}[Proof of Proposition~\ref{pro:TDC_ieq}]
Let $\bm{Y}=(Y_{v},v\in V)$ be the Markov tree on $\tree$ associated to $H$ in Definition~\ref{def:maxstabletree}. For each pair $(a,b)\in E$, the tail dependence coefficient of $(Y_{a},Y_{b})$ equals the one of $(X_{a},X_{b})$. Thus it is sufficient to derive the result in terms of $\bm{Y}$. 

For $\delta>0$, note that
\begin{equation}
\label{eq:ConUpperBound}
    n\pr(Y_{a}>n\delta,Y_{b}>n)\le n\pr(Y_{b}>n)\to 1
\end{equation}
as $n\to\infty$ by \eqref{eq:RVMar}. From the equality 
\[
    \int_{0}^{x}\pr(X>u)\d u=\E\{\min(X,x)\}
\]
for a non-negative random variable $X$ and for $x>0$, we find by \eqref{eq:RVMar}, \eqref{eq:W_i} and the fact that $1/W$ is uniformly distributed on $[0,1]$ that
\begin{align*}
n\pr(Y_{a}>n\delta,Y_{b}>n)
&=n\pr(Y_{a}>n\delta) \cdot \pr(Y_{b}>n\mid Y_{a}>n\delta)\\
&\to \delta^{-1} \pr(W M_{ab}> 1/\delta) = \delta^{-1} \pr(\delta M_{ab} > 1/W ) \\
&= \delta^{-1} \E\{\min(\delta M_{ab}, 1)\} = \E\{ \min(M_{ab},1/\delta)\}
\end{align*}
as $n\to\infty$. Combining the above display with \eqref{eq:ConUpperBound} and letting $\delta \to 0$ we obtain 
\begin{equation}
\label{eq:EMab}
    \E(M_{ab})\le 1, \qquad (a,b)\in E. 
\end{equation}

The concavity of the function $y \mapsto \min(1, xy)$ implies that
\begin{equation}
\label{eq:min1xyconcave}
    \min(1, xy) \le \min(1, x) - \left(1-y\right) x \I(x < 1), 
    \qquad (x, y) \in [0, \infty)^2,
\end{equation}
as can be confirmed by a case-by-case analysis.
Let $u$ be a node on the path from $a$ to $b$ and write $A = \prod_{e\in\pth{a}{u}}M_{e}$ and $B = \prod_{e\in\pth{u}{b}}M_{e}$. We have $\E[B] \le 1$ by \eqref{eq:EMab} and the independence of the increments $M_{e}$ for $e \in E$. Using Proposition~\ref{pro:bi_TDC}, we have, since $A$ and $B$ are independent and non-negative,
\begin{align}
\label{eq:AB}
    \lambda_{ab}
    = \E\{ \min(1, AB) \}
    &\le \E\{ \min(1, A) \} - \left\{1 - \E(B)\right\} \E\{ A \I(A < 1) \} \\
\nonumber
    &\le \E\{ \min(1, A) \}
    = \lambda_{au}.
\end{align}
Interchanging the roles of $A$ and $B$ yields $\lambda_{ab} \le \lambda_{ub}$.
The inequality \eqref{eq:min1xyconcave} is strict whenever $x > 1 > xy$ or $x < 1 < xy$. If $\pr(1-\eps < M_e < 1) > 0$ and $\pr(1 < M_{e} < 1+\eps) > 0$ for every $e \in E$ and every $\eps > 0$, then also $\pr(A > 1 > AB) > 0$ and $\pr(A < 1 < AB) > 0$. But then $\min(1, AB) < \min(1, A) - \left(1-B\right) A \I(A < 1)$ with positive probability, yielding a strict inequality in \eqref{eq:AB}.

As for the lower bound, the inequality $\min(1,s)\min(1,t)\le\min(1,st)$ for non-negative reals $s$ and $t$ and the independence of $M_{e}$, $e\in E$, yields, with $A$ and $B$ as above,
\[
    \lambda_{ab} 
    = \E\{\min(1,AB)\} 
    \ge \E\{\min(1,A)\min(1,B) \}
    = \E\{\min(1,A)\} \E\{\min(1,B)\} 
    = \lambda_{au} \lambda_{ub},
\]
as required. The proof is complete.
\end{proof}

\begin{proof}[Derivation of equation~\eqref{eq:stdf_G_M_ML}]
For $x_{v}\in \mathbb{R}$ such that $v\in V$, we have, by the maximum-minimums identity
\[
\max_{v\in V}x_{v}=\sum_{\emptyset \ne U\subseteq V} (-1)^{|U|+1} \min_{v\in U}x_{v}. 
\]
Recall that $V_{i}=\{i,\ldots,d\}$ and $e=(e_{p},e_{s})$ for $e\in E$. Without loss of generality, let $v_i=i$ for the permutation $v_{1},\ldots,v_{d}$ of $1,\ldots,d$ in Corollary~\ref{pro:GM_Markov} and \eqref{eq:stdf_G}. Then, in view of the distribution of $M_{e}$ for $e\in E$ given in equation~\eqref{eq:increase_ML} and by applying the above display, we have from Corollary~\ref{pro:GM_Markov} that
\begin {align*}
\ell^{M}(\bm{y})=&\sum_{i=1}^{d} \E\left\{\max_{j=i,\ldots,d} \left(y_{j} \prod_{e\in \pth{i}{j}} M_{e}\right)\right\}-\sum_{i=2}^{d} \E\left\{\max_{j=i,\ldots,d} \left(y_{j} \prod_{e\in \pth{(i-1)}{j}} M_{e}\right)\right\}\\
=& \sum_{i=1}^{d} \sum_{\emptyset \ne U\subseteq V_{i}} (-1)^{|U|+1} \E\left\{\min_{j\in U}\left(y_{j} \prod_{e\in \pth{i}{j}} M_{e}\right)\right\}\\
&- \sum_{i=2}^{d} \sum_{\emptyset \ne U\subseteq V_{i}} (-1)^{|U|+1} \E\left\{\min_{j\in U}\left(y_{j} \prod_{e\in \pth{(i-1)}{j}} M_{e}\right)\right\}\\
 =& \sum_{i=1}^{d} \psi_{i}^{-1} \sum_{\emptyset \ne U\subseteq V_{i}} (-1)^{|U|+1} \left(\prod_{e\in \bigcup_{j\in U}\pth{i}{j}} \psi_{e_{p}}\right)  \left\{\min_{j\in U}\left(y_{j} \psi_{j}\right)\right\}\\
&- \sum_{i=2}^{d} \psi_{i-1}^{-1} \sum_{\emptyset \ne U\subseteq V_{i}} (-1)^{|U|+1} \left(\prod_{e\in \bigcup_{j\in U}\pth{(i-1)}{j}} \psi_{e_{p}}\right) \left\{\min_{j\in U}\left(y_{j} \psi_{j}\right)\right\}.
\qedhere
\end {align*}
\end{proof}

\begin{proof}[Proof of Proposition~\ref{pro:uniqueness}]
Let $\tree' = (V, E')$ be any tree with the same node set $V$ as the true tree $\tree = (V, E)$. As in the proof of Proposition~5 in \citet{EV20}, an application of Hall's marriage theorem \citep{H35} yields a bijection $f: E \to E'$ with the property that every edge $(u, v) \in E$ is mapped to an edge $f(u,v) = (u', v') \in E'$ such that $(u, v)$ belongs to the path $\pth{u'}{v'}$ in $\tree$. By Proposition~\ref{pro:TDC_ieq}, $\lambda_{uv} \ge \lambda_{f(u,v)}$ and thus
\[
    S_{\tree_{\lambda}} 
    = \sum_{(u,v) \in E} \lambda_{uv}
    \ge \sum_{(u,v) \in E} \lambda_{f(u,v)}
    = \sum_{(u',v') \in E'} \lambda_{u'v'}
    = S_{\tree_{\lambda}'}.
\]
It follows that $\tree$ belongs to the collection $\bm{T}_{\lambda}^{\star}$ of maximum tail dependence trees.

Suppose in addition that $\lambda_{ab} < \min(\lambda_{au}, \lambda_{ub})$ for any triple of distinct nodes $a,u,b$ such that $u$ lies on the path between $a$ and $b$. If $\tree' = (V, E')$ is different from $\tree$, then there is at least one edge $(u, v) \in E$ such that $f(u,v) = (u',v')$ is different from $(u,v)$. By the additional condition, we then have $\lambda_{uv} > \lambda_{u'v'}$. The inequality in the above display is thus strict, implying that $\tree$ is the unique maximizer of the sum of edge weights. Moreover, $\tree=(V,E)$ must be the unique tree associated with $G$. Otherwise, assume $\tree'=(V,E')$ is another tree consistent with $G$. Then there is at least one edge $(a,b)\in E$ such that $a$ and $b$ are nonadjacent on $\tree'$, that is, there exists at least one node $c$ on the path $\pth{a}{b}$ of $\tree'$. Thus $\lambda_{ab}<\min(\lambda_{ac}, \lambda_{bc})$ since $\pth{a}{c}\subset \pth{a}{b}$ on $\tree'$. On the other hand, $c$ must be nonadjacent with $a$ or $b$ on $\tree$ as there is only one path from $u$ to $v$ for each pair of nodes $u,v\in V$; hence we have $\lambda_{ac}<\lambda_{ab}$ or $\lambda_{bc}<\lambda_{ab}$ since $(a,b)\in \pth{a}{c}\cup \pth{b}{c}$. This leads to a contradiction in view of the conclusion $\lambda_{ab}<\min(\lambda_{ac}, \lambda_{bc})$ that we obtained previously.
\end{proof}

\begin{proof}[Proof of Proposition~\ref{pro:consitency_lambda}] 
Let $\tree^{\star}=(V,E^{\star})$ be an arbitrary tree in $\bm{T}_{\lambda}^{\star}$. We show that for any tree $\tree'=(V,E')$ not in $\bm{T}_{\lambda}^{\star}$, we have
\begin{align}
\label{eq:MDTLambdaCon}
    \pr\left(\hat{S}_{\tree_{\lambda}^{\prime}} < \hat{S}_{\tree_{\lambda}^{\star}}\right)\to 1,
    \quad n \to \infty.
\end{align}
By definition of $\hat{\bm{T}}_{\lambda}^{\star}$, this implies
\[ 
    \pr \left( \tree^{\prime} \in \hat{\bm{T}}_{\lambda}^{\star} \right) \to 0, 
    \quad n \to \infty. 
\]
As this is true for any $\tree'$ not in $\bm{T}_{\lambda}^{\star}$, we can then conclude that $\pr \left( \hat{\bm{T}}_{\lambda}^{\star} \subseteq \bm{T}_{\lambda}^{\star} \right) \to 1$ as $n \to \infty$.

To show \eqref{eq:MDTLambdaCon}, note that, by definition of $\bm{T}_{\lambda}^{\star}$, we have
\[
    S_{\tree_{\lambda}^{\prime}} < S_{\tree_{\lambda}^{\star}}.
\]
The consistency of the empirical tail dependence coefficients $\hat{\lambda}_e$ for $e \in E$ \citep{D98,SS06} implies
\[
\hat{S}_{\tree_{\lambda}^{\prime}}-\hat{S}_{\tree_{\lambda}^{\star}}
=\sum_{e\in E^{\prime}} \hat{\lambda}_{e}- \sum_{e\in E^{\star}} \hat{\lambda}_{e}
=\sum_{e\in E^{\prime}} \lambda_{e}- \sum_{e\in E^{\star}} \lambda_{e} +o_{p}(1)
= S_{\tree_{\lambda}^{\prime}}-S_{\tree_{\lambda}^{\star}}+o_{p}(1),
\quad n \to \infty,
\]
yielding the convergence in \eqref{eq:MDTLambdaCon}, as required.

If $\bm{T}_{\lambda}^{\star}$ is a singleton, say $\{\tree_{\lambda}^{\star}\}$, then the relation $\hat {\bm{T}}_{\lambda}^{\star} \subseteq \bm{T}_{\lambda}^{\star}$ implies that $\hat{\bm{T}}_{\lambda}^{\star}$ (which is non-empty since it is the set of maximizers over a finite domain) is a singleton too and that its only element, say $\hat{\tree}_{\lambda}^{\star}$, is equal to $\tree_{\lambda}^{\star}$.
\end{proof}

\begin{proof}[Proof of Theorem~\ref{thm:estdf:AN}]
By Theorem~5.1 in \citet{BSV14}, we have weak convergence of $\alpha_n = \sqrt{k} (\hat{\ell}_{n,k} - \ell)$ in the so-called hypi-topology on the set of bounded functions on $[0, 1]^d$ to a process which, thanks to the condition on the $\mu_r$-almost everywhere continuity of $\dot{\ell}_j$, is $\mu_r$-almost everywhere equal to $\alpha$ in \eqref{eq:limproc}. By Corollary~3.3 in the same article and again by the assumption on $\dot{\ell}_j$, we obtain the weak convergence of $\alpha_n$ to $\alpha$ in $L^2(\mu_r)$; indeed, with probability one, the process $\alpha$ is continuous $\mu_{r}$-almost everywhere, thanks to the assumption on $\dot{\ell}_j$ and the fact that $W(\bm{0}) = 0$. The linear functional $L^2(\mu_r) \to \mathbb{R} : f \mapsto \int f \psi_r \, \d \mu_r$ for $r \in \{1,\ldots,q\}$ being continuous, an application of the continuous mapping theorem yields the result. The limit in~\eqref{eq:estdf:AN} is a linear transformation of a Gaussian process and therefore Gaussian as well. The formula for the covariance matrix follows by Fubini's theorem.

An alternative proof is to proceed by a Skorohod construction as in the proof of Proposition~7.3 in \citet{EKS12}. The term involving the partial derivatives $\dot{\ell}_j$ is handled via the dominated convergence theorem: the pointwise convergence for $\mu_r$-almost every $\bm{x} \in [0, 1]^d$ holds true thanks to the condition on the partial derivatives $\dot{\ell}_j$.
\end{proof}

\begin{proof}[Proof of Corollary~\ref{cor:AN}]
For $e = (a, b) \in E$, the map
\[
    T_{e} : [0, 1]^2 \to [0, 1]^d : 
    (x_a, x_b) \mapsto x_{a} \bm{e}_a + x_{b} \bm{e}_{b}
\]
pushes the measure $\nu_{e}$ on $[0, 1]^2$ forward to the measure $\mu_{e} = T_{e}\#\nu_{e}$ on $[0, 1]^d$ given by
\[
    \mu_{e}(A) = \nu_{e}\left(\left\{
        (x_a, x_b) \in [0, 1]^2 : x_{a} \bm{e}_a + x_{b} \bm{e}_{b} \in A 
    \right\}\right)
\]
for Borel sets $A \subseteq [0, 1]^d$.
The support of $\mu_{e}$ is contained in $\{\bm{x} \in [0, 1]^d : x_j = 0 ~\text{if}~ j \ne a ~\text{and}~ j \ne b\}$. By the change-of-variables formula, integrals with respect to $\mu_{e}$ and $\nu_{e}$ are related through $\int_{[0, 1]^d} f(\bm{x}) \, \mu_{e}(\d \bm{x}) = \int_{[0, 1]^2} f(T_{e}(x_{a}, x_{b})) \, \nu_{e}(\d(x_{a}, x_{b}))$. 

Put $\alpha_{n} = \sqrt{k} (\hat{\ell}_{n,k} - \ell)$. By Slutsky's lemma, the $o_p(\bm{1})$ term in Assumption~\ref{ass:expansion} does not matter for the convergence in distribution. Since $\hat{\ell}_{n,k,e}(x_a, x_b) = \hat{\ell}_{n,k}(T_{e}(x_{a}, x_{b}))$ and similarly for $\ell_{e}$ and $\ell$, the main term on the right-hand side of \eqref{eq:expansion} can be written as
\[
    \int_{[0, 1]^d} 
        \alpha_n(\bm{x}) \,
        \bm{\psi}_{e}(x_{a}, x_{b}) \,
    \mu_{e}(\d \bm{x}).
\]
This is a random vector of dimension $p_{e}$, each component of which has the same form as the entries of the random vector on the left-hand side of \eqref{eq:estdf:AN}. Indeed, the measure $\mu_{r}$ in \eqref{eq:estdf:AN} is equal to the measure $\mu_{e}$ here, while the function $\psi_{r}$ in \eqref{eq:estdf:AN} is the equal to the function $\bm{x} \mapsto \psi_{e,j}(x_{a}, x_{b})$ here, where $\psi_{e,j}$ is one of the $p_{e}$ component functions of $\bm{\psi}_{e}$. The conclusion now follows from Theorem~\ref{thm:estdf:AN}; Assumption~\ref{ass:expansion}(ii) ensures that the smoothness condition in Theorem~\ref{thm:estdf:AN} is fulfilled.
\end{proof}

\subsection{Relation to the extremal tree models}
\label{sec:TSMS2ETM}

The definition of the tree-structured max-stable distribution in Definition~\ref{def:maxstabletree} actually only emphasizes the limitations on the dependence structures, while the marginal distributions can be any of the three types of univariate extreme value distributions. We illustrate the relation between the tree-structured multivariate max-stable distribution in Definition~\ref{def:maxstabletree} and the extremal tree model in~\citet{EV20} in terms of a simple max-stable distribution. The general case can be derived in the same way by adjusting the corresponding normalizing constants, see~\citet{RT2006}. Related discussions can be found in Section A.1 of~\citet{A21} and S.5 in the Supplementary Material to~\citet{EV20}.

Under the setting of Section~\ref{subsec:MEVT}, if $\bm{\xi}$ is a random vector belonging to the domain of attraction of a simple multivariate max-stable distribution $H$, then we have 
\[
\lim_{t\to\infty} \left[\pr(\xi_{v}\le t x_v, \, v\in V)\right]^t=H(\bm{x}),\quad \bm{x}\in (0,\infty)^d.
\]
It is equivalent to
\begin{equation}
\label{eq:Mpareto}
\lim_{t\to\infty} \pr\left(\frac{\xi_v}{t}\le x_v, v\in V \, \Bigg| \, \max_{v\in V} \xi_{v}>t\right)=\tilde{H}(\bm{x}), \quad \bm{x} \in [0,\infty)^d\setminus{\bm{0}},
\end{equation}
where
\[
\tilde{H}(\bm{x})=\frac{\log H(\min(1,x_v), v\in V)-\log H(\bm{x})}{\log H(1,\ldots, 1)}
\]
with supporting set $\mathcal{L}=\{ \bm{x} \in [0,\infty)^d\setminus \{\bm{0}\}: \max_{v\in V}x_v >1\}$. The limiting distribution $\tilde{H}$ is called the multivariate Pareto distribution and $\bm{\xi}$ satisfying \eqref{eq:Mpareto} is said to be in the domain of attraction of $\tilde{H}(\bm{x})$. The distributions $H(\bm{x})$ and $\tilde{H}(\bm{x})$ are called associated since they can be recovered from each other.

Let $\tilde{\bm{Y}}=(\tilde{Y}_v, v\in V)$ be a random vector with distribution $\tilde{H}$. Then \eqref{eq:Mpareto} means that
\[
\left(\xi_v/t, v\in V\right) \mid \max_{v\in V} \xi_v >t \xrightarrow{d} \tilde{\bm{Y}},\quad \mbox{as}~t\to \infty.
\]
For $m\in V$, let $\tilde{\bm{Y}}^{m}=\tilde{\bm{Y}}\mid {\tilde{Y}_m>1}$ be the random vector $\tilde{\bm{Y}}$ conditioned on the event $\tilde{Y}_m>1$ with support on the space $\mathcal{L}^{m}=\left\{\bm{x}\in \mathcal{L}:x_m>1\right\}$. The tree graphical models in~\citet{EV20} \citep[see also][]{E20} are defined via the random vectors $\tilde{\bm{Y}}^{m}$ based on the following notion of conditional independence: For disjoint subsets $A,B,C\subset V$, they say that $\tilde{\bm{Y}}_{A}$ is conditionally independent of $\tilde{\bm{Y}}_C$ given $\tilde{\bm{Y}}_B$ if 
\[
\forall m\in \{1,\ldots,d\}, \quad \tilde{\bm{Y}}_A^m \idp \tilde{\bm{Y}}_B^m \mid \tilde{\bm{Y}}_C^m.
\]
An extremal tree model $\tilde{\bm{Y}}$ is a multivariate Pareto distributed random vector satisfying the global Markov property with respect to a tree $\tree=(V,E)$ in the sense of conditional independence defined above. 

In view of Proposition~1 in~\citet{EV20} and Proposition~\ref{pro:G_M}, the extremal tree model $\tilde{\bm{Y}}$ is exactly the multivariate Pareto distribution associated with the tree-structured multivariate max-stable distribution with respect to the same undirected tree $\tree$, stated by the coming lemma.

\begin{lemma}
$H$ is a tree-structured max-stable distribution linked to an undirected tree $\tree=(V,E)$ if and only if the associated multivariate Pareto distribution is an extremal tree model with respect to $\tree$.

Moreover, the stable tail dependence function of $H$ is given by~\eqref{eq:stdf_G}. For arbitrary $m\in V$,
$\tilde{\bm{Y}}^{m}$ is $\bm{W}_{m}$ in~\eqref{eq:W_i} having representation
\begin{equation}
\label{eq:rp}
\tilde{Y}^{m}_v\overset{d}=\begin{cases}
W, & \text{if $v=m$,} \\
W \prod_{e\in \pth{m}{v}} M_e, & \text{if $v\in V\setminus \{m\}$},
\end{cases}
\end{equation}
where $W$ is a unit-Pareto distributed random variable.
\end{lemma}

\begin{proof}
Assume $H$ is a simple tree-structured multivariate max-stable distribution. Then there exists a regularly varying Markov tree $\bm{Y}$ on a tree $\tree=(V,E)$ according to Definition~\ref{def:maxstabletree} such that $\bm{Y}$ satisfies Condition~\ref{condition}, \eqref{eq:RVMar} and $\bm{Y}\in D(H)$. Thus by~\eqref{eq:W_i} we know that $\tilde{\bm{Y}}^{m}=\bm{W}_{m}$ and it has the representation in~\eqref{eq:rp}. By Corollary 6 in~\citet{S20}, $M_{ab}$ and $M_{ba}$ satisfy Equation~(7) of~\citet{EV20}.
Since $m$ is arbitrary, it then follows from Proposition~1 in~\citet{EV20} that $\tilde{\bm{Y}}$ is an extremal tree model with respect to the tree $\tree$.

Conversely, if $\tilde{\bm{Y}}$ is an extremal tree model on $\tree$ with distribution function $\tilde{H}$, for any arbitrary but fixed node $m$, there exists a set of independent random variables $\{M_{e}, e\in E^{m}\}$ called the extremal function of $\tilde{\bm{Y}}$ such that $\tilde{\bm{Y}}^m$ is equal in distribution to the right-hand side of~\eqref{eq:rp}, where $E^m$ is the set of all edges $e\in E$ of the tree $\tree$ pointing away from node $m$, see~\citet{EV20}. We construct a Markov tree $\bm{Y}$ rooted at $m$ the same as the one in the proof of Proposition~\ref{pro:G_M} (or Definition~\ref{def:Y}). It follows from Example~3 and Corollary 5 in~\citet{S20} that all equivalent statements of Theorem 2 in that paper hold for $\bm{Y}$ with $I=\{m\}$. Since $m$ is arbitrary, we can derive from Proposition 1 in S.1 of~\citet{EV20} that $\bm{Y}$ is in the domain of attraction of $\tilde{H}$. Since we have shown in Proposition~\ref{pro:G_M} that $\bm{Y}$ is also in the domain of attraction of $H$ with stable tail dependence function given by~\eqref{eq:stdf_G}, it follows that $H$ is the associated multivariate max-stable distribution with $\tilde{H}$. The proof is complete. 
\end{proof}

\subsection{Examples of parameter estimators}
\label{subsec:three_estimators}

We list three examples of estimators of extremal dependence parameters satisfying Assumption~\ref{ass:expansion}.

\begin{example}[M-estimator]
	\label{eg:M}
	The idea behind the M-estimator proposed in \citet{EKS12} is that of matching weighted integrals of the empirical stable tail dependence function with their model-based values. For integer $q_{e} \ge p_{e}$, let $\bm{g}_{e} =\left(g_{e,1}, \ldots, g_{e,q_{e}}\right)^\top: [0,1]^{2} \rightarrow \mathbb{R}^{q_{e}}$ be a column vector of Lebesgue square-integrable functions such that the map $\bm{\varphi}_{e}: \bm{B}_{e} \rightarrow \mathbb{R}^{q_{e}}$ defined by
	\[
	\bm{\varphi}_{e}(\bm{\beta}) = 
	\int_{[0,1]^{2}} 
	\ell_{e}(x_{a},x_{b};\bm{\beta}) \, 
	\bm{g}_{e}(x_{a},x_{b}) 
	\d x_{a} \d x_{b}
	\]
	is a homeomorphism between $\bm{B}_{e}^{o}$ and $\bm{\varphi}_{e}\left(\bm{B}_{e}^{o}\right)$.
	For $e=(a,b)\in E$, the M-estimator $\hat{\bm{\beta}}_{n,k,e}^{\mathrm{M}}$ of $\bm{\beta}_{e}^{0}$ is a minimizer of the criterion function
	\begin{equation*}
		Q_{n,k,e}(\bm{\beta}) = \sum_{m=1}^{q_{e}} \left[
		\int_{[0,1]^{2}}
		\left\{
		\hat{\ell}_{n,k,e}(x_{a},x_{b}) - \ell_{e}(x_{a},x_{b};\bm{\beta})
		\right\} \,
		g_{e,m}(x_{a},x_{b})
		\d x_{a} \d x_{b} 
		\right]^{2}.
	\end{equation*}
	Suppose that $\bm{\varphi}_{\bm{e}}$ is twice continuously differentiable and that the  $q_{e} \times p_{e}$ Jacobian matrix $\nabla \bm{\varphi}_{e}(\bm{\beta}_{e}^{0})$ has rank $p_{e}$.
	Following the proof of Theorem~4.2 in \citet{EKS12}, it can be shown that expansion~\eqref{eq:expansion} holds with $\nu_{e}$ the two-dimensional Lebesgue measure and with
	\begin{equation}
		\label{eq:psiab:M}
		\bm{\psi}_{e}(x_{a}, x_{b})
		= \left\{\nabla{\bm{\varphi}}_{e}(\bm{\beta}_{e}^{0})^\top \nabla{\bm{\varphi}}_{e}(\bm{\beta}_{e}^{0})\right\}^{-1} \nabla{\bm{\varphi}}_{e}(\bm{\beta}_{e}^{0})^\top
		\bm{g}_{e}(x_{a}, x_{b}).
	\end{equation}
	By convexity, the function $\ell_{e}$ is differentiable Lebesgue almost everywhere. Corollary~\ref{cor:AN} thus applies with $\nu_{e}$ and $\bm{\psi}_{e}$ as stated.
\end{example}

\begin{example}[Method of moments]
	\label{eg:MM}
	The moments estimator of \cite{E08} arises as the special case $q_{e} = p_{e}$ of the M-estimator of \cite{EKS12} discussed in Example~\ref{eg:M} above. Rather than as a minimizer of a criterion function, the estimator $\hat{\bm{\beta}}_{n,k,e}^{\mathrm{MM}}$ is defined as the solution of the equation
	\[
	\int_{[0,1]^{2}} 
	\hat{\ell}_{n,k,e}(x_{a},x_{b}) \,
	\bm{g}_{e}(x_{a}, x_{b}) 
	\d x_{a} \d x_{b}
	=
	\bm{\varphi}_{e}\left(\hat{\bm{\beta}}_{n,k,e}^{\mathrm{MM}}\right),
	\]
	The Jacobian matrix $\nabla \bm{\varphi}_{e}(\bm{\beta}_{e}^{0})$ being an invertible square $p_{e} \times p_{e}$ matrix, the function $\bm{\psi}_{e}$ in \eqref{eq:psiab:M} simplifies to
	\[
	\bm{\psi}_{e}(x_{a}, x_{b}) = 
	\left\{ \nabla \bm{\varphi}_{e}(\bm{\beta}_{e}^{0}) \right\}^{-1} 
	\bm{g}_{e}(x_{a}, x_{b}).
	\]
\end{example}

\begin{example}[Weighted least squares estimators]
	The weighted least squares estimator studied in \citet{E18} avoids the integration of stable tail dependence functions, facilitating computations. Let $q_{e} \ge p_{e}$ and let $\bm{c}_{e}^{1}, \ldots, \bm{c}_{e}^{q_{e}}$ be $q_{e}$ points in $(0, 1]^2$.
	For $\bm{\beta}_{e} \in \bm{B}_{e}$, set
	\begin{align*}
		\hat{\bm{L}}_{n,k,e}
		&=
		\left(\hat{\ell}_{n,k,e}\left(\bm{c}_{e}^{m}\right), m=1,\ldots,q_{e}\right), &
		\bm{L}_{e}(\bm{\beta}_{e})
		&=
		\left(\ell_{e}\left(\bm{c}_{e}^{m}; \bm{\beta}_{e}\right), m=1,\ldots, q_{e}\right).
	\end{align*}
	Assume $\bm{L}_{e}$ is a homeomorphism from $\bm{B}_{e}$ to $\bm{L}_{e}(\bm{B}_{e})$.
	Let $\bm{\Omega}_{e}(\bm{\beta}_{e})$ be a $q_{e} \times q_{e}$ dimensional symmetric, positive definite matrix, twice continuously differentiable as a function of $\bm{\beta}_{e}$, and with a smallest eigenvalue that remains bounded away from zero uniformly in $\bm{\beta}_{e}$.
	The weighted least squares estimator $\hat{\bm{\beta}}_{n, k,e}^{\mathrm{WLS}}$ of $\bm{\beta}_{e}^{0}$ is defined as
	\[
	\hat{\bm{\beta}}_{n, k,e}^{\mathrm{WLS}}
	=
	\argmin_{\bm{\beta}_{e} \in \bm{B}_{e}} 
	\left(\hat{\bm{L}}_{n, k,e}-\bm{L}_{e}(\bm{\beta}_{e})\right)^\top
	\bm{\Omega}_{e}(\bm{\beta}_{e}) 
	\left(\hat{\bm{L}}_{n, k,e}-\bm{L}_{e}(\bm{\beta}_{e})\right).
	\]
	In the numerical examples in Sections~\ref{sec:Sim} and~\ref{sec:app}, $\bm{\Omega}_{e}(\bm{\beta}_{e})$ is set to be the $q_{e} \times q_{e}$ identity matrix. 
	
	Assume that $\bm{L}_{e}$ is twice continuously differentiable on a neighbourhood of $\bm{\beta}_{e}^{0}$ and that the $q_{e} \times p_{e}$ Jacobian matrix $\nabla \bm{L}_{e}(\bm{\beta}_{e}^{0})$ has rank $p_{e}$. 
	Define the $p_{e} \times q_{e}$-dimensional matrix
	\[
	\bm{A}_{e} = \left\{
	\nabla{\bm{L}}_{e}(\bm{\beta}_{e}^{0})^\top
	\bm{\Omega}_{e}(\bm{\beta}_{e}^{0})
	\nabla{\bm{L}}_{e}(\bm{\beta}_{e}^{0})
	\right\}^{-1} 
	\nabla{\bm{L}}_{e}(\bm{\beta}_{e}^{0})^\top
	\bm{\Omega}_{e}(\bm{\beta}_{e}^{0}).
	\]
	Assume that the first-order partial derivatives of $\ell_{e}$ exist in all $q_{e}$ points $\bm{c}_{e}^{m}$. Then the vector $\sqrt{k} \left( \hat{\bm{L}}_{n, k,e}-\bm{L}_{e}(\bm{\beta}_{e}^{0}) \right)$ is asymptotic normal by Theorem~\ref{thm:estdf:AN} and Remark~\ref{rk:estdf:AN}. By Theorem~2 in \cite{E18}, we have
	\begin{align*}
		\sqrt{k}\left(\hat{\bm{\beta}}_{n,k,e}^{\mathrm{WLS}} - \bm{\beta}_{e}^{0}\right)
		= \bm{A}_{e} \sqrt{k} \left( \hat{\bm{L}}_{n, k,e}-\bm{L}_{e}(\bm{\beta}_{e}^{0}) \right) + o_p(\bm{1}), \qquad n \to \infty.
	\end{align*}
	This expansion is of the form \eqref{eq:expansion} with $\nu_{e}$ the counting measure on the points $\bm{c}_{e}^{1},\ldots,\bm{c}_{e}^{q_{e}}$ and with $\bm{\psi}_{e} = (\psi_{e,1},\ldots,\psi_{e,p_{e}})^\top$ having component functions
	\[
	\psi_{e,j}(x_{a}, x_{b}) = (\bm{A}_{e})_{j,m} 
	\quad \text{if} \quad
	(x_{a}, x_{b}) = \bm{c}_{e}^{m}.
	\]
	Corollary~\ref{cor:AN} thus applies.
\end{example}

\end{document}